\documentclass[runningheads]{llncs}
\usepackage{montserrat}
\usepackage{graphicx}
\graphicspath{{gfx/},{experiments/plots_runtime/},}

\usepackage[american]{babel}
\usepackage[T1]{fontenc}
\usepackage[utf8]{inputenc}

\usepackage{lmodern}
\usepackage[final]{microtype}

\RequirePackage[nolist]{acronym}

\usepackage{wrapfig}

\usepackage{subcaption}

\usepackage{listings}
\usepackage[vlined,linesnumbered]{algorithm2e}

\usepackage{amsmath, amssymb, amsfonts, mathtools}

\usepackage[table,dvipsnames]{xcolor}

\usepackage{booktabs,longtable}

\usepackage[inline]{enumitem}

\usepackage[nospace]{cite}

\usepackage{xfrac,xspace}
\usepackage{relsize}
\usepackage[normalem]{ulem}

\usepackage{keyval}

\usepackage{pifont}

\usepackage{tikz,pgfplots,tikz-3dplot}
\pgfplotsset{compat=1.18}
\usetikzlibrary{arrows.meta,bending,positioning}
\usepackage{pifont,fontawesome}

\usepackage{setspace}

\newcommand{\myparagraph}[1]{\smallskip\noindent \emph{#1}}

\usepackage{comment}

\usepackage{soul}
\usepackage[textwidth=1.2\marginparwidth]{todonotes}

\usepackage{lipsum}

\usepackage[outline]{contour}
\contourlength{1.2pt}

\RequirePackage[%
  bookmarks,
  unicode,
  breaklinks=true,
  ]{hyperref}

\usepackage[nameinlink]{cleveref}
\makeatletter
\AddToHook{cmd/appendix/before}{\def\cref@section@alias{appendix}\def\cref@subsection@alias{appendix}}
\AddToHook{cmd/appendix/before}{\crefalias{section}{appendix}}
\makeatother
\Crefname{equation}{eq.}{eqs.}
\crefname{equation}{equation}{equations}
\Crefname{figure}{Fig.}{Figs.}
\crefname{figure}{figure}{figures}
\Crefname{tabular}{Table}{Tables}
\crefname{tabular}{table}{tables}
\Crefname{definition}{Def.}{Defs.}
\crefname{definition}{definition}{definitions}
\Crefname{proposition}{Prop.}{Props.}
\crefname{proposition}{proposition}{propositions}
\Crefname{theorem}{Thm.}{Thms.}
\crefname{theorem}{theorem}{theorems}
\Crefname{section}{Sec.}{Sections}
\crefname{section}{section}{sections}
\Crefname{subsection}{Sec.}{Sections}
\crefname{subsection}{subsection}{subsections}
\crefname{algorithm}{algorithm}{algorithms}
\crefname{algorithm}{Alg.}{Algorithms}
\crefname{listing}{code}{code\ blocks}
\Crefname{equation}{Eq.}{Equations}
\Crefname{appendix}{App.}{Appendices}
\crefformat{footnote}{#2\footnotemark[#1]#3}

\newcommand{\splitatcommas}[1]{%
	\begingroup
	\begingroup\lccode`~=`, \lowercase{\endgroup
		\edef~{\mathchar\the\mathcode`, \penalty0 \noexpand\hspace{0pt plus 1em}}%
	}\mathcode`,="8000 #1%
	\endgroup
}

\makeatletter
\renewcommand{\paragraph}{\@startsection{paragraph}{5}{0em}%
  {.7ex plus .2ex minus .1ex}%
  {-.5em}%
  {\bfseries}}
\makeatother

\makeatletter
\def\orcidID#1{\smash{\href{http://orcid.org/#1}{\protect\raisebox{-1.25pt}{\protect\includegraphics{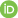}}}}}
\makeatother

\allowdisplaybreaks[4]

\newboolean{tosubmit}
\setboolean{tosubmit}{false}

\newcommand{\N}{\mathbb{N}}

\newcommand{\R}{\mathbb{R}}

\newcommand{\prop}{\ensuremath{\varphi}\xspace}
\newcommand{\est}[1][\prop]{\ensuremath{\hat{#1}}\xspace}

\newcommand{\eminus}[1]{\ensuremath{\scalebox{.85}{\text{\scshape e-}#1}}\xspace}

\newcommand{\xtr}[1]{\xrightarrow{\protect{\raisebox{-1pt}[0pt][0pt]{\ensuremath{\scriptstyle{#1}}}}}}

\DeclareMathAlphabet{\mathsc}{OT1}{cmr}{m}{sc}

\newcommand{\dist}[1]{\mathsc{\MakeLowercase{#1}}}
\mathchardef\mhyph="2D

\let\emptyset\varnothing

\newcommand{\hide}[1]{}
\newcommand{\bfred}[2][red]{{\fontfamily{Montserrat-LF}\fontseries{semibold}\selectfont\color{#1}#2}\xspace}

\colorlet{colcodekey}{PineGreen!60!green!70!black}
\colorlet{colcodeidentifier}{Sepia}
\colorlet{colcodenumeric}{blue!25!black}
\colorlet{colcodesyncaction}{violet!90}
\colorlet{colcodecomment}{Black!60}
\colorlet{coldistribution}{Cerulean!60!Black}
\lstdefinelanguage{Kepler}{
  keywords={},
  keywords=[2]{toplevel,and,or,pand,vot,fdep,spare,be,sbe,rbox,seq,csp,wsp,hsp},
  keywords=[3]{fail,repair,dorm,prio,fcfs,rand,lambda,prob,res,rate,mean},
  keywords=[4]{exponential,erlang,uniform,normal,lognormal,%
               weibull,gamma,rayleigh,fsteq,dirac,%
               exp,uni,dir},
  otherkeywords={},
  morecomment=**[l]{//},
  morecomment=**[s]{/*}{*/},
  moredelim=**[is][\color{colcodenumeric}]{^}{^},
  moredelim=**[is][\color{colcodekey}]{-|}{|-},
  moredelim=**[is][\slshape]{__}{__},
}
\lstdefinestyle{Kepler}{
  language=Kepler,
  xleftmargin=2em,
  basewidth=0.5em,
  basicstyle={\ttfamily},
  identifierstyle=\color{colcodeidentifier},
  keywordstyle=\bfseries,
  keywordstyle=[2]\color{colcodesyncaction},
  keywordstyle=[3]\bfseries\color{colcodekey},
  keywordstyle=[4]\bfseries\color{coldistribution},
  commentstyle=\color{colcodecomment},
  stringstyle=\mdseries,
  showstringspaces=false,
  numbers=left,
  numberstyle=\scriptsize\ttfamily\color{colcodecomment},
  numbersep=0.9em,
  tabsize=4,
  frame=none,
  aboveskip=\bigskipamount,
  belowskip=\medskipamount,
  abovecaptionskip=\smallskipamount,
  belowcaptionskip=\smallskipamount,
  captionpos=b,
  escapeinside={`}{`},
  mathescape=true,
}
\lstdefinelanguage{IOSA}{
  keywords={},
  keywords=[2]{dtmc,ctmc,mdp,module,endmodule,properties,endproperties,%
               init,label,formula},
  keywords=[3]{const,global,bool,int,double,clock,P,S,U},
  keywords=[4]{exponential,erlang,uniform,normal,lognormal,weibull,%
               gamma,rayleigh,fsteq},
  otherkeywords={->,:,[,],..,@,\?,\!},
  morecomment=**[l]{//},
  morecomment=**[s]{/*}{*/},
  morestring=[b]",
  moredelim=**[is][\color{colcodenumeric}]{^}{^},
  moredelim=**[is][\color{colcodesyncaction}]{~}{~},
  moredelim=**[is][\slshape]{__}{__},
}
\lstdefinestyle{IOSA}{
  language=IOSA,
  xleftmargin=\leftmargin,
  basewidth=0.5em,
  basicstyle={\ttfamily},
  identifierstyle=\color{colcodeidentifier},
  keywordstyle=\bfseries,
  keywordstyle=[2]\bfseries,
  keywordstyle=[3]\bfseries\color{colcodekey},
  keywordstyle=[4]\bfseries\color{coldistribution},
  commentstyle=\color{colcodecomment},
  stringstyle=\color{colcodesyncaction},
  showstringspaces=false,
  numbers=left,
  numberstyle=\scriptsize\ttfamily\color{colcodecomment},
  numbersep=0.9em,
  tabsize=4,
  frame=none,
  aboveskip=\bigskipamount,
  belowskip=\medskipamount,
  abovecaptionskip=.5ex,
  belowcaptionskip=\smallskipamount,
  captionpos=b,
  escapeinside={`}{`},
  mathescape=true,
}
\newlength{\MaxSizeOfLineNumbers}%
\makeatletter
\lst@AddToHook{OnEmptyLine}{\vspace{-0.75\baselineskip}}
\lst@Key{countblanklines}{true}[t]%
    {\lstKV@SetIf{#1}\lst@ifcountblanklines}
\lst@AddToHook{OnEmptyLine}{%
    \lst@ifnumberblanklines\else%
       \lst@ifcountblanklines\else%
         \advance\c@lstnumber-\@ne\relax%
       \fi%
    \fi}
\makeatother
\lstdefinestyle{Modest}{
  basicstyle=\lst@ifdisplaystyle\scriptsize\fi\ttfamily,
  columns=fullflexible,keepspaces,
  keywordstyle=\color{blue!67!black},
  emphstyle=[1]{\color{red!50!black}},
  emphstyle=[2]{\color{teal!67!black}},
  emphstyle=[3]{\color{yellow!55!black}},
  morecomment=[l][\color{green!33!black}\normalfont]{//},
  morekeywords={impatient,action,int,palt,alt,rate,do,when,observable,process,bool,property,timer,invariant,const,real,Pmax,Smax,option,false,stop},
  morestring=[b]",
  stringstyle=\color{red!50!black},
  emph=[1]{arrival,queue1_service,queue2_service,overflow},
  emph=[2]{q1,q2,t_arrivals,t_server_1,t_server_2,failed},
  emph=[3]{Exponential,Uniform},
  escapechar=§,
  numberstyle=\tiny\bfseries,
  numberblanklines=false,
  countblanklines=false,
  xleftmargin=\MaxSizeOfLineNumbers
}

\lstdefinelanguage{IFUNModest}{
  keywords={},
  keywords=[2]{max,if,then,else,},
  keywords=[3]{E,D,OD,TD,OGD,TGD},
  keywords=[4]{},
  otherkeywords={->,:,[,],..,@,\?,\!},
  morecomment=**[l]{//},
  morecomment=**[s]{/*}{*/},
  morestring=[b]",
  moredelim=**[is][\color{colcodenumeric}]{^}{^},
  moredelim=**[is][\color{colcodesyncaction}]{~}{~},
}
\lstdefinestyle{IFUNModest}{
  language=IFUNModest,
  xleftmargin=2em,
  basewidth=0.5em,
  basicstyle={\ttfamily},
  identifierstyle=\color{colcodeidentifier},
  keywordstyle=\bfseries,
  keywordstyle=[2]\color{colcodesyncaction},
  keywordstyle=[3]\bfseries\color{colcodekey},
  keywordstyle=[4]\bfseries\color{coldistribution},
  commentstyle=\color{colcodecomment},
  stringstyle=\mdseries,
  showstringspaces=false,
  numbers=left,
  numberstyle=\scriptsize\ttfamily\color{colcodecomment},
  numbersep=0.9em,
  tabsize=4,
  frame=none,
  aboveskip=\bigskipamount,
  belowskip=\medskipamount,
  abovecaptionskip=\smallskipamount,
  belowcaptionskip=\smallskipamount,
  captionpos=b,
  escapeinside={`}{`},
  mathescape=true,
}

\newcommand{\timers}{\mathcal{T}}
\newcommand{\actions}{\mathcal{A}}
\newcommand{\inactions}{\actions^{\mathsf i}}
\newcommand{\outactions}{\actions^{\mathsf o}}
\newcommand{\coactions}{\actions^{\mathsf u}}
\newcommand{\states}{\mathcal{S}}
\newcommand{\locations}{\mathcal{L}}
\newcommand{\trans}[1][]{\xrightarrow{{#1}}}

\colorlet{hlboxcoldraw}{black!65}
\colorlet{hlboxcolfill}{black!6}
\newsavebox\BODYBOX
\def\BODY{\unhbox\BODYBOX}
\NewDocumentEnvironment{hlbox}{ O{} }
  {\vspace{1ex plus .5ex minus .3ex}%
	\begingroup\centering\begin{spacing}{1.05}%
	\begin{lrbox}{\BODYBOX}\begin{minipage}{.96\linewidth}%
	\ifthenelse{\equal{#1}{}}{}{\textbf{#1}}}
  {\end{minipage}\end{lrbox}\begin{tikzpicture}%
	\node[rectangle,rounded corners=.3mm,inner sep=1.3ex,thick,
	      draw=hlboxcoldraw,fill=hlboxcolfill] {\BODY};
	\end{tikzpicture}\end{spacing}\endgroup%
	\vspace{1ex plus .5ex minus .3ex}}

\DeclareMathAlphabet{\pazocal}{OMS}{zplm}{m}{n}

\newcommand{\distparam}[2]{\ensuremath{{#1}\raisebox{1pt}{$\scriptstyle\,=\,$}{#2}}\xspace}

\newcommand{\gate}[1]{\ensuremath{\mathsf{\MakeUppercase{#1}}}\xspace}
\newcommand{\ANDgate}{\gate{and}}
\newcommand{\ORgate}{\gate{or}}
\newcommand{\VOTgate}{\gate{vot}}
\newcommand{\PANDgate}{\gate{pand}}
\newcommand{\SPAREgate}{\gate{spare}}

\newcommand{\BE}{\gate{be}}
\newcommand{\SBE}{\gate{sbe}}

\newcommand{\RBOX}{\gate{rbox}}

\def\modelm{\ensuremath{\mathfrak{m}}\xspace}
\newcommand{\modelDFTsmall}{\ensuremath{\modelm_{\mathlarger s}}\xspace}
\newcommand{\modelDFTlarge}{\ensuremath{\modelm_{\ell}}\xspace}
\newcommand{\modelQueue}{\ensuremath{\modelm_{Q}}\xspace}
\newcommand{\IFUNcase}{\ensuremath{\mathrm{\Phi}}\xspace}
\newcommand{\RESrun}[1][\mathfrak{a}]{\ensuremath{{#1}_\IFUNcase}\xspace}
\newcommand{\RESalgo}[2][]{\ensuremath{\textsf{\smaller[.5]#2}\ifthenelse{\equal{#1}{}}{}{_{#1}}}\xspace}
\newcommand{\tIFUN}[2]{\textsf{#1{\tt\_}#2}\xspace}
\newcommand{\tagnmono}{\tIFUN{tagn}{mono}}
\newcommand{\tagncomp}{\tIFUN{tagn}{comp}}
\newcommand{\tsenorder}{\tIFUN{tsen}{order}}
\newcommand{\tsentimed}{\tIFUN{tsen}{timed}}

\makeatletter
\define@key{modelparams}{m}{\def\PANDmodel@m{#1}}
\define@key{modelparams}{n}{\def\PANDmodel@n{#1}}
\setkeys{modelparams}{
  m=m,        
  n=n,    
}
\newcommand{\mPANDn}[1][]{%
  \begingroup
    \setkeys{modelparams}{#1}
    \ensuremath{\PANDmodel@m\raisebox{1pt}{\texttt{\_}}\PANDgate{\PANDmodel@n}}%
    \endgroup
  \xspace
}

\makeatother

\newcommand{\IFUN}[1][]{\ensuremath{\pazocal{I}%
  \if\relax\detokenize{#1}\relax\else\left(#1\right)\fi}\xspace}

\newcommand{\cs}[1][]{\ensuremath{\mathsmaller{\pazocal{C\hspace{-1.5pt}S}}%
  \if\relax\detokenize{#1}\relax\else_\mathfrak{#1}\fi}\xspace}

\newcommand{\REL}[1][T]{\ensuremath{\mathrm{REL}%
  \if\relax\detokenize{#1}\relax\else_{#1}\fi}\xspace}
\newcommand{\UNREL}[1][T]{\ensuremath{\mathrm{UNREL}%
  \if\relax\detokenize{#1}\relax\else_{#1}\fi}\xspace}
\newcommand{\Tree}{\ensuremath{\triangle}\xspace}
\newcommand{\xbf}{\ensuremath{\boldsymbol{x}}\xspace}
\newcommand{\zbf}{\ensuremath{\boldsymbol{z}}\xspace}
\newcommand{\idx}[1][]{\ensuremath{\mathit{idx}%
  \if\relax\detokenize{#1}\relax\else(#1)\fi}\xspace}
\newcommand{\nodev}{\ensuremath{v}\xspace}
\newcommand{\nodew}{\ensuremath{w}\xspace}
\newcommand{\nodes}[1][\Tree]{\ensuremath{\mathit{nodes}%
  \if\relax\detokenize{#1}\relax\else(#1)\fi}\xspace}
\newcommand{\type}[1][]{\ensuremath{\mathit{type}%
  \if\relax\detokenize{#1}\relax\else(#1)\fi}\xspace}
\newcommand{\child}[1][]{\ensuremath{\mathit{chil}%
  \if\relax\detokenize{#1}\relax\else(#1)\fi}\xspace}

\newcommand{\safe}[1][]{\ifthenelse{\equal{#1}{}}{\mathrm{safe}}{\mathrm{safe\hspace{1pt}}(#1)}}

\SetAlgoCaptionLayout{raggedright}
\SetAlCapFnt{\small}
\SetAlCapNameFnt{\small}
\SetAlCapSkip{\abovecaptionskip\relax}
\SetEndCharOfAlgoLine{}
\SetArgSty{textrm}
\SetKwInOut{Input}{Input}\SetKwInOut{Output}{Output}
\SetCommentSty{textit}
\SetKw{Break}{break}
\SetKwProg{Type}{type}{}{end}
\SetKwProg{Function}{function}{}{end}
\SetKwFor{ForEach}{foreach}{do}{}
\SetKwFor{WhileTrue}{repeat}{}{end}
\SetKw{Continue}{continue}
\SetKwRepeat{DoWhile}{repeat}{while}
\SetKwProg{Fn}{function}{}{}
\crefname{algocf}{alg.}{algs.}
\Crefname{algocf}{Alg.}{Algs.}

\usepackage{pifont,fontawesome}
\newcommand{\xmarkred}{\mathchoice
  {\text{\smaller[2]\color{Red}\ding{55}}}%
  {\text{\smaller[2]\color{Red}\ding{55}}}%
  {\scalebox{0.7}{\text{\smaller[2]\color{Red}\ding{55}}}}%
  {\scalebox{0.5}{\text{\smaller[2]\color{Red}\ding{55}}}}%
}%
\newcommand{\wrench}{\mathchoice
  {\text{\smaller[3]\color{OliveGreen}\faWrench}}%
  {\text{\smaller[3]\color{OliveGreen}\faWrench}}%
  {\scalebox{0.7}{\text{\smaller[3]\color{OliveGreen}\faWrench}}}%
  {\scalebox{0.5}{\text{\smaller[3]\color{OliveGreen}\faWrench}}}%
}%
\newcommand{\stkout}[1]
{\ifmmode\text{\sout{\ensuremath{#1}}}\else\sout{#1}\fi}

\newcolumntype{C}[1]{>{\centering\let\newline\\\arraybackslash\hspace{0pt}}m{#1}}

\newcommand{\colorpar}[3]{\colorbox{#1}{\parbox{#2}{#3}}}
\newcommand{\marginremark}[3]{\marginpar{\colorpar{#2}{\linewidth}{\color{#1}#3}}}
\newcommand{\tocite}[1][??]{%
  \ifthenelse{\boolean{tosubmit}}{}{
  \noindent\textcolor{blue!85}{\fontfamily{Montserrat-LF}\fontseries{medium}\selectfont[\textsmaller[1]{cite: #1}]}
  \marginremark{blue}{white}{\textbf{\fontfamily{Montserrat-LF}\selectfont CITE!}}}}

\colorlet{CarlosFg}{PineGreen!70!Black}
\colorlet{CarlosBg}{Aquamarine!25}
\colorlet{LauraFg}{DarkOrchid}
\colorlet{LauraBg}{DarkOrchid!20}
\definecolor{eyecancerpink}{rgb}{1.0, 0.0, 1.0}
\colorlet{GabrielFg}{eyecancerpink}
\colorlet{GabrielBg}{WildStrawberry!11}

\ifthenelse{\boolean{tosubmit}}{%
  \newcommand{\delete}[1]{}

  \newcommand{\noteCarlos}[3][]{#3}
  \newcommand{\noteLaura}[3][]{#3}
  \newcommand{\noteGabriel}[3][]{#3}
}{
  \newcommand{\delete}[1]{\st{\,#1\,}\xspace}

  \newcommand{\noteCarlos}[3][]{%
    \ifthenelse{\equal{#3}{}}{}{{\sethlcolor{CarlosBg}\texthl{#3}}}%
    \todo[backgroundcolor=CarlosBg,bordercolor=CarlosFg,linecolor=CarlosFg,%
          author=\textsf{\bfseries\color{CarlosFg}Carlos},size=\smaller,{#1}]%
          {\smaller\textsl{\protect{#2}}}}
  \newcommand{\noteLaura}[3][]{%
    \ifthenelse{\equal{#3}{}}{}{{\sethlcolor{LauraBg}\texthl{#3}}}%
    \todo[backgroundcolor=LauraBg,bordercolor=LauraFg,linecolor=LauraFg,%
          author=\textsf{\bfseries\color{LauraFg}Laura},size=\smaller,{#1}]%
          {\smaller\textsl{\protect{#2}}}}
  \newcommand{\noteGabriel}[3][]{%
    \ifthenelse{\equal{#3}{}}{}{{\sethlcolor{GabrielBg}\texthl{#3}}}%
    \todo[backgroundcolor=GabrielBg,bordercolor=GabrielFg,linecolor=GabrielFg,%
          author=\textsf{\bfseries\color{GabrielFg}Gabriel},size=\smaller,{#1}]%
          {\smaller\textsl{\protect{#2}}}}
}

\makeatletter%
\g@addto@macro\normalsize{%
  \setlength\abovedisplayskip{3pt}%
  \setlength\belowdisplayskip{3pt}%
  \setlength\abovedisplayshortskip{-3pt}%
  \setlength\belowdisplayshortskip{3pt}%
}%
\makeatother

\begin{document}
\title{A Taxonomy of Distance Metrics for Time-Sensitive Importance Splitting} %
\subtitle{Timer Bounds, Resampling, and the Global Age%
\thanks{%
This work was partially funded by
DFG grant 389792660 as part of \href{https://perspicuous-computing.science}{TRR~248 CPEC},
the European Union (EU) under the INTERREG North Sea project STORM\_SAFE of the European Regional Development Fund,
and the EU's Horizon 2020 research and innovation programme under Marie Skłodowska-Curie grant agreement 101008233 (MISSION).
}}
\titlerunning{A Taxonomy of Distance Metrics for Time-Sensitive ISPLIT}

\author{Gabriel~Dengler\inst{1}\orcidID{0000-0002-4217-4952} \and
Carlos E.~Budde\inst{2}\orcidID{0000-0001-8807-1548}\and
Laura~Carnevali\inst{3}\orcidID{0000-0002-5896-4860}}
\authorrunning{G. Dengler et al.}
\institute{%
Saarland University, Saarbrücken, Germany
\\\email{dengler@depend.uni-saarland.de}
\and
Technical University of Denmark, Lyngby, Denmark
\and
Department of Information Engineering, University of Florence, Florence, Italy
}
\maketitle

\begin{abstract}
\Ac{ISPLIT} evaluates the probabilities of rare events in non-Markovian models. It requires a heuristic \ac{IFUN} that estimates the distance to the target.
While including timer evaluations in the \ac{IFUN} can substantially improve the effectiveness of \ac{ISPLIT},
the existing \emph{time-sensitive} \acp{IFUN} evaluate simulation states with respect to single sampled timer values. Thus, reaching highly important states requires simultaneously sampling specific combinations of timer values, yielding several unproductive simulation runs.

In this paper, we revisit time-sensitive \ac{ISPLIT} with the goal of steering simulation runs towards important states. First, we study how timer values can be \emph{resampled} conditioned on the elapsed time. The importance can be evaluated by considering the set of feasible timer values, decoupling importance estimation from timer samples.
Second, we exploit the \emph{global age} of a simulation to identify and prune the executions that can no longer reach the target within the remaining time budget.
Together, these ideas lead to a \textit{taxonomy} of distance metrics clarifying the role of timer bounds, resampling, and the global age.
In particular, for models with unbounded timers, we show that time-sensitive \acp{IFUN} collapse to ordinary \acp{IFUN} under resampling.
Experiments demonstrate that the proposed formulations substantially improve the accuracy of \ac{ISPLIT} estimators.

\end{abstract}

\setcounter{footnote}{0}

\acresetall

\section{Introduction}
\label{sec:introduction}

Reliability engineering provides methods and tools to assess risks in a variety of application domains \cite{Rob00,ZZZ21},
performing quantitative evaluation of probabilistic models, e.g., the failure behavior of \emph{fault trees}~\cite{ruijters2015fault}.
If any stochastic temporal parameter is non-exponentially distributed, the underlying stochastic process is non-Markovian~\cite{ciardo1994characterization}, and the model is termed a \emph{non-Markovian model}.

\myparagraph{Analysis and simulation methods.} 
Unlike the Markovian case, providing efficient solution techniques for the quantitative evaluation of non-Markovian models is considered significantly harder.
Efficient quantitative evaluation methods do exist for restricted model classes such as non-Markovian models with an underlying semi-Markov process~\cite{carnevali2021compositional}, at most one non-exponential timer in each state~\cite{german1995transient,amparore2013component},
and an underlying Markov regenerative process with a bounded number of steps between subsequent regenerations~\cite{horvath:2012:transient-sscs}.
These restrictions, however, limit the model expressiveness and do not fully address scalability challenges \cite{BDMS22, dengler:2024:transient-eval-sscs-sim, dengler:2025:time-sensitive-isplit}.
Thus, \emph{simulation}, also known as \ac{SMC}, remains the default approach for analyzing large non-Markovian models. In contrast to methods based on state-space enumeration, \ac{SMC} produces an \emph{estimate} $\est$ of some property $\prop$
s.t.\ $\est\in [l, u]$ with some confidence level $\gamma$. While applicable to models with arbitrary size, it struggles to reliably estimate \emph{rare events} \cite{lecuyer:2010:asymptotic-robustness-res,glasserman:1999:multilevel-splitting,ZBC12}.

\Ac{ISPLIT} is a prominent technique for reducing the variance of rare event estimations by layering (splitting) the state space so that reaching one layer from another one is not rare. 
The selection of layers is given by a heuristic \ac{IFUN}, whose choice is the main deciding factor for the efficiency of \ac{ISPLIT}. For a wide range of application scenarios, it has been demonstrated how an effective \ac{IFUN} can be derived automatically \cite{BDH19, BDMS22}.

\myparagraph{Time-sensitive \ac{ISPLIT}.}
A notable challenge in non-Markovian models is that the progress towards the target depends both on the system's \emph{discrete} logical state and on the \emph{continuous} remaining times of the active timers.
In~\cite{dengler:2025:time-sensitive-isplit}, we addressed this challenge by incorporating concrete timer values into automatically generated \acp{IFUN}, increasing the amount of information available to
\ac{ISPLIT} while tying importance to a particular sampled realization of timer values.
\vspace{-1em}
\begin{wrapfigure}[11]{r}{0.321\textwidth}
    	\centering
        \vspace{5pt}
    	\includegraphics[width=.9\linewidth]{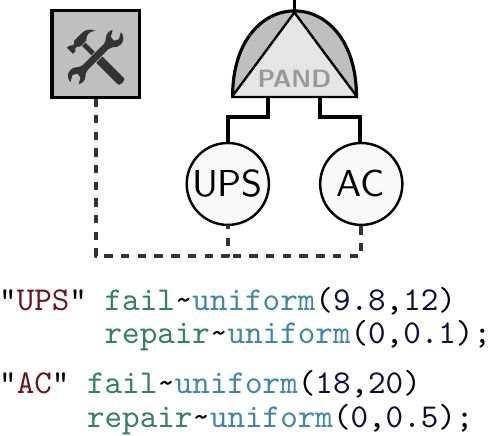}
    	\vspace{-.5ex}
    	\caption{Toy example}
    	\label{fig:running-example}
    \end{wrapfigure}
\begin{example}
    \label{ex:running-example}
    \Cref{fig:running-example} shows a repairable \ac{DFT} from \cite{dengler:2025:time-sensitive-isplit}. It is made of two \acp{BE}: \textsf{UPS} (\underline{u}ninterruptible \underline{p}ower \underline{s}upply) and \textsf{AC} (\underline{a}lternating \underline{c}urrent grid). The system failure is governed by a \textsf{PAND} gate, i.e., it occurs only if the attached \acp{BE} fail from left to right.
    Given the failure and repair time bounds of the \acp{BE}, for the shortest path, the \textsf{UPS} \ac{BE} needs to fail twice near the lower bound $9.8$, while the \textsf{AC} \ac{BE} has to fail at the upper bound $20$.
    So, an \ac{IFUN} should assign a state for which this condition applies a higher importance than for other states, which can only be expressed by considering concrete timer values.
\end{example}
Thus, two states may receive significantly different importance values given the timer values. Reaching highly important states might occur quite rarely, as most of the probability mass is concentrated in realizations with low-importance values. 
This raises the question: How can we observe more important simulation runs?

\paragraph{Contributions.} 
In this paper, we tackle the above problem
by
\begin{enumerate*}[label=(\roman*),itemjoin={{, }},itemjoin*={{ and by }}]
\item	\textit{resampling} of timers conditioned on their elapsed time
\item	exploiting the \textit{global age} 
\end{enumerate*}.
These variations lead to a plethora of distance metrics, whose role regarding timer bounds, resampling, and the global age can be categorized in a \emph{taxonomy}.

\myparagraph{Resampling.} 
The usual simulation semantics of non-Markovian models samples timer values when they are \emph{newly enabled}.
When a timer elapses, the values of the other persistent timers are reduced by the elapsed time.
Notably, there is no need to fix the value of a timer upon enabling. Conditioned on each timer's elapsed time, the future behavior is independent of the past samples and captures the reachable timing choices that remain available.
We refer to sampling from this conditioned distribution as \emph{resampling}.
Similar concepts exist for approximating distributions from a finite number of samples \cite{BS93,BBK09}.
Resampling enables us to enlarge the number of different timer samples that reach a high importance. 
Specifically, for evaluating the time-sensitive \ac{IFUN}, we consider the maximum importance that is reachable given the possible timer evaluations. 

We have a closer look at the special case of models where each timer has an infinite domain  $[0, \infty)$, which equates to just considering in which order the active timers elapse. There, we show that \acp{IFUN} based on backwards reachability search behave equally in the time-agnostic as well as time-sensitive setting when resorting to resampling. For a simulation run where no resampling is used, we develop a more efficient data structure
to compute the time-sensitive \ac{IFUN}.

\myparagraph{Global age.} 
For time-bounded reachability properties, we exploit the \emph{global age}, i.e., the elapsed simulation time, to decide if the target condition can be reached \emph{within the remaining time budget}.
By integrating the global time as an artificial timer, the simulation can dynamically detect when the target condition is no longer reachable and immediately terminate unfeasible realizations.

\paragraph{Structure.}
We start by reviewing relevant background in \Cref{sec:background}. Then, we focus on the main contributions in the paper, i.e., resampling of timers (\Cref{sec:resampling-timers}) and inclusion of the global age (\Cref{sec:global-age}). The plethora of derived distance metrics is categorized in a taxonomy in \Cref{sec:taxonomy}.
We evaluate our variants for challenging \acp{DFT} and a queueing network in \Cref{sec:experiments} and draw our conclusions in \Cref{sec:conclusion}.

\section{Background}
\label{sec:background}

\subsection{Input/Output Stochastic Automata}
\label{sec:background:IOSA}

We specify every model as a network of \acp{IOSA}. Intuitively, these are automata with stochastic timers that can communicate with input/output actions, where
urgent actions operate instantaneously:

\begin{definition}[\Acl{IOSA}]\label{def:iosac}
    An \ac{IOSA} with urgency~\cite{DM18} is a tuple $(\splitatcommas{\locations,\actions,\timers,\trans,T_0,l_0})$ consisting of a denumerable set of \emph{locations} $\locations$ with $l_0\in\locations$ being the \emph{initial location};
    a denumerable set of \emph{actions} $\actions$ partitioned into two sets of \emph{input} $\inactions$ and \emph{output} $\outactions$ actions, where $\coactions\subseteq\actions$ denotes the set of \emph{urgent} actions;
    a finite set of \emph{timers} $\timers$ s.t.\ each $t\in\timers$ has an associated continuous probability measure $\mu_t$ supported on $\R_{\ge0}$, where $T_0\subseteq\timers$ denotes the set of \emph{initial} timers;
    and
    a transition function ${\trans}\subseteq\locations\times2^\timers\times\actions\times2^\timers\times\locations$
    whose elements $\langle l, T, a, T', l' \rangle$ are written as $l\xtr{\scriptscriptstyle T,a,T'}l'$.
\end{definition}
We explicitly distinguish between \emph{locations} and (simulation) \emph{states}:
\begin{definition}[State]\label{def:state}
    A \emph{state} is a pair $\langle l, \tau\rangle \in \states$ with $\states=\mathcal{L} \times ((\mathbb{R}_{\geq 0} \ \dot{\cup}\ \{ \bot \})^{\mathcal{T}})$, made of a location~$l$ and a \emph{timer value function} $\tau$ yielding an evaluation for each timer. We term a timer $t\in \mathcal{T}$ \emph{active} if $\tau(t) \neq \bot$ and \emph{inactive} otherwise.
\end{definition}

A transition $l\xtr{\scriptscriptstyle T,a,T'}l'$ represents either an external event (if $a\in \inactions$ is an input action; then $T = \emptyset$) or an internal event (if $a\in \outactions$ is an output action; then $|T| = 1$ for non-urgent actions and $|T| = 0$ for urgent actions), which triggers action $a$ and starts new timers $T'$.\footnote{Note that, although not part of the original \ac{IOSA} definition, active timers can also be \emph{disabled}, meaning that they are removed from the set of active timers without elapsing. This happens, e.g., in the semantic mapping of \textsf{SPARE} gates to \ac{IOSA} (see \cite{MBD20}) when a \ac{SBE} is (de-)activated by the \textsf{SPARE} gate.}
In the end, the full parallel composition of the \ac{IOSA} network must not have any unresolved input actions \cite{DM18}.
We require that exactly one subset of timers $T_l \subseteq \mathcal{T}$ is active in each location $l$~\cite{MBD20,DM18}.
Thus, given a location $l$, for all associated states $s = \langle l, \tau\rangle$ and timers $t\in \mathcal{T}$, it holds that $\tau(t) = \bot$ iff $t\not\in T_l$, viz., when a timer is not active in the current state, it takes the special value $\bot$ that denotes an inactive timer.
We assume that for each timer $t\in\mathcal{T}$ we can provide its bounds $[a,b] \subseteq \mathbb{Q}_{\geq 0}\cup \{\infty\}$:
\begin{definition}[Timer bounds]
	When enumerating the $n = |T_l|$ active timers of a location $l$ as $t_1, ..., t_n \in T_l$, we denote with $a_i\in \mathbb{Q}_{\geq 0}$ the lower bound and with $b_i\in \mathbb{Q}_{\geq 0}\cup \{\infty\}$ the upper bound of timer $t_i$ for $i \in \{1, ..., n\}$.
\end{definition}

\subsection{State Classes}
\label{sec:background:state-classes}
To determine the minimum number of transitions needed to reach the target given the current timer values, the work of  \cite{dengler:2025:time-sensitive-isplit} used \acp{SC}~\cite{vicario:2001:tpn-analysis}:
\begin{definition}[\Acl{SC}]\label{def:tsclass}
	An \ac{SC} $\Sigma = \langle l, D \rangle$ consists of a location $l \in \mathcal{L}$ and a joint domain~$D$ for the active timers in~$l$, i.e.,~the timers in $T_l$.
\end{definition}
The domain $D$ of an \ac{SC} $\Sigma = \langle l, D \rangle$ collects the (infinite) possible timer values. It can be efficiently encoded as a \ac{DBM} zone, i.e., the solution of a set of linear inequalities constraining the difference between two timers (with $t_\star$ being a fictitious timer denoting the time when $\Sigma$ is entered):
\begin{align*}
    D = \{ \tau(t_i) -\tau(t_j) \leq b_{ij} \quad \forall  \, t_i, t_j \in T_{l} \cup\{t_{\star}\} \mbox{ with } t_i \neq t_j\}
    \label{eq:DBM}
\end{align*}

To compute a distance metric that counts the number of transitions to reach the target, we perform backwards reachability search from the target locations.
For each \ac{SC} $\Sigma' = \langle l', D'\rangle$ and incoming transition $l\xtr{\scriptscriptstyle T,a,T'}l'$, we compute the predecessor \ac{SC} $\Sigma = \langle l, D\rangle$ and derive its distance metric $\omega_T(\Sigma)$ as the minimum number of transitions needed to reach the target locations from $\Sigma$. The predecessor calculations effectively resort to computing the \emph{weakest precondition}, i.e., all states of $\langle l, D\rangle$ end up in $\langle l', D'\rangle$ after taking the transition. Details for the calculation can be found in \cite{dengler:2025:time-sensitive-isplit} and in \Cref{sec:app:reachability-state-classes}.
During backwards reachability search, we can eliminate redundant information,
e.g., if for one location $l$ we have two \acp{SC} $\Sigma_1=\langle l, D_1\rangle$ and $\Sigma_2=\langle l, D_2\rangle$ such that $D_1 \supseteq D_2$ and $\omega_T(\Sigma_1) \leq \omega_T(\Sigma_2)$, so $\Sigma_1$ strictly includes $\Sigma_2$, we can omit $\Sigma_2$.
Analogously, by performing forwards search from the initial state with \acp{SC}, we can determine via \acp{SC} the set of states (locations and timer values) that are reachable during a simulation run \cite{vicario:2001:tpn-analysis}.

We can evaluate the distance (i.e., the minimum number of steps to reach the target given the timer constraints) of a state~$s=\langle l,\tau \rangle$ as the minimum distance to the target among those of the SCs which $s$ belongs to, i.e., $d_T \colon \mathcal{S} \rightarrow \mathbb{N}_0 \cup \{\infty\}$ with $d_T(s) = \min_{\Sigma \,\mid\, s \in \Sigma} \omega_T(\Sigma)$ ($d_T$ is an acronym for \underline{d}istance \underline{t}imed) and $d_T(s) = \infty$ if $s$ is not contained in any \ac{SC} $\Sigma$.
A state $s=\langle l,\tau \rangle$ belongs to an \ac{SC} $\Sigma = \langle l',D \rangle$ (which we write as $s \in \Sigma$) if $l=l'$ and the timer values defined by the timer value function $\tau$ satisfy $D$ (denoted with $\tau \in D$), which can be easily checked by evaluating all inequalities encoded in $D$ (runtime $\mathcal{O}(|T_l|^2)$).

By laying a probability density over domain $D$, we can quantify the probability of timer evaluations through \acp{SSC} \cite{vicario2009using, horvath:2012:transient-sscs}:
\begin{definition}[\Aclp{SSC}]
	An \ac{SSC} is a tuple $\Sigma = \langle l,D,f\rangle$ where $l$, $D$ are defined as for \acp{SC} and $f : (\mathbb{R}_{\geq 0})^{|T_l|} \rightarrow \mathbb{R}_{\geq 0}$ is the \ac{PDF} (immediately after the previous transition) of the random vector of the times-to-fire $\tau$ of timers $t_1, ..., t_n \in T_l$ enabled in $l$.
\end{definition}
Similarly to \acp{SC}, we can enumerate the set of reachable \acp{SSC} from the initial state by a forwards search from the initial state. This, for example, allows us to compute time-bounded reachability probabilities exactly, however, with significant costs due to the state-space explosion and complexity of representing \acp{SSC}.

\subsection{Rare Event Simulation}
\label{sec:background:RES}

\myparagraph{Statistical model checking.}
In \Ac{SMC}, evaluating bounded reachability metrics boils down to the sampling of simulation traces, i.e., a sequence $x_i = \{\langle l_j, \tau_j\rangle\}_{j=0}^{n_i}$ of states with locations $l_j$ and timer values $\tau_j$, so that $x_i$ either visits a set of goal locations $\mathcal{L}_{g}$ or not \cite{YS02}.
An unbiased estimator for the probability of reaching $\mathcal{L}_{g}$ is then given by $y_i = 1 \mathop{\mathsf{\underline{if}}} \exists j.\, l_j \in \mathcal{L}_g \mathop{\mathsf{\underline{else}}} 0$.
For robustness, such point estimates are accompanied by an (asymptotic) statistical correctness guarantee, typically a \ac{CI} $[l, u] \subseteq \mathbb{R}$ with confidence level $\gamma$.

\myparagraph{Importance splitting.}
For \emph{rare events}, however, regular \ac{SMC} needs an unfeasible number of samples to provide a tight \ac{CI}.
This problem can be alleviated by using \ac{RES} methods, which can be categorized as \ac{IS} and \ac{ISPLIT}.
\ac{IS} modifies the underlying probability model such that the event of interest is more likely.
While sophisticated automated methods exist for Markovian models, applying them to non-Markovian models is seen as significantly more challenging~\cite{RBSH13,buijsrogge:2019:importance-sampling-tandem-queue}.
Instead, we focus in this work on \ac{ISPLIT} methods, which partition the state space into layers surrounding the goal states $\states_g\subsetneq\states$.
This is done using an \ac{IFUN} $f\colon\states\to\N_0 \cup \{-\infty\}$ that maps each state to an \emph{importance value} (or indicates that the target is unreachable by the value $-\infty$).
For \ac{ISPLIT}, there exist
multiple variations~\cite{Gar00, BDH19}, predominantly \emph{Fixed Effort} and \emph{RESTART} \cite{LLLT09,villen-altamirano:1991:RESTART}. In short, Fixed Effort iteratively starts from the selection of simulation runs that reached the next importance level, while RESTART merges split simulation runs when falling below a certain importance level.

\myparagraph{Importance function derivation.}
A general automated approach to derive an \ac{IFUN} uses backwards reachability search, i.e., a function $d \colon \mathcal{S} \to \N_0 \cup \{\infty\}$ with $d(\langle l, \tau\rangle) = \omega(l)$ that counts the number of needed transitions to the target, arriving at the \ac{IFUN} $f(s) \doteq \max_{s'\in \mathcal{S}, d(s')<\infty} \{d(s')\} - d(s)$ \cite{BDH19}.
This concept straightforwardly translates to a time-sensitive setting using $d_T \colon \mathcal{S} \rightarrow \mathbb{N}_0 \cup \{\infty\}$ (from \Cref{sec:background:state-classes}) to arrive at the \ac{IFUN} $f_T(s) \doteq \max_{s' \in \mathcal{S}, d_T(s')<\infty} \{d_T(s')\} - d_T(s)$.

\myparagraph{Threshold selection.} The practical efficiency of \ac{ISPLIT} strongly depends on the choice of importance levels (called thresholds) at which simulation runs are split. Too few thresholds lead to extinction (i.e., most runs fail to reach the target), whereas too many cause a significant performance overhead. Most methods for automatically computing thresholds~\cite{cerou:2012:sequential-mc, BDH19} start from the initial location and estimate level-up probabilities to determine the thresholds to include.

\section{Resampling of Timers}
\label{sec:resampling-timers}

This section is concerned with the first part of our main contribution, i.e., studying how the possibility of resampling timers influences the idea of time-sensitive \ac{ISPLIT}. After explaining the general concept, we consider models with unbounded timers in \Cref{sec:resampling-timers:models-infinite-domains} and the adaptations for threshold selection in \Cref{sec:resampling-timers:threshold-selection}.

As we can newly determine the value of each timer under the condition that a deterministic time has elapsed, we do not consider the currently sampled timer values, but the range of timer values that is possible given the elapsed time:
\begin{definition}[Elapsed time]
    \label{def:elapsed-time}
	For a state $s = \langle l, \tau\rangle \in \states$, $\tau_{\textit{elp}} \in (\mathbb{R}_{\geq 0} \ \dot{\cup}\ \{ \bot \})^{\mathcal{T}}$ denotes the \emph{elapsed time} for each timer. Furthermore, we write $s_{\textit{elp}} = \langle l, \tau_{\textit{elp}}\rangle$ for a location with elapsed time information.
\end{definition}

We determine $\tau_\textit{elp}$ for each timer during a (forwards) simulation run as follows: In the initial state, no time has elapsed, so we set $\tau_\textit{elp}(t) = 0$ for all $t\in T_0$. Otherwise, consider a location $l$ with elapsed time $\tau_\textit{elp}$, concrete timer values $\tau$, and outgoing transition $l\xtr{\scriptscriptstyle T,a,T'}l'$ with $T = \{t\}$.
Then, the elapsed time of taking this transition is given by the value of the elapsed timer, i.e. $\Delta = \tau(t)$. For the successor state with elapsed timer values $s'_{\textit{elp}} = \langle l', \tau'_\textit{elp}\rangle$ it holds for all (newly enabled) timers $t_i\in T' \colon \tau'_\textit{elp}(t_i) = 0$ and for all (persistent) timers $t_i \in T_{l'} \backslash T' \colon \tau'_\textit{elp}(t_i) = \tau_\textit{elp}(t_i) + \Delta$.

Given the elapsed time, we can compute the lower bound of each timer as $\tau_{\textit{min}} \in (\mathbb{R}_{\geq 0} \ \dot{\cup}\ \{ \bot \})^{\mathcal{T}}$ with $\tau_{\textit{min}}(t_i) = \max(0, a_i - \tau_{\textit{elp}}(t_i))$ for $t_i \in T_l$ and the upper bound $\tau_{\textit{max}} \in (\mathbb{R}_{\geq 0} \ \dot{\cup}\ \{ \infty, \bot \})^{\mathcal{T}}$ with $\tau_{\textit{max}}(t_i) = b_i - \tau_{\textit{elp}}(t_i)$.\footnote{Note that it holds $\tau(t_i) \leq b_i$, so $b_i - \tau_\textit{elp}(t_i)$ is always non-negative.}

We define the timed distance metric under resampling as the regular minimum timed distance achievable by resampling the timers of the current state.
This corresponds to the highest regular importance achievable by resampling.
To formally define this metric, we observe that, depending on the value of $\tau_\textit{elp}$, $s_\textit{elp}$ induces an \ac{SC} containing the potential timer values reachable by resampling:
\begin{definition}[Induced state class]
	\label{def:induced-sc}
	The induced \ac{SC} of a state $s = \langle l, \tau\rangle$ is the \ac{SC} $\Sigma_{s_\textit{elp}} = \langle l, D\rangle$ with domain $D = \bigtimes_{t_i \in T_{l}} [\tau_{\textit{min}}(t_i), \tau_{\textit{max}}(t_i)]$.
\end{definition}
As for the initial \ac{SC} in the \ac{SC} graph for forwards reachability analysis \cite{vicario2009using, horvath:2012:transient-sscs,vicario:2001:tpn-analysis}, the domain $D$ has a hyper-rectangular shape as the individual timers are solely shifted by a time constant and thus remain independent.
For the formal definition of the timed distance metric under resampling, we consider the minimum distance attainable by any state that is possible by resampling the current timers:
\begin{definition}[Time-sensitive distance with resampling]
    \label{def:time-sensitive-dist-resampling}
	Consider a state $s = \langle l, \tau\rangle$ with elapsed time information $s_{\textit{elp}} = \langle l, \tau_{\textit{elp}}\rangle$. Then, the timed distance metric under resampling is defined as $d^{r}_{T}(s) = \min_{s'\in \Sigma_{s_\textit{elp}}} d_{T}(s')$.
\end{definition}
Since $d_{T}(s)$ is evaluated by iterating over the \acp{SC} associated with location~$l$, we can efficiently evaluate $d^{r}_{T}(s)$ in a similar vein: We determine the \ac{SC} with the smallest distance to the target that intersects with the induced \ac{SC} $\Sigma_{s_\textit{elp}} = \langle l, D\rangle$, formally $d^{r}_{T}(s) = \min_{\Sigma = \langle l', D'\rangle \,\mid\, l=l'\wedge D\cap D' \neq \emptyset} \omega_T(\Sigma)$ (see \Cref{sec:background:state-classes}).
Note that checking if a point lies inside the domain spanned by a \ac{DBM} zone has the same computational complexity as checking if the domains of two \ac{DBM} zones with equal size intersect \cite{vicario:2001:tpn-analysis, david:2005:DBM-merging}, namely for $n$ timers a runtime of $\mathcal{O}(n^2)$.

Moreover, an orthogonal perspective can be achieved by considering the probability densities associated with each timer, as $s_\textit{elp}$ also induces an \ac{SSC}:
\begin{definition}[Induced stochastic state class]
	\label{def:induced-ssc}
	Consider $s_\textit{elp} = \langle l, \tau_\textit{elp}\rangle$.
	For a timer $t_i \in T_l$, let $f_i : [a_i, b_i] \rightarrow \mathbb{R}_{\geq 0}$ be its \ac{PDF} for the corresponding probability measure $\mu_{t_i}$. We use $x_i$ for the variable describing the value of timer $t_i$, i.e., $\tau(t_i)$. Then, the induced \ac{SSC} is the \ac{SSC} $\langle l, D, f\rangle$ with $D$ defined as in \Cref{def:induced-sc} and
	\[ f(x_1, ..., x_n) = \prod_{i=1}^{n} \underbrace{\frac{1}{\int_{\max(a_i, \tau_\textit{elp}(t_i))}^{b_i}f_i(x)dx}}_{\text{Normalization constant}} \cdot \underbrace{f_i(x_i + \tau_{\textit{elp}}(t_i))}_{\substack{\text{Marginal}\\ \text{distribution}}} \text{.}\]
\end{definition}
As for the induced \ac{SC}, the individual timers of the \ac{SSC} are stochastically independent.
Thus, each timer can be resampled individually, see \Cref{sec:app:resampling-implementation}.

\begin{figure}[t]
	\centering
	\begin{subfigure}{\textwidth}
		\centering
		\begin{tikzpicture}
    \node[draw, circle] (l_ua) at (0, 0) {$l_{ua}$};
    \node[draw, circle] (l_nua) at (4.2, 0) {$l_{\stkout{u}a}$};
    \node[draw, circle] (l_nuna) at (7.2, 0) {$l_{\stkout{ua}}$};

    \path[draw] (-3.7, 0) edge[-stealth] node[above,pos=0.5,align=center,scale=0.8] {$\{t_{u\xmarkred} := \textsc{Unif}(9.8, 12),$\\ $\phantom{\{}t_{a\xmarkred} := \textsc{Unif}(18, 20)\}$} (l_ua);
    \path[draw] (l_ua) edge[-stealth, bend left=7] node[above,pos=0.5,align=center,scale=0.8] {$\{t_{u\xmarkred}\}, u\xmarkred,$\\ $\{t_{u\wrench} := \textsc{Unif}(0, 0.1)\}$} (l_nua);
    \path[draw] (l_nua) edge[-stealth, bend left=7] node[below,pos=0.5,align=center,scale=0.8] {$\{t_{u\wrench}\}, u\wrench,$\\ $\{t_{u\xmarkred} := \textsc{Unif}(9.8,12)\}$} (l_ua);
    \path[draw] (l_nua) edge[-stealth] node[above,pos=0.5,align=center,scale=0.8] {$\{t_{a\xmarkred}\}, a\xmarkred, \emptyset$} (l_nuna);
    
    \path[draw,gray] (l_ua) edge[-stealth, dashed, bend right=5] node[left,scale=0.8] {a\text{\scriptsize\color{gray}\ding{55}}} ++(0.3,-1.0);
\end{tikzpicture}
		\caption{Excerpt from composed \ac{IOSA} with omitted urgent actions}
		\label{fig:running-ex-scs-model}
	\end{subfigure}
	\strut\hspace{-0.6cm}
	\begin{minipage}[s]{0.3\textwidth}
		\vspace{-5.0cm}
		\begin{subfigure}{\textwidth}
			\centering
			\scalebox{0.9}{\begin{tikzpicture}
    \node (D1) at (0, 0) {$\langle l_{\stkout{ua}}, {\color{Orchid}D_1}\rangle$};
    \node (D2) at (0, 1) {$\langle l_{\stkout{u}a}, {\color{PineGreen}D_2}\rangle$};
    \node (D3) at (-2, 1) {$\langle l_{ua}, {\color{Blue}D_3}\rangle$};
    \node (D4) at (0, 2) {$\langle l_{\stkout{u}a}, {\color{Red}D_4}\rangle$};
    \node (D5) at (-2, 2) {$\langle l_{ua}, {\color{Orange}D_5}\rangle$};

    \draw (D2) edge[-stealth] node[left, scale=0.8] {$a\xmarkred$} (D1);
    \draw (D3) edge[-stealth] node[below, scale=0.8] {$u\xmarkred$} (D2);
    \draw (D4) edge[-stealth] node[above, sloped, scale=0.8] {$u\wrench$} (D3);
    \draw (D5) edge[-stealth] node[above, scale=0.8] {$u\xmarkred$} (D4);
\end{tikzpicture}}
			\caption{Backwards search}
			\label{fig:running-ex-scs-path}
		\end{subfigure}
		\begin{subfigure}{\textwidth}
			\centering
			\vspace{0.2cm}
			\begin{tikzpicture}
    \begin{axis}[
        xmin=0, xmax=0.25,
        ymin=0, ymax=2.7,
        axis lines=middle,
        xlabel=$\tau(t_{u\wrench})$,
        xlabel style={xshift=-0.2cm,yshift=0.2cm},
        axis y line=none,
        x=7.2cm,
        y=0.18cm,
        minor tick num=1,
        xtick={0,0.1,...,0.2},
        xlabel style={at={(ticklabel cs:1.05)},anchor=west},
        xticklabel style={yshift=0.1cm, rotate=-65},
        ]
        
        \addplot[
        fill=Orchid,
        fill opacity=0.3,
        draw=Orchid,
        very thick,
        ] coordinates {
            (0, 0)
            (0.1, 0)
            (0, 0)
        };
        \node[scale=1.1,Orchid] at (0.07, 1.5) {$D_1$};
        \node[scale=0.8,gray,align=center] at (11, 16) {Initial\\ domain};
    \end{axis}
\end{tikzpicture}
			\caption{\acp{SC} for $l_{\stkout{ua}}$}
			\label{fig:running-ex-scs1}
		\end{subfigure}
	\end{minipage}
	\begin{subfigure}{0.3\textwidth}
		\centering
		\begin{tikzpicture}
    \begin{axis}[
        xmin=0, xmax=0.24,
        ymin=0, ymax=15.3,
        axis lines=middle,
        xlabel=$\tau(t_{u\wrench})$,
        ylabel=$\tau(t_{a\xmarkred})$,
        xlabel style={xshift=-0.1cm,yshift=0.3cm,at={(ticklabel cs:1.05)},anchor=west},
        ylabel style={
            at={(axis description cs:-0.5,0.95)},
            rotate=90,
            anchor=center,
        },
        x=7.2cm,
        y=0.24cm,
        grid=both,
        grid style={line width=.1pt, draw=gray!30},
        major grid style={line width=.2pt,draw=gray!50},
        minor tick num=1,
        xtick={0,0.1,...,0.2},
        ytick={0,3,...,15},
        xticklabel style={yshift=0.1cm, rotate=-65},
        clip mode=individual
        ]
        
        \addplot[
        fill=PineGreen,
        fill opacity=0.3,
        draw=PineGreen,
        thick,
        ] coordinates {
            (0,0)
            (0.1, 0)
            (0.1, 0.1)
            (0, 0)
        };
        
        \addplot[
        fill=Red,
        fill opacity=0.3,
        draw=Red,
        thick,
        ] coordinates {
            (0, 9.8)
            (0.1,9.9)
            (0.1,12.2)
            (0,12.1)
            (0,9.8)
        };
        
        \addplot[
        fill=gray,
        fill opacity=0.3,
        draw=gray,
        thick,
        ] coordinates {
        	(0,8.1)
        	(0.1,8.1)
        	(0.1,10.1)
        	(0,10.1)
        	(0,8.1)
        };
        
        \addplot[
        fill=gray,
        fill opacity=0.3,
        draw=gray,
        thick,
        ] coordinates {
        	(0,6.5)
        	(0.1,6.5)
        	(0.1,8.5)
        	(0,8.5)
        	(0,6.5)
        };
        
        \node[scale=1.1,PineGreen] at (0.08, 1.8) {$D_2$};
        \node[scale=1.1,Red] at (0.16, 11.5) {$D_4$};
        \node[scale=1.1,black!20!gray] at (0.16, 9.2) {$D_2'$};
        \node[scale=1.1,black!20!gray] at (0.16, 7.2) {$D_3'$};
    \end{axis}
\end{tikzpicture}
		\vspace{-1.2em}
		\caption{\acp{SC} for $l_{\stkout{u}a}$}
		\label{fig:running-ex-scs2}
	\end{subfigure}
	\begin{subfigure}{0.33\textwidth}
		\centering
		\begin{tikzpicture}
    \begin{axis}[
        xmin=0, xmax=13.5,
        ymin=0, ymax=21.0,
        axis lines=middle,
        xlabel=$\tau(t_{u\xmarkred})$,
        ylabel=$\tau(t_{a\xmarkred})$,
        xlabel style={xshift=-0.2cm,yshift=0.3cm,at={(ticklabel cs:1.05)},anchor=west},
        ylabel style={
            at={(axis description cs:-0.36,0.95)},
            rotate=90,
            anchor=center,
        },
        x=0.18cm,
        y=0.18cm,
        grid=both,
        grid style={line width=.1pt, draw=gray!30},
        major grid style={line width=.2pt,draw=gray!50},
        minor tick num=1,
        xtick={0,4,...,12},
        ytick={0,4,...,20},
        xticklabel style={yshift=0.1cm, rotate=-65},
        clip mode=individual
        ]
        
        \addplot[
        fill=Blue,
        fill opacity=0.3,
        draw=Blue,
        thick,
        ] coordinates {
            (0, 0)
            (12, 12)
            (12, 12.1)
            (0, 0.1)
            (0, 0)
            (12, 12)
        };
        
        \addplot[
        fill=Orange,
        fill opacity=0.3,
        draw=Orange,
        thick,
        ] coordinates {
            (0, 12.2)
            (7.8, 20.0)
            (10.2, 20.0)
            (0, 9.8)
            (0, 12.2)
        };
        
        \addplot[
        fill=gray,
        fill opacity=0.3,
        draw=gray,
        thick,
        ] coordinates {
            (9.8, 18)
            (12, 18)
            (12, 20)
            (9.8, 20)
            (9.8, 18)
            (12,18)
        };
        \node[scale=1.1,Blue] at (5, 2) {$D_3$};
        \node[scale=1.1,Orange] at (6.4, 13.5) {$D_5$};

        \node[scale=1.1,black!20!gray,align=center] at (11, 16) {$D'_1$};

        \filldraw[Brown] (9.9,19.9) circle (0.5pt);
        \node[scale=1.1,Brown] at (9.9,21.3) {$\tau_1$};
        \filldraw[Brown] (9.9,18.4) circle (0.5pt);
        \node[scale=1.1,Brown] at (8.5,18.4) {\contour{white}{$\tau_2$}};
        \filldraw[Brown] (11.5,19.0) circle (0.5pt);
        \node[scale=1.1,Brown] at (13.3,19.0) {$\tau_3$};
    \end{axis}
\end{tikzpicture}
		\vspace{-1.2em}
		\caption{\acp{SC} for $l_{ua}$}
		\label{fig:running-ex-scs3}
	\end{subfigure}
	\caption{Selected \acp{SC} from backwards reachability search in the toy example from \Cref{fig:running-example}. Note the unequal axis scales in \Cref{fig:running-ex-scs2}.}
	\label{fig:running-ex-scs}
\end{figure}

\begin{example}
    \label{ex:running-example-expl-resampling}
	We revisit the toy example from \Cref{fig:running-example} in \Cref{fig:running-ex-scs}, where \Cref{fig:running-ex-scs-model} shows a part of the corresponding synchronized IOSA model.
	We denote with $u\xmarkred$/$u\wrench$ and $a\xmarkred$/$a\wrench$ the failure/repair of the \textsf{UPS} and \textsf{AC} \acp{BE}.
	\Cref{fig:running-ex-scs-path} shows the shortest possible sequence from an initial state to the target, which requires two failures with timer values $t_{u\xmarkred}$ near the lower bound of $9.8$ with intermediate repair $t_{u\wrench}$ and the failure $t_{a\xmarkred}$ near the upper bound of $20$ (see \Cref{ex:running-example}).
	The \ac{DBM} zones of
    ${\color{Orchid}D_1}$ to ${\color{Orange}D_5}$ are (computed via backwards search in \cite{dengler:2025:time-sensitive-isplit}/\Cref{sec:app:reachability-state-classes}):
	\begin{footnotesize}
		\begin{itemize}
			\item ${\color{Orchid}D_1} = \{ 0\leq \tau(t_{u\wrench})\leq 0.1 \}$
			\item ${\color{PineGreen}D_2} = \{ 0\leq \tau(t_{u\wrench})\leq 0.1\wedge 0\leq \tau(t_{a\xmarkred}) \leq 0.1\wedge 0\leq \tau(t_{u\wrench}) - \tau(t_{a\xmarkred})\leq 0.1\}$
			\item ${\color{Blue}D_3} = \{0\leq \tau(t_{u\xmarkred})\leq 12\wedge0 \leq \tau(t_{a\xmarkred})\leq 12.1\wedge -0.1\leq \tau(t_{u\xmarkred})-\tau(t_{a\xmarkred})\leq 0 \}$
			\item ${\color{Red}D_4} = \{ 0\leq \tau(t_{u\wrench}) \leq 0.1 \wedge 9.8\leq \tau(t_{a\xmarkred})\leq 12.2 \wedge -12.1\leq \tau(t_{u\wrench}) - \tau(t_{a\xmarkred})\leq -9.8\}$
			\item ${\color{Orange}D_5} = \{0\leq \tau(t_{u\xmarkred})\leq 10.2\wedge 9.8\leq \tau(t_{a\xmarkred})\leq 20 \wedge -12.2\leq \tau(t_{u\xmarkred})-\tau(t_{a\xmarkred})\leq -9.8 \}$
		\end{itemize}
	\end{footnotesize}
	At simulation start, the values for the initial timers $t_{u\xmarkred}$ and $t_{a\xmarkred}$ are (uniformly) sampled from
    ${\color{black!20!gray}D'_1} = \{9.8 \leq \tau(t_{u\xmarkred}) \leq 12 \wedge 18\leq \tau(t_{a\xmarkred}) \leq 20 \}$, i.e., the Cartesian product of the domains of $\tau(t_{u\xmarkred})$ and $\tau(t_{a\xmarkred})$. Assume the following sampled values:
    \begin{footnotesize}
    	\begin{itemize}
    		\item ${\color{Brown}\tau_1} = \{t_{u\xmarkred} \mapsto 9.9, t_{a\xmarkred} \mapsto 19.9\}$ (shorthand notation for a function ${\color{Brown}\tau_1}$ that satisfies ${\color{Brown}\tau_1}(t_{u\xmarkred}) = 9.9$, ${\color{Brown}\tau_1}(t_{a\xmarkred}) = 19.9$, and ${\color{Brown}\tau_1}(t) = \bot$ for any other $t\in \mathcal{T}$) \vspace{-1.2em}
    		\begin{multicols}{2}
    			\item ${\color{Brown}\tau_2} = \{t_{u\xmarkred} \mapsto 9.9, t_{a\xmarkred} \mapsto 18.4\}$
    			\item ${\color{Brown}\tau_3} = \{t_{u\xmarkred} \mapsto 11.5, t_{a\xmarkred} \mapsto 19.0\}$
    		\end{multicols}
    	\end{itemize}
    \end{footnotesize}
	\vspace{-0.5em}
	Then, the distance evaluation for time-sensitive \ac{ISPLIT} operates as follows:
	\begin{itemize}
		\item \emph{Regular simulation:} It holds that ${\color{Brown}\tau_1} \in {\color{Orange}D_5}$, so $d_T(\langle l_{ua}, {\color{Brown}\tau_1}\rangle) = 4$. On the other hand, both ${\color{Brown}\tau_2} \not\in {\color{Orange}D_5}$ and ${\color{Brown}\tau_3} \not\in {\color{Orange}D_5}$, so it holds $d_T(\langle l_{ua}, {\color{Brown}\tau_2}\rangle) > 4$ and $d_T(\langle l_{ua}, {\color{Brown}\tau_3}\rangle) > 4$.
		Eventually, we obtain $d_T(\langle l_{ua}, {\color{Brown}\tau_2}\rangle) = d_T(\langle l_{ua}, {\color{Brown}\tau_3}\rangle) = 14$.
		\item \emph{With resampling:} In the initial state,
        the already elapsed time is $\tau_\textit{elp} = \{t_{u\xmarkred} \mapsto 0, t_{a\xmarkred} \mapsto 0\}$ and the \ac{SC} domain is ${\color{black!20!gray}D'_1}$, which intersects with ${\color{Orange}D_5}$. So, we obtain for all timer evaluations $d^r_T(\langle l_{ua}, {\color{Brown}\tau_1}\rangle) = d^r_T(\langle l_{ua}, {\color{Brown}\tau_2}\rangle) = d^r_T(\langle l_{ua}, {\color{Brown}\tau_3}\rangle) = 4$.
		
		Depending on whether we choose ${\color{Brown}\tau_1}$/${\color{Brown}\tau_2}$ ($t_{u\xmarkred} \mapsto 9.9$) or ${\color{Brown}\tau_3}$ ($t_{u\xmarkred} \mapsto 11.5$), however, the elapsed time $\Delta$ and the induced \ac{SC} in the location $l_{\stkout{u}a}$ differ:
        \begin{itemize}
            \item For ${\color{Brown}\tau_1}$/${\color{Brown}\tau_2}$, $\Delta = 9.9$, the elapsed time information for the subsequent state is $\tau'_{\textit{elp}} = \{t_{u\wrench} \mapsto 0.0, t_{a\xmarkred} \mapsto 9.9\}$ and the domain of the induced \ac{SC} ${\color{black!20!gray}D'_2} = \{0 \leq \tau(t_{u\wrench}) \leq 0.1 \wedge 8.1\leq \tau(t_{a\xmarkred}) \leq 10.1 \}$. The intersection ${\color{black!20!gray}D_2'} \cap {\color{red}D_4} \neq \emptyset$ is non-empty, leading to $d^r_T(s') = 3$ for every $s'\in \langle l_{\stkout{u}a}, {\color{black!20!gray}D_2'}\rangle$.
            \item On the other hand, for ${\color{Brown}\tau_3}$, $\Delta = 11.5$, $\tau'_{\textit{elp}} = \{t_{u\wrench} \mapsto 0.0, t_{a\xmarkred} \mapsto 11.5\}$, and the domain of the induced \ac{SC} is ${\color{black!20!gray}D'_3} = \{0 \leq \tau(t_{u\wrench}) \leq 0.1 \wedge 6.5\leq \tau(t_{a\xmarkred}) \leq 8.5 \}$.
            As the intersection ${\color{black!20!gray}D_3'} \cap {\color{red}D_4} = \emptyset$ is empty, $d^r_T(s') > 3$ for every $s'\in \langle l_{\stkout{u}a}, {\color{black!20!gray}D_3'}\rangle$. With further exploration, we obtain $d^r_T(s') = 13$.
        \end{itemize}
	\end{itemize}
    Summarizing, resampling considers the set of possible timer values in a state, eradicating the need to sample all timer values correctly for a certain importance (see ${\color{Brown}\tau_1}$). Practically, this leads to a lower variance per \ac{ISPLIT} run (see \Cref{sec:experiments}).
\end{example}

\subsection{Models with Unbounded Timers}
\label{sec:resampling-timers:models-infinite-domains}

This subsection is concerned with the timed distance metrics in the special case that each timer in the model has an infinite domain $[0, \infty)$. In fact, this property applies to a plethora of commonly used distributions, e.g., the Exponential, Erlang, or Weibull distribution.
Although we can reuse the backwards reachability metric search for
the timed distance $d_T$ based on exploring the \ac{SC} graph, dealing with \ac{DBM} zones can induce a significant performance hit: For example (see \Cref{sec:background:state-classes}), we perform strict inclusion checking during backwards reachability search with \acp{SC} to limit the explored state space.
However, this approach could be further optimized: We could merge adjacent \ac{DBM} zones (e.g., when $D_1\cup D_2$ can be represented as a \ac{DBM} zone again), or omit an \ac{SC} $\Sigma_1 = \langle l, D_1\rangle$ such that there exist \acp{SC} $\Sigma_2 = \langle l, D_2\rangle$, \ldots, $\Sigma_k = \langle l, D_k\rangle$ (with $k > 2$) with lower distances to the target than $\Sigma_1$ and $D_1 \subseteq D_2 \cup \dots \cup D_k$. Although such operations would be strongly beneficial to reduce the \ac{SC} graph size,
their implementation is known to be considerably hard for \ac{DBM} zones~\cite{david:2005:DBM-merging}, rendering their use impractical.

\myparagraph{Efficient data structure.}
For order-based distance metrics, denoted in the following as $d_{O} \colon \states \rightarrow \mathbb{N}_0 \cup \{\infty\}$ (acronym for \underline{d}istance \underline{o}rdered), we develop a specialized data structure by observing that, in this case, any \ac{DBM} zone can be captured by the union of \emph{elementary} \acp{SC} that store an ordering of timers:
\begin{definition}[Elementary state class for ordered distances]
	For a location $l$ and an ordering of timers $T_l$ encoded as permutation $\sigma : T_l \rightarrow T_l$, the \emph{elementary \ac{SC}} is $\Sigma_{\sigma} = \langle l, D_{\sigma}\rangle$ with $n=|T_l|$ timers and the \emph{elementary domain}
	\begin{align*}
		D_{\sigma} = \bigcap_{1\leq i\leq n - 1} \{ \tau(\sigma(t_i)) \leq \tau(\sigma(t_{i+1})) \} = \{ \tau(\sigma(t_1)) \leq \tau(\sigma(t_2)) \leq ... \leq \tau(\sigma(t_n)) \}\text{.}
	\end{align*}
\end{definition}
Trivially, there exist $|T_l|!$ elementary \acp{SC} for a location $l$ as there are as many permutations for a list of $|T_l|$ elements. By making use of the Lehmer code~\cite{lehmer:1960:teaching-tricks-code}, we can index each permutation, leading to the bijection $\lambda : \textit{Perm}(T_l) \rightarrow \{1, 2, ..., |T_l|!\}$.
To calculate the metric $d_{O}$, instead of working with \ac{DBM} zones directly, we store for each location $l$ an array of size $|T_l|!$ that contains the value of the distance metric $\omega_O(\Sigma)$ for every elementary \ac{SC} $\Sigma$. We start with the elementary \acp{SC} of the target locations and perform backwards reachability search as in \cite{dengler:2025:time-sensitive-isplit}/\Cref{sec:app:reachability-state-classes}.
Consequently, the size of the created \ac{SC} graph lies in $\mathcal{O}(\sum_{l\in \mathcal{L}} |T_l|!)$.

During simulation, the distance $d_{O}$ can be evaluated for a state $s = \langle l, \tau\rangle$ by determining the timer ordering permutation $\sigma$, evaluating the Lehmer code $\lambda(\sigma)$ (implemented in time $\mathcal{O}(|T_l|^2)$), and looking up the distance of $\omega_O(\Sigma_\sigma)$.

\myparagraph{Resampling for ordered distance metrics.} Next, we focus on the following question: What changes when we apply resampling to ordered distance metrics? As every timer is unbounded, resampling can always arbitrarily change the order of timers. This raises the question: When working with unbounded timers, does resorting to time-sensitive \acp{IFUN} even make a difference? The answer is no, which is formalized as follows (where $d^r_O(s) = \min_{s'\in \Sigma_{s_\textit{elp}}} d_{O}(s')$ as in \Cref{def:time-sensitive-dist-resampling}):
\begin{theorem}
    \label{thm:distance-metric-equivalence-order}
	For every state $s\in\states$ it holds that $d(s) = d^r_O(s)$.
\end{theorem}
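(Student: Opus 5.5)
We prove the two inequalities $d(s) \le d^r_O(s)$ and $d^r_O(s) \le d(s)$ separately. Here $d(\langle l,\tau\rangle)=\omega(l)$ is the time-agnostic backwards-reachability distance, i.e.\ the minimal number of transitions in the location graph from $l$ to the target locations. The order-based $d_O$ is obtained by backwards search over elementary \acp{SC}, as described above. Fix $s=\langle l,\tau\rangle$. Since every timer has bounds $a_i = 0$ and $b_i=\infty$, we get $\tau_\textit{min}(t_i)=0$ and $\tau_\textit{max}(t_i)=\infty$. By \Cref{def:induced-sc}, the induced \ac{SC} $\Sigma_{s_\textit{elp}}$ is therefore $\langle l,\R_{\ge0}^{T_l}\rangle$, independently of $\tau_\textit{elp}$. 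Hence $d^r_O(s)=\min_{\tau'}d_O(\langle l,\tau'\rangle)=\min_{\sigma}\omega_O(\Sigma_\sigma)$, where $\sigma$ ranges over all $|T_l|!$ orderings of the timers in $T_l$. This reduces the theorem to showing $\min_\sigma \omega_O(\langle l,D_\sigma\rangle)=\omega(l)$.

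For $\min_\sigma\omega_O \ge \omega(l)$, every \ac{SC}-path found by the backwards search projects to a path of locations connected by transitions $l\xtr{T,a,T'}l'$. Its length is therefore at least the location-graph distance, so $\omega_O(\Sigma_\sigma)\ge\omega(l)$ for every $\sigma$. This step is immediate, since time constraints only restrict which paths are available.

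For the converse, take a shortest location path $l=l_0\to l_1\to\dots\to l_k$ with $k=\omega(l)$. We show by backward induction on $j$ that some elementary \ac{SC} of $l_j$ has distance at most $k-j$. The base case $j=k$ holds because all elementary \acp{SC} of the target location have distance $0$.

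For the inductive step, assume $\langle l_{j+1},D_{\sigma'}\rangle$ has distance $\le k-j-1$, and consider the transition $l_j\xtr{T,a,T'}l_{j+1}$. We need an ordering $\sigma$ of $T_{l_j}$ whose elementary domain lies in the weakest precondition. We construct $\sigma$ concretely. If $T=\{t\}$, put $t$ first. The persistent timers $T_{l_{j+1}}\setminus T'$ follow in the relative order prescribed by $\sigma'$. Timers that are disabled by the transition are placed arbitrarily after $t$. The newly enabled timers $T'$ impose no constraint on $\sigma$, because they are sampled fresh on $[0,\infty)$. They can realise any interleaving with the shifted persistent timers, so the order required by $\sigma'$ is attainable for suitable sampled values. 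For urgent actions ($T=\emptyset$), no time elapses and the persistent order is simply carried over. Input actions cannot remain unresolved in the composed network, so they need no separate treatment.

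The main obstacle is making this predecessor step rigorous with respect to the weakest-precondition semantics of \Cref{sec:app:reachability-state-classes}. The predecessor is required to be contained in the preimage, while the needed fresh values of $T'$ exist only for some samples. We would argue that the backwards search quantifies existentially over the newly sampled timers, i.e.\ it projects them out, which is how \acp{SC} are computed in \cite{dengler:2025:time-sensitive-isplit}. Then unboundedness guarantees that the projection of $D_{\sigma'}$ onto the persistent timers, shifted by the elapse of $t$, contains the full elementary domain $D_\sigma$. Concretely, subtracting $\tau(t)$ preserves order, and $t$ being minimal ensures non-negativity. If the union over orderings of $T'$ is needed, we would note that the elementary \acp{SC} for every interleaving reach the same bound, since the inductive hypothesis can be strengthened to assert distance $\le k-j-1$ for all orderings. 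This strengthened invariant, ``all elementary \acp{SC} of $l_j$ have distance $\le k-j$'', is what we would actually carry through the induction. Instantiating it at $j=0$ gives the converse, which together with the first inequality yields $d(s)=d^r_O(s)$.
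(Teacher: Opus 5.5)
Your overall route is the paper's. First, the induced state class meets every elementary state class of $l$: you observe that it equals $\mathbb{R}_{\ge 0}^{T_l}$, while the paper exhibits the point $\sigma(t_i)\mapsto i$. Second, $\min_\sigma\omega_O(\Sigma_\sigma)\ge\omega(l)$ because state-class paths project to location paths. Third, $\min_\sigma\omega_O(\Sigma_\sigma)\le\omega(l)$ by following a shortest location path backwards. Your first, unstrengthened induction is correct: the invariant is ``some elementary state class of $l_j$ has distance at most $k-j$'', with $\sigma$ putting $t$ first, then the persistent timers in the relative order of $\sigma'$, and disabled timers anywhere after $t$. It is in fact more explicit than the paper, which only asserts that a matching execution sequence exists.

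The step that fails is the strengthened invariant you say you would actually carry through: ``all elementary state classes of $l_j$ have distance $\le k-j$''. It is false. Take a location $l$ with active timers $t_1,t_2$, a transition $l\xrightarrow{\{t_1\},a,\emptyset}l_g$ into a target location, and a transition $l\xrightarrow{\{t_2\},b,\emptyset}l_\bot$ into a location from which no target is reachable. Then $\omega(l)=1$. However, the weakest precondition of $a$ is $\{\tau(t_1)\le\tau(t_2)\}$, so the elementary class $\{\tau(t_2)\le\tau(t_1)\}$ only inherits a distance via $b$, namely $\infty$. The same can happen at $l_{j+1}$ when an ordering puts first a newly enabled timer whose transition leads away from the target. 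So even the restricted ``all interleavings of $T'$'' version fails.

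Fortunately, no strengthening is needed. The first step of the predecessor computation projects $T'$ out existentially. For any fixed $\sigma'$, the projection of $D_{\sigma'}$ onto the persistent timers is exactly the chain of their relative order. This is because each new timer can be chosen inside the gap that $\sigma'$ prescribes, and that gap is nonempty since all lower bounds are $0$ and all upper bounds are $\infty$. Three further steps then yield a zone containing $D_\sigma$: inverse time advancement with $t$ minimal, insertion of disabled timers after $t$, and the vacuous upper-bound step. So drop the last few sentences and run the induction with the weak invariant. It closes for whichever $\sigma'$ the hypothesis supplies.
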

This result has practical implications: When working with unbounded timer domains, it suffices for resampling to construct the cheaper-to-compute regular distance metric. On the other hand, an \ac{ISPLIT} run without resampling is on average faster, albeit with higher variance. The practical performance tradeoff between both approaches is discussed in \Cref{sec:experiments}.

\subsection{Threshold Selection}
\label{sec:resampling-timers:threshold-selection}

Most threshold selection approaches for \ac{ISPLIT} (see \Cref{sec:background:RES}) start in the initial location and estimate the level-up probabilities. This requires that the initial states have (almost) the lowest possible importance value. However, in a time-sensitive setting, the initial states might have different importance values \cite{cerou:2012:sequential-mc, BDH19}.

\emph{Regular simulation.} To find the states in the initial location with lowest importance, we sample $n$ different initial states and record the importance values, from which we select the $k$ lowest ones and note their importance range.
In our experiments, we set $n = 1000$ and $k = 10$, which ensures that a wide range of possible importance values is covered, while preventing too expensive rejection sampling from the states with the lowest importance values.
Only states with an importance value in the range of the $k$ selected states are used as initial states.

\emph{With resampling.} On the other hand, when using resampling, the importance value is the same for any initial state, as the elapsed time information $\tau_{\textit{elp}}$ is equal, and thus the induced \ac{SC} $\Sigma_{\tau_{\textit{elp}}}$. Thus, no adjustments are necessary here.

\section{Considering the Global Age}
\label{sec:global-age}

As the second part of our main contribution, we analyze how to incorporate global time into the time-sensitive framework. For time-bounded reachability analysis, the global time can be used to determine whether the target remains reachable within the available time budget. Otherwise, the corresponding simulation offspring can be pruned, significantly reducing simulation effort.

The basis for including the global time information is by means of an artificial timer that records the remaining time budget until the simulation time bound.
\begin{definition}[Global age timer]
    \label{def:global-age-timer}
    We denote with $t_\textit{age}$ the \emph{global age timer} recording the remaining time budget \emph{upon entering} a state, and with $\mathcal{T}_\textit{age} = \mathcal{T}\cup \{t_\textit{age}\}$ the set of all timers including the global age timer.
\end{definition}
This idea is a standard concept in forwards analysis with \acp{SC} or \acp{SSC} \cite{horvath:2012:transient-sscs,dengler:2024:transient-eval-sscs-sim}.
In our backwards reachability search setting, timer elapsing is followed in the opposite direction. Thus, at the end of simulation, it reaches the lowest value~$0$, resulting in potential values of $\tau(t_\textit{age})$ in $[0, \infty)$.
Consequently, we extend the state space $\states$ by a dimension for $t_\textit{age}$, so $\states_\textit{age} = \mathcal{L} \times ((\mathbb{R}_{\geq 0} \ \dot{\cup}\ \{ \bot \})^{\mathcal{T}} \times \mathbb{R}_{\geq 0})$.

\myparagraph{Distance metrics.} For a global time bound $t_{\textit{age}}^{\textit{max}} \in \mathbb{R}_{\geq 0}$, we extend the distance metrics $d_{T}$ (\Cref{sec:background:state-classes}) and $d^r_{T}$ (\Cref{sec:resampling-timers}, \Cref{def:time-sensitive-dist-resampling}), as well as $d_{O}$ and $d^r_{O}$ (\Cref{sec:resampling-timers:models-infinite-domains}) with the following global age variants returning the minimum number of transitions needed to reach the target 
within the time budget $t_\textit{age}^\textit{max}$:
\begin{itemize}
    \item $d_{TG}$ and $d^r_{TG}$ (for \underline{d}istance \underline{t}imed \underline{g}lobal), considering timer bounds
    \item $d_{OG}$ and $d^r_{OG}$ (for \underline{d}istance \underline{o}rder \underline{g}lobal), neglecting timer bounds
\end{itemize}
Analogously, the concept can be readily extended to transient probabilities, i.e., the probability of being at a target state exactly at a given time $t_\textit{age}^\textit{max}$:
\begin{itemize}
    \item $d_{TEG}$ and $d^r_{TEG}$ (for \underline{d}istance \underline{t}imed \underline{e}xactly \underline{g}lobal), considering timer bounds
    \item $d_{OEG}$ and $d^r_{OEG}$ (for \underline{d}istance \underline{o}rder \underline{e}xactly \underline{g}lobal), neglecting timer bounds
\end{itemize}

\myparagraph{State class adaptations.} For the computation of the distance metrics $\omega_{TG}(\Sigma)$, $\omega_{OG}(\Sigma)$, $\omega_{TEG}(\Sigma)$, and $\omega_{OEG}(\Sigma)$ (naming scheme as above), we distinguish for all \acp{SC} $\Sigma' = \langle l', D'\rangle$ between time-bounded and transient reachability queries:
\begin{itemize}
    \item For time-bounded queries ($\omega_{TG}(\Sigma)$, $\omega_{OG}(\Sigma)$), no adjustments to the target \ac{SC} $\Sigma'$ as calculated in \Cref{sec:app:reachability-state-classes} and \cite{dengler:2025:time-sensitive-isplit} need to be made, as the location $l'$ can be left before the global time bound in a simulation run.
    \item For transient reachability queries ($\omega_{TEG}(\Sigma)$, $\omega_{OEG}(\Sigma)$), we must ensure that a simulation run does not leave $l'$ before the end of the time bound. Thus, we restrict the domain $D'$ of $\Sigma'$ with $D'_{E} = D' \cap (\bigcap_{t\in T_{l'}} \{ \tau(t) \geq \tau(t_{\textit{age}}) \})$.
\end{itemize}
For calculating the predecessor \ac{SC} when inversely following a transition $l\xtr{\scriptscriptstyle T,a,T'}l'$, no adjustments to the procedures in \Cref{sec:app:reachability-state-classes} and \cite{dengler:2025:time-sensitive-isplit} need to be made. The restrictions imposed on $t_\textit{age}$ and the other timers are updated implicitly.

For evaluating the importance of a state under consideration of resampling, one needs to consider: As $t_\textit{age}$ is not a classical timer to which a probability distribution is assigned, it can also not be resampled and has a fixed value of $\tau(t_\textit{age})$. Thus, we obtain as induced \ac{SC} (adapted from \Cref{def:induced-sc}):
\begin{definition}[Induced state class with global age]
    \label{def:induced-sc-age}
    The induced \ac{SC} of $s = \langle l, \tau\rangle$ (with global age) is the \ac{SC} $\Sigma_{s_\textit{elp}} = \langle l, D\rangle$ with domain $D = (\bigtimes_{t_i \in T_{l}} [\tau_{\textit{min}}(t_i), \tau_{\textit{max}}(t_i)]) \times \{\tau(t_\textit{age})\}$.
\end{definition}
The extension of the induced \acp{SSC} (cmp. \Cref{def:induced-ssc}) is analogous. For distance evaluation, we determine the \ac{SC} with the smallest distance to the target from the \ac{SC} graph (with global age) that intersects with the induced \ac{SC}, as in \Cref{sec:resampling-timers}.

\myparagraph{Pruning.} During backwards reachability search, we can prune \acp{SC} that are not reachable in the given time budget. Formally, when it holds for an \ac{SC} $\Sigma = \langle l, D\rangle$ that $\min_{\tau \in D} \tau(t_\textit{age}) > t_\textit{age}^\textit{max}$, no further exploration is needed for this branch: The time budget does not suffice for a simulation run to intersect with $\Sigma$.

Second, when during simulation the distance of a state $s$ evaluates to $\infty$, i.e., no \ac{SC} has been found, a simulation offspring can be pruned.
Note that an unbiased result requires the \ac{SC} graph to be fully constructed.
While this is also theoretically conceivable for distance metrics without incorporation of global timers, it is usually only effective with them, as (i) many models allow reaching the target from any location (pruning is solely possible when incorporating the time bound) and (ii) for tight time bounds, constructing the pruned \ac{SC} graph might be cheaper than constructing the complete \ac{SC} graph without age timer:
\begin{example}
    We revisit the toy example in \Cref{fig:running-example}/\Cref{fig:running-ex-scs}. Constructing the complete backwards \ac{SC} graph without global age results in $14$ locations and $97$ \acp{SC}.
    For time-bounded reachability queries, the \ac{SC} graph with global age and a time limit of $t_{\textit{age}}^{\textit{max}} = 21$ only has $43$ \acp{SC} when pruning \acp{SC} outside the time budget.
    For the initial location $l_{ua}$, the domains ${\color{Blue}D_3}$ and ${\color{Orange}D_5}$ are updated:
    \begin{footnotesize}
		\begin{itemize}
            \item ${\color{Blue}D_3} = \{0\leq \tau(t_{u\xmarkred})\leq 12\wedge0 \leq \tau(t_{a\xmarkred})\leq 12.1\wedge 0 \leq \tau(t_\textit{age}) \wedge -0.1\leq \tau(t_{u\xmarkred})-\tau(t_{a\xmarkred})\leq 0\wedge 0 \leq \tau(t_\textit{age}) - \tau(t_{u\xmarkred})\wedge 0 \leq \tau(t_\textit{age}) - \tau(t_{a\xmarkred}) \}$
			\item ${\color{Orange}D_5} = \{0\leq \tau(t_{u\xmarkred})\leq 10.2\wedge 9.8\leq \tau(t_{a\xmarkred})\leq 20 \wedge 9.8 \leq \tau(t_{\textit{age}}) \wedge -12.2\leq \tau(t_{u\xmarkred})-\tau(t_{a\xmarkred})\leq -9.8 \wedge 9.8 \leq \tau(t_{\textit{age}}) - \tau(t_{u\xmarkred}) \wedge 0 \leq \tau(t_{\textit{age}}) - \tau(t_{a\xmarkred}) \}$
        \end{itemize}
    \end{footnotesize}
    As in the initial state (cmp. domain {\color{black!20!gray}$D'_1$}) $\tau(t_{u\xmarkred}) \geq 9.8$ and $\tau(t_{a\xmarkred}) \geq 18$, a timer evaluation $\tau\in {\color{Orange}D_5}$ must in particular fulfill $\tau(t_\textit{age}) \geq \max(9.8+9.8, 18) = 19.6$. So, the time budget must be at least $19.6$ such that we initially have $d_{TG}(s) = 4$. 
\end{example}

\section{Taxonomy of Distance Metrics}
\label{sec:taxonomy}

\begin{figure}[b]
    \centering
    \includegraphics[width=.8\linewidth]{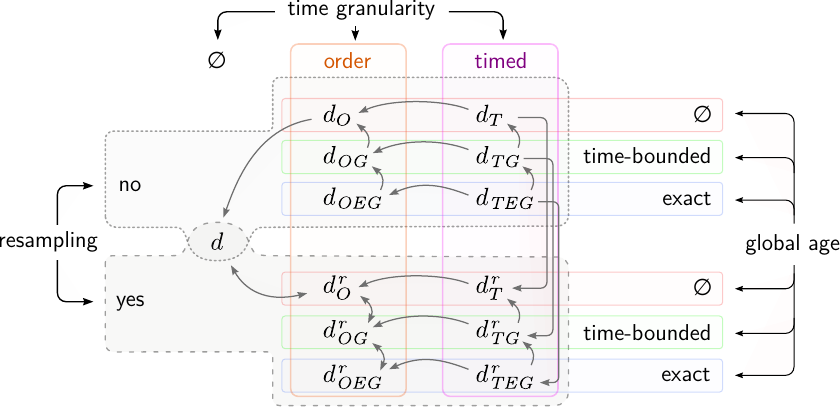}
    \caption{Taxonomy of the derived distance metrics}
    \label{fig:taxonomy}
	\vspace{-1.5ex}
\end{figure}

To categorize the distance metrics from \Cref{sec:background,sec:resampling-timers,sec:global-age} into a \emph{taxonomy}, we address the following question: How does adding information (e.g., timer bounds, global age) affect the resulting \ac{IFUN}? Since these metrics estimate the distance to the target, adding information refines the estimate by excluding unfeasible paths, thus \textit{increasing} the calculated distance. According to this, a lower distance estimate
indicates a coarser approximation that \textit{falsely} suggests a higher importance by ignoring additional information.
This is formalized as follows:
\begin{definition}[Distance metric granularity]
    We say that a distance metric $d_A$ is at least as fine-grained as distance metric $d_B$, denoted $d_A \succeq d_B$, if it holds for all states $s\in \states$ that $d_A(s)\geq d_B(s)$.
\end{definition}
The relation $\preceq$ defines a partial order on the distance metrics. In particular, two distance metrics $d_A$ and $d_B$ can also be equivalent, denoted $d_A \equiv d_B$, when it holds for all states $s\in \states : d_A(s) = d_B(s)$. It has already been proven that $d \equiv d^r_{O}$ in \Cref{thm:distance-metric-equivalence-order}.
Likewise, we can make the following further statements:

\begin{theorem}
	\label{thm:distance-metric-taxonomy}
	The following granularity statements about distance metrics hold:
	\begin{enumerate}
	\item $d \equiv d^r_{O} \equiv d^r_{OG} \equiv d^r_{OEG}$
	    \hfill\textcolor{black!70}{(equivalence for order w/resampling)};
	\item $d_O \preceq d_{OG} \preceq d_{OEG}$, $d_T \preceq d_{TG} \preceq d_{TEG}$, $d^r_T \preceq d^r_{TG} \preceq d^r_{TEG}$
	    \hfill\textcolor{black!70}{(age inclusion)};
	\item $d \preceq d_O$, $d_O \preceq d_T$, $d_{OG} \preceq d_{TG}$, $d_{OEG} \preceq d_{TEG}$, $d \preceq d^r_T$
	    \hfill\textcolor{black!70}{(time granularity)};
	\item $d^r_T \preceq d_T$, $d^r_{TG} \preceq d_{TG}$, $d^r_{TEG} \preceq d_{TEG}$
	    \hfill\textcolor{black!70}{(resampling yes/no)}.
	\end{enumerate}
\end{theorem}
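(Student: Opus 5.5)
The plan is to reduce every item to two generic principles about how the distance metrics arise from backwards reachability search.

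\emph{Monotonicity under restriction.} Suppose metric $d_A$ is obtained from a backwards search whose feasible paths are a subset of those of $d_B$. This happens when timer bounds, the global-age constraint, or the exact-time constraint $D'_E$ are added. Then every path witnessing $d_A(s)=k$ also witnesses $d_B(s)\le k$. Hence $d_A(s)\ge d_B(s)$, i.e.\ $d_B\preceq d_A$.

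\emph{Monotonicity under enlarging the evaluated set.} If $d'(s)=\min_{s'\in S(s)} d(s')$ with $s\in S(s)$, then $d'(s)\le d(s)$.

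I would state these as two small lemmas. The first would be proven by induction on the backwards search. The key fact is that the predecessor (weakest precondition) operator is monotone in its argument domain: if $D'_1\subseteq D'_2$ then $\mathrm{pre}(D'_1)\subseteq\mathrm{pre}(D'_2)$. Intersecting with extra constraints therefore only shrinks the union of state classes with distance $\le k$, for every $k$.

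With these lemmas, the items follow one by one.

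\textbf{Item 2 (age inclusion).} The $G$-variants add the dimension $t_\textit{age}$ and discard runs exceeding the budget. Projecting out $t_\textit{age}$, any $G$-feasible path is feasible without age, which gives $d_O\preceq d_{OG}$ and its analogues. The $EG$-variants additionally intersect the target classes with $D'_E$, so the first lemma gives $d_{OG}\preceq d_{OEG}$. For the resampling variants, the minimum over the induced SC of pointwise-ordered functions preserves the order.

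\textbf{Item 3 (time granularity).} $d\preceq d_O$ holds because the untimed $d$ is the backwards search over locations only, i.e.\ with all timer constraints dropped. $d_O\preceq d_T$ holds because the bounded SCs refine the order-only classes: imposing $[a_i,b_i]$ on top of the orderings restricts the feasible paths. The same argument gives the $G$ and $EG$ versions. For $d\preceq d^r_T$, I combine $d\preceq d_T$ pointwise with the minimum over the induced SC; every $s'$ there has the same location, so $d(s')=d(s)$.

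\textbf{Item 4 (resampling yes/no).} This is the second lemma, provided $s\in\Sigma_{s_\textit{elp}}$. That holds since $\tau(t_i)\in[\max(0,a_i-\tau_\textit{elp}(t_i)),\,b_i-\tau_\textit{elp}(t_i)]$, and with age the fixed coordinate $\tau(t_\textit{age})$ is included.

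\textbf{Item 1 (equivalence).} I would reuse the argument of \Cref{thm:distance-metric-equivalence-order}. With unbounded timers the induced SC projects onto all of $[0,\infty)^{T_l}$, so resampling realizes every timer ordering. Hence $d^r_O(s)=d(s)$.

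For $d^r_{OG}$ and $d^r_{OEG}$ the age coordinate is fixed, which I expect to be the main obstacle. The lower bound $d\preceq d^r_{OG}$ follows from items 2 and 3. For the upper bound I would try a scaling argument. Since only the order matters and timers are unbounded below toward $0$, any order-witness path of length $d(s)$ can be realized with all timer values scaled into $(0,\varepsilon)$, hence within any positive budget $\tau(t_\textit{age})$. For the exact variant, I additionally choose the final target location's remaining timers larger than the residual budget, which is possible since they are unbounded above.

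The delicate cases are a budget of exactly $0$ and urgent or zero-delay steps. I would handle these by noting that the paper takes $\tau(t_\textit{age})$ as the remaining budget on entering a state, so a reachable state with positive budget remains. I would also check that the pruned SC graph of the $G$-variants still contains these scaled witnesses.
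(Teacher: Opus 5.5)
Your proposal is correct and follows essentially the same route as the paper. Items 2 and 3 use the ``fewer restrictions means no shorter path'' argument, item 4 uses the minimum over an induced class containing $s$, and item 1 uses \Cref{thm:distance-metric-equivalence-order} plus time scaling, with the target's timers left unbounded above. The one minor deviation is that you get $d\preceq d^r_T$ directly from $d\preceq d_T$ and the fact that every state in $\Sigma_{s_\textit{elp}}$ shares the location of $s$, while the paper routes it through $d\equiv d^r_O$; your version is slightly more self-contained and also states the membership $s\in\Sigma_{s_\textit{elp}}$ that the paper's item 4 uses only implicitly.
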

While we can compare a distance metric $d_*$ with its resampling variant $d_*^r$, note that, although $d_* \succeq d_*^r$, $d_*^r$ performs better as it incorporates resampling. For performance assessment, one should thus compare solely inside the $d_*$ or $d_*^r$ class.

\Cref{fig:taxonomy} categorizes all distance metrics presented in this paper, 
sorted by whether
\begin{enumerate*}[label=(\roman*),itemjoin={{, }},itemjoin*={{, and }}]
\item	the timer bounds are considered
\item	resampling is used or not
\item	the global age is considered
\end{enumerate*}.
There is no resampling variant of the metric $d$ as $d$ is time-agnostic by default.
In the diagram, 
\tikz[baseline=-.5ex,inner sep=2pt]{
	\node(A) at (0,0) {$d_A$};
	\node(B) at (1,0) {$d_B$};
	\draw [-{Stealth[length=2.2mm]},semithick] (B) -- (A);
}
denotes $d_A \preceq d_B$.

\begin{figure}
	\centering
	\begin{subfigure}{0.29\linewidth}
		\centering
		\scalebox{1}{\begin{tikzpicture}
	\begin{axis}[
		xmin=0, xmax=13.5,
		ymin=0, ymax=12.5,
		axis lines=middle,
		xlabel=$\tau(t_{u\xmarkred})$,
		ylabel=$\tau(t_{a\xmarkred})$,
		xlabel style={xshift=-0.7cm,yshift=-0.1cm,at={(ticklabel cs:1.05)},anchor=west},
		ylabel style={
            at={(axis description cs:-0.4,0.95)},
            rotate=90,
            anchor=center,
        },
		x=0.15cm,
		y=0.15cm,
		grid=both,
		grid style={line width=.1pt, draw=gray!30},
		major grid style={line width=.2pt,draw=gray!50},
		minor tick num=1,
		xtick={0,4,...,12},
		ytick={0,4,...,20},
		xticklabel style={yshift=0.1cm, rotate=-65},
		]
			
		\addplot[
			fill=Orange,
			fill opacity=0.3,
			draw=Orange,
            thick
		] coordinates {
			(0, 0)
			(13, 13)
            (0, 13)
            (0, 0)
		};
	\end{axis}
\end{tikzpicture}}
		\caption{Order alone ($d_O$)}
		\label{fig:granularity-levels:order}
	\end{subfigure}
	\begin{subfigure}{0.31\linewidth}
		\centering
		\scalebox{1}{\begin{tikzpicture}
	\begin{axis}[
		xmin=0, xmax=13.5,
		ymin=0, ymax=12.5,
		axis lines=middle,
		xlabel=$\tau(t_{u\xmarkred})$,
		ylabel=$\tau(t_{a\xmarkred})$,
		xlabel style={xshift=-0.7cm,yshift=-0.1cm,at={(ticklabel cs:1.05)},anchor=west},
		ylabel style={
            at={(axis description cs:-0.4,0.95)},
            rotate=90,
            anchor=center,
        },
		x=0.15cm,
		y=0.15cm,
		grid=both,
		grid style={line width=.1pt, draw=gray!30},
		major grid style={line width=.2pt,draw=gray!50},
		minor tick num=1,
		xtick={0,4,...,12},
		ytick={0,4,...,20},
		xticklabel style={yshift=0.1cm, rotate=-65},
		]
			
		\addplot[
			fill=Blue,
			fill opacity=0.3,
			draw=Blue,
            thick
		] coordinates {
			(0, 0)
			(12, 12)
			(12, 12.1)
			(0, 0.1)
			(0, 0)
			(12, 12)
		};

        \node[scale=1.1,Blue] at (5, 2) {$D_3$};
	\end{axis}
\end{tikzpicture}}
		\caption{With timer bounds ($d_T$)}
		\label{fig:granularity-levels:bounds}
	\end{subfigure}
	\begin{subfigure}{0.36\linewidth}
		\centering
		\begingroup
\tdplotsetmaincoords{45}{27}
\begin{tikzpicture}[
		tdplot_main_coords,
        scale=0.15]

	\foreach \x in {0,4,...,12}
		\foreach \y in {0,4,...,12}
		{
			\draw[very thin, gray!50] (\x,0) -- (\x,12.5);
			\draw[very thin, gray!50] (0,\y) -- (13,\y);
		}
    \foreach \x in {2,6,...,10}
		\foreach \y in {2,6,...,10}
		{
			\draw[very thin, gray!30] (\x,0) -- (\x,12.5);
			\draw[very thin, gray!30] (0,\y) -- (13,\y);
		}

    \foreach \x in {2,6,...,10}
    {
        \draw[very thin, gray] (\x, -0.25) -- (\x, 0.25);
        \draw[very thin, gray] (-0.25, \x) -- (0.25, \x);
    }
    \foreach \x in {4,8,...,12}
    {
        \draw[very thin, gray] (\x, -0.5) -- (\x, 0.5);
        \draw[very thin, gray] (-0.5, \x) -- (0.5, \x);
        \node[rotate=-55] at (\x, -2) {$\x$};
    }
    \node[rotate=10] at (-1.3, 4) {$4$};
    \node[rotate=10] at (-1.3, 8) {$8$};
    \node[rotate=10] at (-1.7, 12) {$12$};

    \draw[very thin, gray] (0, -0.5, 6) -- (0, 0.5, 6);
    \node[rotate=20] at (0, -2, 6) {$6$};
    \draw[very thin, gray] (0, -0.5, 12) -- (0, 0.5, 12);
    \node[rotate=20] at (-1, -2, 12) {$12$};

	\draw[-stealth] (0,0,0) -- (13,0,0) node[anchor=west]{$\tau(t_{u\xmarkred})$};
	\draw[-stealth] (0,0,0) -- (0,12.5,0) node[anchor=south west]{$\tau(t_{a\xmarkred})$};
	\draw[-stealth] (0,0,0) -- (0,0,12.5) node[anchor=south east]{$\tau(t_{\textit{age}})$};

    \draw[fill=Orchid, fill opacity=0.3, draw=Orchid, thick] (0, 0, 0) -- (0, 0.1, 0.1) -- (12, 12.1, 12.1) -- (12, 12, 12) -- cycle;
    \draw[draw=Orchid, thick] (0, 0.1, 13) -- (0, 0.1, 0) -- (12, 12.1, 12) -- (12, 12.1, 13);
    \draw[draw=Orchid, thick] (0, 0, 13) -- (0, 0, 0) -- (12, 12, 12) -- (12, 12, 13);
    \fill[Orchid, fill opacity=0.3] (0, 0, 0) -- (12, 12, 12) -- (12, 12, 13) -- (0, 0, 13) -- cycle;
    \fill[Orchid, fill opacity=0.3] (12, 12, 13) -- (12, 12, 12) -- (12, 12.1, 12) -- (12, 12.1, 13) -- cycle;
\end{tikzpicture}
\endgroup
		\caption{With global age ($d_{TG}$)}
		\label{fig:granularity-levels:global-age}
	\end{subfigure}
	\caption{Changes of domain ${\color{Blue}D_3}$ from \Cref{fig:running-ex-scs} for different distance metrics}
	\label{fig:granularity-levels}
\end{figure}
\begin{example}
	Reconsider the toy example from \Cref{fig:running-example}/\Cref{fig:running-ex-scs}.
	\Cref{fig:granularity-levels} shows how the \ac{SC} $\langle l_{ua}, {\color{Blue}D_3} \rangle$ (\Cref{fig:running-ex-scs3}) from the backwards reachability search (\Cref{fig:running-ex-scs-path}) changes when instead of $d_T$ using a more relaxed ($d_O$, see \Cref{fig:granularity-levels:order}) or stricter ($d_{TG}$, see \Cref{fig:granularity-levels:global-age}) distance metric.
    As can be seen, $d_O$ allows a larger domain than $d_T$, which again allows a larger domain than $d_{TG}$.
\end{example}

\section{Experimental Evaluation}
\label{sec:experiments}

We demonstrate our theory on repairable \acp{DFT} and a queueing network. Our implementation, based on the \textsc{modes} \ac{SMC} tool~\cite{HH14}, extends the prototypical time-sensitive \ac{ISPLIT} from \cite{dengler:2025:time-sensitive-isplit}.
All experiments ran on a 16-core AMD Ryzen 9 7950X3D CPU with 128\:GB of RAM running Ubuntu 22.04.

\subsection{Tested Model Classes}
\label{sec:experiments:models}

\subsubsection{Repairable \texorpdfstring{\acp{DFT}}{DFT}.}

A \emph{\ac{DFT}} is a DAG-structured reliability model that decomposes a main system failure into a combination of partial failures, where the leaves model the failure of basic elements---see \Cref{ex:running-example} and \cite{ruijters2015fault}.
\acp{DFT} with repairs and non-Markovian failure/repair distributions were given formal semantics as \acp{IOSA} in~\cite{MBD20}.
We study time-bounded failure probabilities of two such models:
\begin{wrapfigure}[9]{r}{0.52\textwidth}
	\centering
    \vspace{-5ex}
	\includegraphics[width=.83\linewidth]{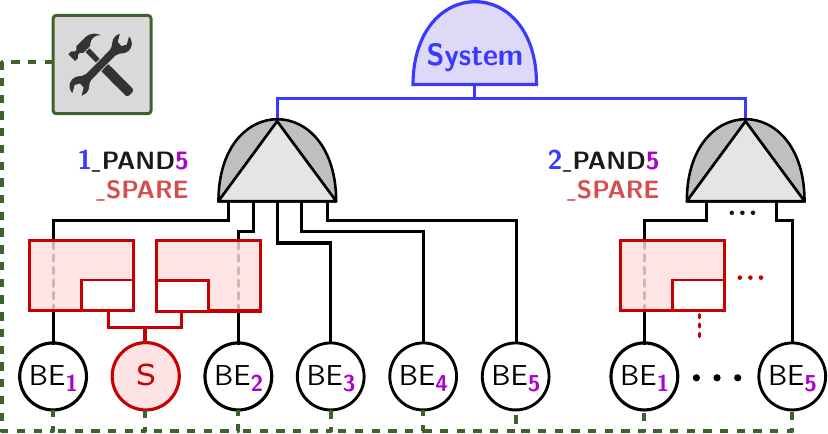}
	\caption{Experimentation model \protect{\modelDFTlarge}}
	\label{fig:modelLarge}
\end{wrapfigure}
\begin{itemize*}[itemjoin={{; }}]
\item[(\modelDFTsmall)]
the case study from \cite{dengler:2025:time-sensitive-isplit}, with 4 \PANDgate gates and uniform fail/repair distributions
\item[(\modelDFTlarge)]
the larger system from \Cref{fig:modelLarge}, with \ANDgate, \PANDgate, and \SPAREgate gates, and Weibull, normal, and exponential fail/dormant/repair distributions.
\end{itemize*}
In both models---described in full detail in \Cref{sec:app:details-experiments}---there is a single \RBOX, making all \acp{BE} and \acp{SBE} interdependent.

\vspace{-2ex}
\subsubsection{Queueing networks.}

\begin{wrapfigure}[6]{r}{0.33\linewidth}
	\centering
	\vspace{-5ex}
	\includegraphics[width=.9\linewidth]{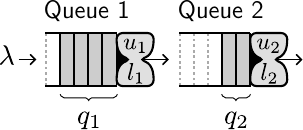}

	\vspace{-.5ex}
	\caption{Tandem queue}
	\label{fig:tandem-queue}
\end{wrapfigure}

A \emph{tandem queue} consists of $n$ queues connected in series, with arrivals (departures) from outside via the first (resp.\ last) queue only.
We experiment on the minimal $n=2$ case from \Cref{fig:tandem-queue} (\modelQueue), with exponentially distributed inter-arrival times at rate $\lambda$, and uniform service times at the $i$-th queue with $l_i<u_i$ bounds.
We study the time-bounded probability of a packet loss in the second queue due to arrivals over a limiting capacity, $q_2\geq C$.

\subsection{Experimental Setting}
\label{sec:experiments:setting}

We study time-bounded events on each model $\modelm\in\{\modelDFTsmall,\modelDFTlarge,\modelQueue\}$:
For the \acp{DFT}, we query the probability $\varphi_{\modelm} = P( \mathop{\modelm}(t)\in\mathit{fail} \mid t\leq T )$ of observing a full system failure in \modelDFTsmall (resp.\ \modelDFTlarge) within $T=1248$ (resp.\ $4500$) time units.
For the tandem queue, we query the overflow probability $\varphi_{\modelm} = P(q_2\geq3\mid t\leq 11)$.

Let \RESrun denote the use of \ac{RES} algorithm $\mathfrak{a}$ with \ac{IFUN} \IFUNcase.
To measure the quality of (our implementation of) \IFUNcase for \ac{ISPLIT} \ac{RES}, we run \RESrun with \textsc{modes} for 20\:min on each model \modelm, producing a 95\% confidence interval $\hat{\varphi}_{\modelm}^{\RESrun}$ whose width ${\left|\hat{\varphi}_{\modelm}^{\RESrun}\right|}$ is a quality metric of \RESrun.
We normalize these values using the relative error $\mathit{re}_{\modelm}^{\RESrun} \doteq {\left(\sfrac{1}{2}\cdot\left|\hat{\varphi}_{\modelm}^{\RESrun}\right|\right)} / {\varphi_{\modelm}}$, where the ground truth $\varphi_{\modelm}$ is approximated via 12-hour-long \ac{CMC} runs: 
\emph{A more efficient \RESrun yields a smaller~$\mathit{re}_{\modelm}^{\RESrun}$.}

\begin{wraptable}[7]{r}{.38\linewidth}
  \centering
  \smaller[1.5]
  \def\yep{\ensuremath{\boldsymbol{\checkmark}}\xspace}
  
  \caption{\smaller[.5]\acp{IFUN} tested}
  \label{tab:experiments:IFUNs}
  \vspace{-2ex}
  \begin{tabular}{l@{\:}l@{~}c@{~}c@{~}r}
    \toprule
    \ac{IFUN}  &         & $d_{*G}$ & $d_*^r$ & \textsf{prune}     \\
    \midrule
    \tagncomp  &         &          & \yep    &                    \\
    \tagnmono  & ($d$)   &          & \yep    &                    \\
    \tsenorder & ($d_O$) & \yep     &         & $\{d_{OG}\}$       \\
    \tsentimed & ($d_T$) & \yep     & \yep    & $\{d_{TG},d_{TG}^r\}$ \\
    \bottomrule
  \end{tabular}
\end{wraptable}
Concretely, we use the \ac{RES} algorithms Fixed Effort and \MakeUppercase{restart}~\cite{LLLT09,villen-altamirano:1991:RESTART}, selecting thresholds with lightweight Sequential Monte Carlo~\cite{cerou:2012:sequential-mc}, for efforts 8 and 15 for Fixed Effort and 2 and 3 for \MakeUppercase{restart}, i.e.\ $\mathfrak{a} \in \mathfrak{A}\doteq\{\RESalgo[8]{FE},$ $\RESalgo[15]{FE},\RESalgo[2]{RST}, \RESalgo[3]{RST}\}$~\cite{BDH19}.
We then aggregate all values per $\ac{IFUN}\times\text{model}$ into a single scalar
$\mathit{re}_{\modelm}^{\mathrm{\Phi}}=\mathop{\Gamma}_{\mathfrak{a\in A}}\{\mathit{re}_{\modelm}^{\RESrun}\}$
that demonstrates the quality of $\mathrm{\Phi}$ for \ac{ISPLIT}.
We use two aggregation functions:
$\Gamma=\mathrm{median}$ to measure expected quality, and $\Gamma=\min$ to observe best-case possibilities.
In all cases \emph{a better \ac{IFUN} \IFUNcase yields a smaller~$\mathit{re}_{\modelm}^{\IFUNcase}$.}

We thus use relative errors to compare 13 \acp{IFUN} that \Cref{tab:experiments:IFUNs} maps to the taxonomy from \Cref{fig:taxonomy}:
\begin{itemize*}[label=\textbullet,itemjoin={{; }}]
\item
For the time-agnostic functions \tagnmono (the distance metric $d$) and \tagncomp (counting $q_2$ for \modelQueue, or a compositional \ac{IFUN} for the \acp{DFT}, see \Cref{sec:app:comp-ifun} and \cite{BDMS22}), we also test their resampling variants $d_*^r$
\item
For the (time-sensitive) order metrics \tsenorder ($d_O$), we also test the global-timer variant $d_{OG}$, including a variant $\tsenorder_{\mathsf{global\mathop{\scriptscriptstyle\circ}prune}}$
that prunes the runs which provably cannot reach the target in the remaining time
\item
For the (time-sensitive) timed metrics \tsentimed ($d_T$), we test all combinations of resampling, global timer, and pruning
\end{itemize*}.
Since we are evaluating time-bounded queries, we do not test exact global age \acp{IFUN} ($d_{*EG}$ in \Cref{fig:taxonomy}).
We further omit experiments with some equivalent specimens from the taxonomy, e.g.\ $d_O^r$ and $d_{OG}^r$.

\subsection{Experimental Results}
\label{sec:experiments:results}

\begin{figure}[t]
	\centering
	\includegraphics[width=\linewidth]{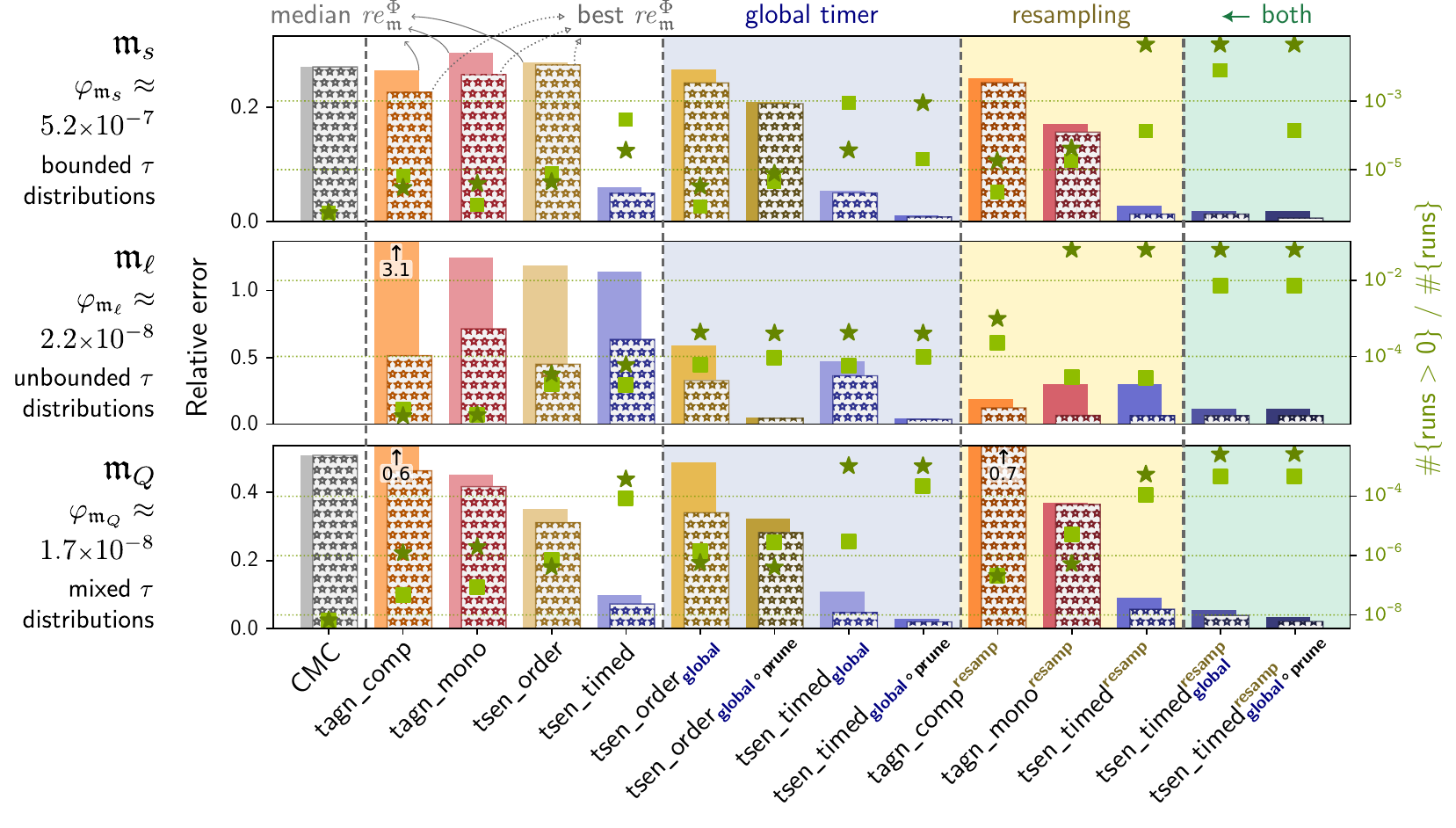}
	\caption{Relative error for \modelDFTsmall, \modelDFTlarge, and \modelQueue (lower is better)}
	\label{fig:experiments:results}
	\vspace{-3ex}
\end{figure}

\Cref{fig:experiments:results} shows the results of our experiments, for approximated ground truths $\varphi_{\modelDFTsmall}\approx5.2\eminus{7}$, $\varphi_{\modelDFTlarge}\approx2.2\eminus{8}$, and $\varphi_{\modelQueue}\approx1.7\eminus{8}$.
The height of a bar indicates the relative error achieved by each \ac{IFUN} \IFUNcase and \ac{CMC}:
solid plain-color bars in the back stand for $\mathrm{median}$ $\mathit{re}_{\modelm}^{\IFUNcase}$ values, and star-hatched bars in the front stand for $\min$ (``best'') values.
The secondary $y$-axis plots the proportion of ``useful'' runs that observed a rare event, with \textcolor[HTML]{8EBC00}{$\blacksquare$} and \textcolor[HTML]{648500}{\raisebox{.2ex}{$\bigstar$}} respectively for the median and best runs.
For example, for model \modelDFTlarge and \ac{IFUN} $\tsentimed_{\mathsf{global}}$---the most central blue bars in the middle plot in \Cref{fig:experiments:results}---the median and best relative errors were\ 0.47 and 0.36, with resp.\ proportions of useful runs of $5.9\eminus{5}$ and $4.2\eminus{4}$.
\Cref{tab:experiments:results} in \Cref{sec:app:details-experiments} shows the numerical results used to compile \Cref{fig:experiments:results}.

Since timers in \modelDFTsmall are bounded (uniform distributions),
the impact of time-granularity (\Cref{thm:distance-metric-taxonomy}) in the upper plot is apparent.
Visual comparisons should be done along a single dimension of \Cref{fig:taxonomy}, e.g.\ either $d_O$\:vs.\:$d_T$ (without global timer) or else $d_{OG}$\:vs.\:$d_{TG}$.
For instance, $d_O\preceq d_T$ means that \tsenorder should be less efficient than \tsentimed to estimate $\varphi_{\modelDFTsmall}$, which is evidenced by the higher bars of \tsenorder in the upper plot of \Cref{fig:experiments:results}.
Comparing $\tsenorder_{\mathsf{global}}$ vs.\ $\tsentimed_{\mathsf{global}}$ tells the same story, and can be explained by $d_{OG}\preceq d_{TG}$.
Analogous results can be seen in the lower plot corresponding to \modelQueue, where the service times also follow bounded distributions.
In contrast, the unbounded timers of \modelDFTlarge make the difference due to time granularity much less pronounced.

In cases like \modelDFTlarge, where specific timer constellations must be reached in an asymmetric state space, resampling achieves the biggest performance boost.
This applies to both time-agnostic and -sensitive \acp{IFUN} to a similar degree (in compliance with \Cref{thm:distance-metric-equivalence-order} for models with unbounded timer domains), as evidenced by the lower bars of the $d^r_\ast$ variants for \tagnmono, \tagncomp, and \tsentimed in the middle plot of \Cref{fig:experiments:results}.
Computational overhead also has a big impact in this (largest) model: for both $d_{OG}$ and $d_{TG}$, pruning runs that cannot reach the rare event before the time limit produces narrower \acp{CI}.
Note that this is purely a matter of achieving more runs in the given simulation budget, even though the proportion of useful runs generated is the same with or without pruning.

In contrast to these \acp{DFT}, queueing systems like \modelQueue have a fully connected and symmetrical state space, where resampling has a much lower impact.
However, early pruning significantly increases the number of executed runs and thus yields more accurate estimates, as we observe that $\tsenorder_{\mathsf{global\mathop{\scriptscriptstyle\circ}prune}}$ has lower relative errors than \tsenorder alone, and analogously for \tsentimed.

\myparagraph{On the representativeness of our experiments.}
These results demonstrate many practical implications of the taxonomy in \Cref{sec:taxonomy}, in a reduced but varied set of models.
We highlight that, despite getting results that we can explain with \Cref{sec:taxonomy}, we only have a partial empirical view.
For instance, to compare functions like $\tsenorder_{\mathsf{global\mathop{\scriptscriptstyle\circ}prune}}$ and $\tsentimed^{\mathsf{resamp}}_{\mathsf{global}}$, our numerical results do not allow us to discriminate how much performance gain is due to information depth (\ac{IFUN} granularity) and how much to computational overhead (resampling, pruning).
A more nuanced answer to these practical questions requires a much larger experimental setup, which falls outside the scope of this contribution.

\section{Conclusion}
\label{sec:conclusion}

In this paper, we substantially extended the original concept of time-sensitive \ac{ISPLIT}~\cite{dengler:2025:time-sensitive-isplit}
by considering resampling and information about the global age. We discovered different distance metrics and explored the tradeoff between information depth and computational costs.
We demonstrated the feasibility and effectiveness of our approach for a relevant class of repairable \ac{DFT} models.

\paragraph{Limitations and future work.}
While the experimental evaluation demonstrated significant performance improvements compared to existing approaches, these improvements have thus far only been established experimentally. Consequently, deriving theoretical performance guarantees under a controlled setting, analogous to previous \ac{ISPLIT} work \cite{lecuyer:2010:asymptotic-robustness-res, glasserman:1999:multilevel-splitting},
remains an important avenue for future research.
Methodically, instead of fixing a concrete time for every elapsed timer in a step (see \Cref{def:elapsed-time}), one can do this only for a subset of the elapsed timers, effectively obtaining a mixture between \ac{SSC} analysis and simulation.

\renewcommand{\thelstlisting}{\arabic{lstlisting}}

\paragraph{Data availability statement.}
An artifact for experimental reproduction of the results in \Cref{sec:experiments} is publicly available at DOI~\protect{\href{https://doi.org/10.5281/zenodo.19616627}{\color{blue}\texttt{10.5281/zenodo.19616627}}}.

\paragraph{Acknowledgments.} We thank Pedro R. D’Argenio from Universidad Nacional de Córdoba for fruitful discussions about the relationship between time-sensitive \ac{ISPLIT} and resampling.

\begin{acronym}[ABCDEFGHIJK]
    \acro{AP}{atomic proposition}
    \acro{BE}[\textsf{BE}]{basic element}
    \acro{BFS}[BFS]{breadth-first search}
    \acro{BM}[BM]{Bernstein mixed polynomial and expolynomial}
    \acro{BP}[BP]{Bernstein polynomial}
    \acro{CDF}[CDF]{cumulative distribution function}
    \acro{CI}[CI]{confidence interval}
    \acro{CMC}[CMC]{crude Monte Carlo}
    \acro{CSL}{continuous stochastic logic}
    \acro{CTMC}[CTMC]{continuous-time Markov chain}
    \acro{DAG}[DAG]{directed-acyclic graph}
    \acro{DET}[DET]{deterministic}
    \acro{DBM}[DBM]{difference bound matrix}
    \acro{DFT}[DFT]{dynamic fault tree}
    \acro{DKW}[DKW]{Dvoretzky–Kiefer–Wolfowitz}
    \acro{DSPN}[DSPN]{deterministic and stochastic Petri net}
    \acro{DTMC}[DTMC]{discrete-time Markov chain}
    \acro{EXP}[EXP]{exponential}    
    \acro{ERTMS}[ERTMS]{European Rail Traffic Management System}
    \acro{ES}[ES]{expected success}
    \acro{ETCS}[ETCS]{European Train Control System}
    \acro{FE}[FE]{fixed effort}
    \acro{FS}[FS]{fixed success}
    \acro{FTA}[FTA]{fault tree analysis}
    \acro{FT}[FT]{fault tree}
    \acro{FW}[FW]{Floyd-Warshall}
    \acro{GEN}[GEN]{general}
    \acro{GSMP}[GSMP]{generalized semi-Markov process}
    \acro{GSPN}[GSPN]{generalized stochastic Petri net}
    \acro{HPnG}[HPnG]{hybrid Petri net with general transitions}
    \acro{IFUN}[IFUN]{\textit{importance function}}
    \acro{IMM}[IMM]{immediate}
    \acro{IOSA}[IOSA]{input/output stochastic automaton}
    \acro{IS}[IS]{importance sampling}
    \acro{ISPLIT}[ISPLIT]{importance splitting}
    \acro{JPD}[JPD]{joint probability distribution}
    \acro{MCMC}{Markov chain Monte Carlo}
    \acro{MC}[MC]{Monte Carlo}
    \acro{MH}{Metropolis-Hastings}
    \acro{SMP}[SMP]{semi-Markov process}
    \acro{MRGP}[MRGP]{Markov regenerative process}
    \acro{MRnP}[MRnP]{Markov renewal process}
    \acro{MSE}[MSE]{mean-squared error}
    \acro{NDA}{non-deterministic analysis}
    \acro{PTA}[PTA]{probabilistic timed automaton}
    \acro{PTCTL}[PTCTL]{probabilistic \ac{TCTL}}
    \acro{PDF}[PDF]{probability density function}
    \acro{PN}[PN]{Petri net}
    \acro{PLT}[PLT]{parametric location tree}
    \acro{RESTART}[RESTART]{repetitive simulation trials after reaching thresholds}
    \acro{RES}[RES]{rare event simulation}
    \acro{SBE}[\textsf{SBE}]{spare basic element}
    \acro{SC}[SC]{state class}
    \acro{SCG}[SCG]{state class graph}
    \acro{SEQ}[SEQ]{sequential Monte Carlo}
    \acro{SMC}[SMC]{statistical model checking}
    \acro{SSC}[SSC]{stochastic state class}
    \acro{STA}[STA]{stochastic timed automaton}
    \acro{STPN}[STPN]{stochastic time Petri net}
    \acro{TA}[TA]{timed automaton}
    \acro{TCTL}[TCTL]{timed computation tree logic}
	\acro{TLE}[TLE]{top-level event}
    \acro{TPN}[TPN]{time Petri net}
    \acro{VI}[VI]{value iteration}
    \acro{i.i.d.}[i.i.d.]{independent and identically distributed}
    \acro{w.l.o.g.}[w.l.o.g.]{without loss of generality}
    \acro{RV}[RV]{random variable}
    \acro{LCM}[LCM]{least common multiple}
    \acroplural{BM}[BMs]{Bernstein mixed polynomials and expolynomials}
    \acroplural{BP}[BPs]{Bernstein polynomials}
    \acroplural{CI}[CIs]{confidence intervals}
    \acroplural{CTMC}[CTMCs]{continuous-time Markov chains}
    \acroplural{DAG}[DAGs]{directed-acyclic graphs}
    \acroplural{DBM}[DBMs]{difference bound matrices}
    \acroplural{DTMC}[DTMCs]{discrete-time Markov chains}
    \acroplural{FT}[FTs]{fault trees}
    \acroplural{GSPN}[GSPNs]{generalized stochastic Petri nets}
    \acroplural{HPnG}[HPnGs]{hybrid Petri nets with general transitions}
    \acroplural{IOSA}[IOSA]{input/output stochastic automata}
    \acroplural{MRGP}[MRGPs]{Markov regenerative processes}
    \acroplural{MRnP}[MRnPs]{Markov renewal processes}
    \acroplural{PTA}[PTA]{probabilistic timed automata}
    \acroplural{PN}[PNs]{Petri nets}
    \acroplural{RV}[RVs]{random variables}
    \acroplural{SC}[SCs]{state classes}
    \acroplural{SMP}[SMPs]{semi-Markov processes}
    \acroplural{SSC}[SSCs]{stochastic state classes}
    \acroplural{STA}[STA]{stochastic timed automata}
    \acroplural{STPN}[STPNs]{stochastic time Petri nets}
    \acroplural{TA}[TA]{timed automata}
	\acroplural{TLE}[TLEs]{top-level events}
    \acroplural{TPN}[TPNs]{time Petri nets}
\end{acronym}

\bibliographystyle{splncs04}
\bibliography{references.bib}

\appendix
\renewcommand{\theHsection}{A\arabic{section}}

\section{Backwards Reachability Calculations with State Classes}
\label{sec:app:reachability-state-classes}

We recall the calculations of predecessor \acp{SC} in an \ac{IOSA} from \cite{dengler:2025:time-sensitive-isplit}.

\myparagraph{Target \acp{SC}.} 
For each target location~$l'$, we build the largest possible domain $D'$ for the active timers in~$l'$, considering only whether the target location~$l'$ is reachable from these evaluations of timers, not whether these evaluations of timers are reachable from an initial state.
To this end, we limit each timer $t_i \in T_{l'}$ by its upper bound 
$b_i \in \mathbb{Q}_{\geq 0} \cup \{\infty\}$, obtaining the \ac{SC} $\Sigma' = \langle l', D'\rangle$ with
$D' = \bigtimes_{t_i \in T_{l'}} [0,b_i] \subseteq (\mathbb{Q}_{\geq 0} \cup \{\infty\})^{|T_{l'}|}$.
Note that $D'$ does not introduce dependencies between the involved timers, so $D'$ has a hyper-rectangular shape.
Conversely, we do not limit $t_i$ by its lower bound $a_i \in \mathbb{Q}_{\geq 0}$ given that, depending on the sequence of executed transitions, it may not be \textit{newly} activated in~$l'$ (i.e., after the last incoming transition).

\myparagraph{Predecessor \ac{SC}.} Given an 
\ac{SC} $\Sigma' = \langle l', D'\rangle$ and an incoming transition $l\xtr{\scriptscriptstyle T,a,T'}l'$, we compute the largest
possible \ac{SC} $\Sigma = \langle l, D\rangle$ such that for any state $s$ 
in $\Sigma$, the action $a$ leads to a state $s'$ in $\Sigma'$ (we could say that $\Sigma$ is the \emph{weakest precondition} of $\Sigma'$ given~$a$). 
At each step of this derivation, we normalize the
\ac{DBM} domain of the \ac{SC} being computed using the Floyd-Warshall algorithm~\cite{vicario:2001:tpn-analysis} (as in 
forwards analysis), ensuring the uniqueness of the result and the correct resolution of all transitive restrictions. 
In the following derivation steps, let $\boldsymbol{\tau}$ and $\boldsymbol{\tau}'$ be the vectors of random variables containing the active timers in $l$ and $l'$, respectively:

\begin{enumerate}
    \item \label{en:sc-step1} \emph{Inverse of newly activating:} 
    Let $\boldsymbol{\tau}' := \langle t_2', \ldots, t_{n+m}' \rangle$ and $t_{n+1}', \ldots, t_{n+m}' \in T'$ be the $m$ newly activated timers in $l'$.
    We limit each such timer by its lower bound (as no time elapses from new activation to entrance in $\Sigma'$), obtaining 
    \[
    \boldsymbol{\tau}_a :=
    \langle t_2^a, ..., t_n^a, t_{n + 1}^a, ..., t_{n+m}^a \rangle
    =
    \boldsymbol{\tau}' \,\mid\, t_{n+1}', ..., t_{n+m}' \in T'\text{,}
    \]
    with support
    \[D_a = D' \cap \{ a_i \leq \tau(t_i')\ \forall\, i \in \{n + 1, ..., n+m\} \}\text{.}\]
    After normalizing $D_a$, we remove 
    $t_{n+1}^a, \ldots, t_{n+m}^a$
    from $\boldsymbol{\tau}_a$, yielding 
    \[
    \boldsymbol{\tau}_b := 
    \langle t_2^b, \ldots, t_n^b \rangle =
    \langle t_2^a, \ldots, t_n^a \rangle
    \]
    with support:
    \begin{align*}
        D_b &= \{ \langle \tau(t_2^a), ..., \tau(t_n^a) \rangle\\
        &\strut\qquad \text{s.t. } \exists\,\splitatcommas{\tau(t_{n+1}^a), ..., \tau(t_{n+m}^a)}\\
        &\strut\qquad \text{s.t. } \langle \tau(t_2^a), ..., \tau(t_n^a), \tau(t_{n+1}^a), ..., \tau(t_{n+m}^a)\rangle \in D_a \}
    \end{align*}
    \item \label{en:sc-step2} \emph{Inverse of time advancement:} 
    We increase the timers by the timer expiring first in $\Sigma$, say $t_1^b$, obtaining
    \[
    \splitatcommas{\boldsymbol{\tau}_c
    := 
    \langle 
    t_1^{c}, \ldots, t_n^{c} \rangle
    = 
    \langle t_1^b, t_2^b + t_1^b, \ldots, t_n^b+t_1^b
    \rangle} 
    \] 
    with support:
    \begin{align*}
        D_c &= \{ \langle \tau(t_1^b), \tau(t_2^b), \ldots, \tau(t_n^b) \rangle\\
        &\strut\qquad \text{s.t. } \tau(t_1^b) \in [0, \infty) \\
        &\strut\qquad \text{s.t. } \langle \tau(t_2^b) - \tau(t_1^b), \ldots, \tau(t_n^b) - \tau(t_1^b)\rangle \in D_b\}
    \end{align*}
    \item \label{en:sc-step3} \emph{Applying upper bounds of timers:} After normalizing $D_c$, we limit each active timer by its upper bound (to guarantee the timer does not exceed the bound due to the time retardation), obtaining 
    \[\boldsymbol{\tau} 
    := \langle t_1, ..., t_n\rangle
    = \boldsymbol{\tau}_c \,\mid\ 
    t_i^c \leq b_i 
    \ \forall\, i \in \{1, ..., n\}\]
    with the final domain (which is finally normalized, similarly to the previous passages):
    \[D = D_c \cap \{\tau(t_i^c) \leq b_i\ \forall\, i \in \{1, ..., n\}\}\]
\end{enumerate}
Note that
there is no need for a conditioning step to guarantee that the 
timer $t_1^b$ expiring first
actually elapses
before the other timers. 
In fact, the non-negativity of the timer evaluations in domain $D_b$, 
i.e.,~$0\leq \tau(t^b_i) - \tau(t^b_1)$
$\forall\,i\in\{2, ..., n\}$,
already implies that 
$\tau(t^b_1) \leq \tau(t^b_i)$
$\forall\,i\in\{2, ..., n\}$.

\myparagraph{Disabled timers.}
As mentioned in \Cref{sec:background:IOSA}, there are cases occurring in the repairable \ac{DFT} semantics in \cite{MBD20}, although not strictly being part of the original \ac{IOSA} definition, where a set of timers, say $t_2, ..., t_{p+1} \in T_d$, is disabled while taking a transition.
In this case, only the elapsing timer was added in step~\ref{en:sc-step2} (inverse of time advancement).
Thus, we need to insert the $p$ disabled timers $t_2, ..., t_{p+1}$ after step \ref{en:sc-step2} into the zone $D_c$ while ensuring that these elapse after $t_1$.
Formally, we expand the set of timers $\boldsymbol{\tau}_c = \langle t_1^c, t_{p+2}^c, ..., t_n^c\rangle$, obtaining
\[
    \splitatcommas{\boldsymbol{\tau}_d
    := 
    \langle 
    t_1^{d}, t_2^{d}, \ldots, t_n^{d} \rangle
    = 
    \langle t_1^c, t_2, \ldots, t_{p+1}, t_{p+2}^c, \ldots, t_n^c
    \rangle}
\]
with support:
\begin{align*}
    D_d &= \{ \langle \tau(t_1^d), \tau(t_2^d), \ldots, \tau(t_n^d) \rangle\\
        &\strut\qquad \text{s.t. } \forall\,i\in \{2, ..., p+1\} : \tau(t_i^d) \in [0, \infty) \wedge \tau(t_1^d) \leq \tau(t_i^d) \\
        &\strut\qquad \text{s.t. } \langle \tau(t_1^d), \tau(t_{p+2}^d), \ldots, \tau(t_n^d)\rangle \in D_c\}
\end{align*}
In step \ref{en:sc-step3} (applying upper bounds of timers), we then have to work with $\boldsymbol{\tau}_d$ instead of $\boldsymbol{\tau}_c$ and $D_d$ instead of $D_c$, respectively.

\section{Resampling Implementation}
\label{sec:app:resampling-implementation}

As each timer can be resampled individually, we solely have to answer how to sample from a truncated timer distribution.
Formally, for a timer $t_i$, let $X_i$ be the random variable distributed according to the \ac{CDF} $F_i$. Analogously, let $Y_i = (X_i \mid X_i \geq \tau_{\textit{elp}}(t_i)) - \tau_{\textit{elp}}(t_i)$ be the random variable that is based on $X_i$, but truncates all values below $\tau_{\textit{elp}}(t_i)$ and subtracts the elapsed time. Sampling of $Y_i$ corresponds to resampling the value of timer $t_i$.
Exponential distributions can be resampled without adaptations due to the memoryless property as it holds here that $X_i = Y_i$. Furthermore, for the case $\tau_{\textit{elp}}(t_i) \leq a_i$, it holds $(X_i \mid X_i \geq \tau_{\textit{elp}}(t_i)) = X_i$, so we can sample from the distribution $X_i$ and subtract the elapsed time $\tau_{\textit{elp}}(t_i)$ to obtain a sample from $Y_i$. Otherwise, as in \cite{BDMS22}, we distinguish two cases:
\begin{itemize}
	\item For distributions where the inverse \ac{CDF} $F_i^{-1}\colon [0,1]\rightarrow \mathbb{R}_{\geq0}$ exists or can be approximated efficiently (e.g., Uniform, Normal, or Rayleigh distribution), we can compute a sample for $(X_i \mid X_i \geq \tau_{\textit{elp}}(t_i))$ via inverse transform sampling as follows: First, create a random value $u\sim \textsc{Unif}[0,1]$, and then create a sample for $(X_i \mid X_i \geq \tau_{\textit{elp}}(t_i))$ by evaluating $F^{-1}(u + (1 - u)\cdot F(\tau_{\textit{elp}}(t_i)))$.
	\item Otherwise (e.g., Erlang or Weibull distribution), we evaluate $(X_i \mid X_i \geq \tau_{\textit{elp}}(t_i))$ by rejection sampling, i.e., sample values of $X_i$ until one is at least $\tau_{\textit{elp}}(t_i)$ and then calculate $Y_i$. To limit the computational complexity beforehand, we fix a maximum number of sampling attempts (in our case $k = 20$) and resort to the old value $X_i$ if no suitable sample has been found.

    Note that limiting the number of sampling attempts to $k$ retains the old sample with a probability of $F(\tau_{\textit{elp}}(t_i))^k$ instead of creating a new one. Although the old sample is used, the timer evaluation still represents an unbiased sample from the conditioned distribution $(X_i \mid X_i \geq \tau_{\textit{elp}}(t_i))$ and does ultimately yield unbiased simulation results:
    As a time of $\tau_{\textit{elp}}(t_i)$ has elapsed for timer $t_i$, the old sampled timer value is distributed according to $(X_i \mid X_i \geq \tau_{\textit{elp}}(t_i))$. Furthermore, the probability that the old sample is retained is independent of the old sampled timer value.
\end{itemize}

\section{Omitted Proofs}
\label{sec:app:omitted-proofs}

\begin{proof}[\Cref{thm:distance-metric-equivalence-order}]
	Remember that $\omega(l)$ (distance of each location to the target) is calculated by backwards reachability search from the target states. Consider a state $s = \langle l, \tau\rangle$.
    First, we show that, for each state $s$, the distance $d(s)$ is an upper bound of the ordered distance with resampling $d_O^r(s)$. Specifically, let $\omega(l)=n$. By definition of backwards reachability, there exists a shortest sequence of actions and locations $l_n \stackrel{a_n}{\rightarrow} l_{n-1} \stackrel{a_{n-1}}{\rightarrow} l_{n-2} \stackrel{a_{n-2}}{\rightarrow} \dots \stackrel{a_1}{\rightarrow} l_0$ such that $l_n=l$, $l_0$ is a target location, and $\omega(l_i)=i$ for all $i\in \{0, ..., n\}$. Then, we consider the corresponding sequence of elementary \acp{SC} $\Sigma_n \stackrel{a_n}{\rightarrow} \Sigma_{n-1} \stackrel{a_{n-1}}{\rightarrow} \Sigma_{n-2} \stackrel{a_{n-2}}{\rightarrow} \dots \stackrel{a_1}{\rightarrow} \Sigma_0$ with $\omega_O(\Sigma_i) = i$ for $i\in \{0, ..., n\}$. Given that all timers are unbounded, no timer is disabled due to clock bounds. Consequently, a valid execution sequence matching the sequence of actions and locations is guaranteed to exist, implying that $\omega_O(\Sigma_n)\leq n$. Since $\omega(l)$ represents the shortest distance unconstrained by timer values, $\omega_O(\Sigma_n)$ cannot be strictly lower than $n$, implying that there exists an elementary \ac{SC} $\Sigma_{\sigma}$ s.t. $\omega(l) = \omega_O(\Sigma_{\sigma})$.
    
    Second, via resampling, independent of the elapsed time $\tau_{\textit{elp}}$, the induced \ac{SC} $\Sigma_{s_{\textit{elp}}}$ always intersects with any of the $|T_l|!$ elementary \acp{SC} of location $l$. For reference, consider any timer permutation $\sigma$. Define $\tau = \{\sigma(t_1) \mapsto 1, \sigma(t_2) \mapsto 2, ..., \sigma(t_n) \mapsto n \}$. Then, $\tau$ is both contained in $\Sigma_{\sigma}$ (the order of the timer values is trivially given) and $\Sigma_{s_{\textit{elp}}}$ (the lower bound is always zero, and no upper bound exists as we are solely working with unbounded timers), so $\Sigma_{\sigma}$ and $\Sigma_{s_{\textit{elp}}}$ intersect.
    
    Combining both arguments, it follows directly that $d(s) = d^r_O(s)$ for $s\in\states$.
\end{proof}

\begin{proof}[\Cref{thm:distance-metric-taxonomy}]
    We prove the statements individually:
    \begin{enumerate}
        \item \label{en:thm2-1} \emph{$d \equiv d^r_{O} \equiv d^r_{OG} \equiv d^r_{OEG}$ (Equivalence for order with resampling)}\\[0.5em]
        $d \equiv d^r_{O}$ was already proven by \Cref{thm:distance-metric-equivalence-order}. Consider $d^r_{O} \equiv d^r_{OG}$. For a state $s$ and distance metric $d^r_O$, we have a look at the shortest possible sequence of elementary \acp{SC} to the target $\Sigma_n \stackrel{a_n}{\rightarrow} \Sigma_{n-1} \stackrel{a_{n-1}}{\rightarrow} \Sigma_{n-2} \stackrel{a_{n-2}}{\rightarrow} \dots \stackrel{a_1}{\rightarrow} \Sigma_0$ with $\omega_O(\Sigma_i) = i$ for $i\in \{0, ..., n\}$ and delays $\Delta_i \in \mathbb{R}_{\geq{0}}$ for $i\in \{1, ..., n\}$. The same sequence is possible for $d^r_{OG}$ independent of the remaining (positive) time budget $\tau(t_\textit{age})$: Due to resampling and the unbounded distributions, one can scale the values of the sampled timers s.t. the time to execute all transitions is shorter than the remaining time budget, i.e. $\sum_{i=1}^{n} \Delta_i \leq \tau(t_\textit{age})$, while preserving the order of timers in each state. This makes the same sequence possible as for $d^r_{O}$. Similarly to the argument made in \Cref{thm:distance-metric-equivalence-order}, no shorter sequence exists, proving $d^r_{O} \equiv d^r_{OG}$. To prove $d^r_{OG} \equiv d^r_{OEG}$, one recognizes that a simulation run can always stay in a target state infinitely long (as no upper bounds on the timers exist), so it only matters that a target state is reached within the given time limit (and so $d^r_{OG}$ and $d^r_{OEG}$ coincide here).
        \vspace{0.5em}
        \item \label{en:thm2-2} \emph{$d_O \preceq d_{OG} \preceq d_{OEG}$, $d_T \preceq d_{TG} \preceq d_{TEG}$, $d^r_T \preceq d^r_{TG} \preceq d^r_{TEG}$ (Age inclusion)}
        \vspace{0.2em}
        \item \label{en:thm2-3} \emph{$d \preceq d_O$, $d_O \preceq d_T$, $d_{OG} \preceq d_{TG}$, $d_{OEG} \preceq d_{TEG}$, $d \preceq d^r_T$ (Time granularity)}\\[0.5em]
        All of these statements from \ref{en:thm2-2}. and \ref{en:thm2-3}. of the type $d_A \preceq d_B$ follow directly from the fact that the left side $d_A$ applies strictly fewer restrictions than the right side $d_B$. Thus, every sequence of states to the target $s_n \stackrel{a_n}{\rightarrow} s_{n-1} \stackrel{a_{n-1}}{\rightarrow} s_{n-2} \stackrel{a_{n-2}}{\rightarrow} \dots \stackrel{a_1}{\rightarrow} s_0$ reflected by $d_A$ cannot be done in fewer steps when resorting to the restrictions denoted in the distance metric $d_B$.
        For $d \preceq d^r_T$ note that, as $d\equiv d^r_O$ (see \ref{en:thm2-1}.), this statement is equivalent to $d^r_O \preceq d^r_T$ and is handled by the same argument (strictly fewer restrictions) as the other statements.
        \vspace{0.5em}
        \item \label{en:thm2-4} \emph{$d^r_T \preceq d_T$, $d^r_{TG} \preceq d_{TG}$, $d^r_{TEG} \preceq d_{TEG}$ (Resampling yes/no)}\\[0.5em]
        We recall that \Cref{def:time-sensitive-dist-resampling} defines $d^{r}_{T}(s) = \min_{s'\in \Sigma_{s_\textit{elp}}} d_{T}(s')$. Thus, it holds for every state $s \in \Sigma_{s_\textit{elp}}$ that $d^{r}_{T}(s) \leq d_{T}(s)$. As this is true for every elapsed time information $s_\textit{elp}$, $d^{r}_{T}(s) \leq d_{T}(s)$ for all $s\in \states$ and thus $d^r_T \preceq d_T$. The proof is analogous for $d^r_{TG} \preceq d_{TG}$ and $d^r_{TEG} \preceq d_{TEG}$.
    \end{enumerate}
\end{proof}

\section{Time-Agnostic Compositional Importance Function Definition for Repairable Dynamic Fault Trees}
\label{sec:app:comp-ifun}

\begin{table}
    \centering
    \caption{\ac{IFUN} definition for repairable \acp{DFT} \cite{BDMS22}}
    \label{tab:comp-ifun-def}
%
\begingroup
\renewcommand{\arraystretch}{1.8}
\def\treei{\ensuremath{\Tree_i}}
\def\maxI[#1]{\ensuremath{\max^{\IFUN}_{#1}}}
\def\lcm{\ensuremath{\mathrm{lcm}}}
\def\ord{\ensuremath{\mathit{ord}}}
\def\bitsmaller[#1]{\scalebox{1.15}{$#1$}}
%
\begin{tabular}{>{\centering}m{2cm}@{\qquad}>{\centering\arraybackslash}m{6.2cm}m{1.2cm}}
	\toprule
	\addlinespace[-0.7ex]  
	\type[\nodev] & \multicolumn{1}{c}{$\IFUN_\nodev(\xbf)$} & \\
	\midrule
	\addlinespace[-.7ex]  
	\BE, \SBE &
		\raisebox{.7ex}{$\displaystyle \zbf_\nodev$}\\
	\addlinespace[-1ex]  
	\ANDgate &
		$\lcm_\nodev\cdot \bitsmaller[\sum_{w\in\child[\nodev]}
		\frac{\IFUN_w(\xbf)}{\maxI[w]}]$\\
	\ORgate &
		$\displaystyle \lcm_\nodev\cdot \max_{w\in\child[\nodev]}
		\left\{ \bitsmaller[\frac{\IFUN_w(\xbf)}{\maxI[w]}] \right\}$\\
	$\VOTgate_k$ &
		$\displaystyle \lcm_\nodev\cdot \max_{W\subseteq\child[\nodev], |W|=k}
		\left\{ \bitsmaller[\sum_{\nodew\in W}
		\frac{\IFUN_\nodew(\xbf)}{\maxI[\nodew]}] \right\}$\\
	\addlinespace[.4ex]  
	\SPAREgate &
		$\displaystyle \lcm_\nodev\cdot \max\Big(
			\bitsmaller[\sum_{w\in\child[\nodev]}
			\frac{\IFUN_w(\xbf)}{\maxI[w]}] ~,~ \zbf_\nodev\cdot m \Big)$\\
	\PANDgate &
		$\displaystyle \lcm_\nodev\cdot \max\Big(
			 \bitsmaller[\frac{\IFUN_l(\xbf)}{\maxI[l]}]
			+ \ord\:\bitsmaller[\frac{\IFUN_r(\xbf)}{\maxI[r]}]
                        ~,~ \zbf_\nodev\cdot 2 \Big)$\\[-2ex]
		& \multicolumn{2}{l}{\smaller
			where~$\ord=1$ if $\xbf_\nodev\in\{1,4\}$ and $\ord=-1$ otherwise}\\
         \multicolumn{3}{c}{with \ $\max^{\IFUN}_{\nodev}=\max_{\xbf\in\states}\IFUN_\nodev(\xbf)$ \ and \ $\mathrm{lcm}_\nodev=\mathrm{lcm}\left\{
	\max^{\IFUN}_\nodew \,\middle|\, \nodew\in\child[\nodev]
\right\}$}\\
	\bottomrule
\end{tabular}
\endgroup

\end{table}
\Cref{tab:comp-ifun-def} depicts the compositional \ac{IFUN} definition for repairable \acp{DFT} according to \cite{BDMS22}.
It specifies for each gate a local \ac{IFUN}; for \acp{BE} and \acp{SBE} this is either $0$ or $1$, depending on whether the \ac{BE}/\ac{SBE} is working or not, while the other gates take the values of their children into account.
$\zbf_v$ denotes if a gate has failed, while $\xbf_v$ refers to the internal state of the gate (for more details, see \cite{BDMS22}).
The weights of the children are balanced by their maximum importance; the \ac{LCM} is used to attain solely integer levels despite the balancing.
The local \ac{IFUN} of the topmost gate resembles the global \ac{IFUN}.
During simulation, the global \ac{IFUN} is evaluated by stepwise applying these rules until arriving at \acp{BE} or \acp{SBE}.

\section{More Details About the Experiments}
\label{sec:app:details-experiments}

\subsection{Theoretical Models}
\label{sec:app:details-experiments:theory}

\begin{figure}[ht]
  \centering
  \begin{subfigure}{.22\linewidth}
    \centering
    \includegraphics[width=\linewidth]{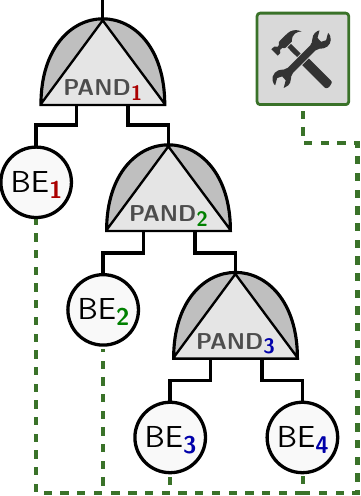}
    \caption{\protect{\modelDFTsmall} \cite{dengler:2025:time-sensitive-isplit}}
    \label{fig:experiments:models:modelSmall}
  \end{subfigure}
  \hfill
  \begin{subfigure}{.73\linewidth}
    \centering
    \includegraphics[width=\linewidth]{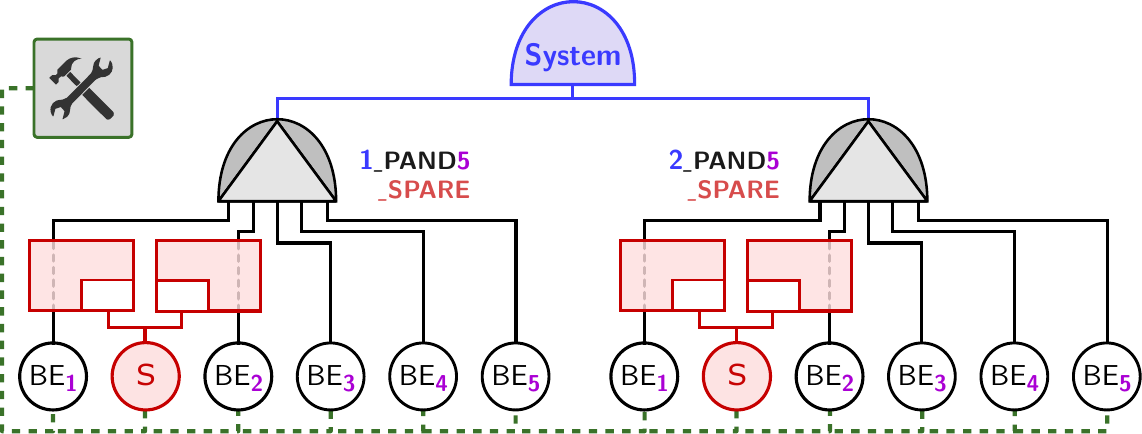}
    \caption{\protect{\modelDFTlarge}, where $n$-ary \PANDgate{s} are actually cascaded as in \Cref{fig:experiments:models:modelSmall}}
    \label{fig:experiments:models:modelLarge}
  \end{subfigure}
  \caption{\ac{DFT} experimentation models from \Cref{sec:experiments}}
  \label{fig:experiments:models}
\end{figure}

This
section provides more details about the models
used for experiments in \Cref{sec:experiments}.
For \acp{DFT} we study the time-bounded probability $\varphi_{\modelm} = P( \mathop{\modelm}(t)\in\mathit{fail} \mid t\leq T )$ of observing a full system failure, using $T=1248$ for \modelDFTsmall and $T=4500$ for \modelDFTlarge.
The corresponding ground truth failure probabilities---approximated with 12-hours-long 
\ac{CMC} runs---are $\varphi_{\modelDFTsmall}\approx5.2\eminus{7}$ and $\varphi_{\modelDFTlarge}\approx2.2\eminus{8}$.

All timers that govern failures and repairs in the \acp{BE} of \modelDFTsmall have (bounded) uniform distributions---the concrete bounds can be found in \Cref{code:modelSmall} and \cite{dengler:2025:time-sensitive-isplit}.%
\begingroup
  \colorlet{colfail}{violet!50!magenta!55!black}
    \colorlet{colrepair}{green!50!teal!30!black}
      \colorlet{coldormant}{red!35!black}
In contrast, \modelDFTlarge is sketched in \Cref{fig:experiments:models:modelLarge}, and has basic elements whose
  \textcolor{colfail}{failure},
    \textcolor{colrepair}{repair}, and
      \textcolor{coldormant}{dormant-failure}
events obey
  \textcolor{colfail}{$\mathop{\mathit{exponential}}(\distparam{\lambda}{2\cdot 10^{-4}})$},
    \textcolor{colrepair}{$\mathop{\mathit{normal}}(\distparam{\mu}{500},\distparam{\sigma}{50})$}, and
      \textcolor{coldormant}{$\mathop{\mathit{Weibull}}(\distparam{\alpha}{5.6}, \distparam{\beta}{10^4})$}
distributions in time, resp.
Moreover, all \acp{BE} and \acp{SBE} are repaired by a single \RBOX, which makes all components dependent on each other.
Finally, note that while $n$-ary \PANDgate{s} are represented compactly in \Cref{fig:modelLarge,fig:experiments:models:modelLarge}, they are implemented as right-cascading binary \PANDgate{s} equivalently to what is shown in \Cref{fig:experiments:models:modelSmall}, i.e.\ $\PANDgate(a,b,c)$ becomes $\PANDgate(a,\PANDgate(b,c))$.
Both models are described in \Cref{sec:app:details-experiments:code} in a domain-specific language for  \acp{DFT}.
\endgroup

\smallskip
Besides fault trees, we experiment with a classic \ac{RES} benchmark from the domain of queueing networks: the two-station tandem queue.
We study the time-bounded probability $\varphi_{\modelQueue} = P(q_2\geq3\mid t\leq 11)$ of observing an overflow in the second station, whose approximated ground truth value is $\varphi_{\modelQueue}\approx1.7\eminus{8}$.

The tandem queue under consideration has mixed timers: external arrivals to the first station are governed by an (unbounded) $\mathop{\mathit{exponential}}(\distparam{\lambda}{0.17})$ distribution, while service times at the stations are governed by bounded distributions: $\mathit{uniform}(\distparam{l_1}{2},\distparam{u_1}{3})$ in the first station and $\mathit{uniform}(\distparam{l_2}{1},\distparam{u_2}{3})$ in the second.
These timers were chosen such that the loads at subsequent stations are decremental.
Together with the tight time bound ($t\leq 11$), this makes the probability $\varphi_{\modelQueue}$ a rare event that becomes exponentially more rare with the limiting capacity $C$ of the second queue whose overflow we are interested in measuring.
The \textsc{Modest} model used in our experiments is introduced as \Cref{code:modelTandemQueueSA} in \Cref{sec:app:details-experiments:code}.

\subsection{Models in Kepler/Modest Syntax}
\label{sec:app:details-experiments:code}
Below we provide the description of both \ac{DFT} models in Kepler~\cite{BDMS22}, an adaptation of the well-known Galileo input format for fault trees \cite{sullivan:1999:galileo}.
Kepler allows specifying general (i.e.\ non-Markovian) distributions for failure, repair, and dormant-failure events of \acp{BE} and \acp{SBE}.

\def\dist{\raisebox{-.2ex}{\textasciitilde}}

\begin{lstlisting}[%
    style=Kepler,
	basicstyle=\scriptsize\ttfamily,
    caption={Kepler syntax for \modelDFTsmall in \Cref{fig:experiments:models:modelSmall} \cite{dengler:2025:time-sensitive-isplit}},
    captionpos=b,
    label=code:modelSmall,
    abovecaptionskip=\medskipamount,
]
toplevel "PAND1"; 
"PAND1" pand "BE1" "PAND2";
"PAND2" pand "BE2" "PAND3";
"PAND3" pand "BE3" "BE4";
"BE1" fail`\dist`uniform(1198,1218) repair`\dist`uniform(10,15);
"BE2" fail`\dist`uniform(530,595)   repair`\dist`uniform(10,45);
"BE3" fail`\dist`uniform(385,465)   repair`\dist`uniform(10,45);
"BE4" fail`\dist`uniform(1105,1205) repair`\dist`uniform(10,15);
"RBOX" rbox prio "BE1" "BE2" "BE3" "BE4";
\end{lstlisting}

\begin{lstlisting}[%
    style=Kepler,
	basicstyle=\scriptsize\ttfamily,
    caption={Kepler syntax for \modelDFTlarge in \Cref{fig:experiments:models:modelLarge}},
    captionpos=b,
    label=code:modelLarge,
    abovecaptionskip=\medskipamount,
]
toplevel "System";
"System" and "G1_PAND5_a" "G2_PAND5_a";
"G1_PAND5_a" pand "G1_SPARE_a" "G1_PAND5_b";
"G1_PAND5_b" pand "G1_SPARE_b" "G1_PAND5_c";
"G1_PAND5_c" pand "G1_BE_c" "G1_PAND5_d";
"G1_PAND5_d" pand "G1_BE_d" "G1_BE_e";
"G1_SPARE_a" spare "G1_BE_a" "G1_SBE";
"G1_SPARE_b" spare "G1_BE_b" "G1_SBE";
"G1_BE_a" fail`\dist`exponential(^2e-4^) repair`\dist`normal(^500^,^50^);
"G1_BE_b" fail`\dist`exponential(^2e-4^) repair`\dist`normal(^500^,^50^);
"G1_BE_c" fail`\dist`exponential(^2e-4^) repair`\dist`normal(^500^,^50^);
"G1_BE_d" fail`\dist`exponential(^2e-4^) repair`\dist`normal(^500^,^50^);
"G1_BE_e" fail`\dist`exponential(^2e-4^) repair`\dist`normal(^500^,^50^);
"G1_SBE"  fail`\dist`exponential(^2e-4^) repair`\dist`normal(^500^,^50^) dorm`\dist`weibull(^5.6^,^1e4^);
"G2_PAND5_a" pand "G2_SPARE_a" "G2_PAND5_b";
"G2_PAND5_b" pand "G2_SPARE_b" "G2_PAND5_c";
"G2_PAND5_c" pand "G2_BE_c" "G2_PAND5_d";
"G2_PAND5_d" pand "G2_BE_d" "G2_BE_e";
"G2_SPARE_a" spare "G2_BE_a" "G2_SBE";
"G2_SPARE_b" spare "G2_BE_b" "G2_SBE";
"G2_BE_a" fail`\dist`exponential(^2e-4^) repair`\dist`normal(^500^,^50^);
"G2_BE_b" fail`\dist`exponential(^2e-4^) repair`\dist`normal(^500^,^50^);
"G2_BE_c" fail`\dist`exponential(^2e-4^) repair`\dist`normal(^500^,^50^);
"G2_BE_d" fail`\dist`exponential(^2e-4^) repair`\dist`normal(^500^,^50^);
"G2_BE_e" fail`\dist`exponential(^2e-4^) repair`\dist`normal(^500^,^50^);
"G2_SBE"  fail`\dist`exponential(^2e-4^) repair`\dist`normal(^500^,^50^) dorm`\dist`weibull(^5.6^,^1e4^);
"RBOX" rbox prio "G2_SBE" "G2_BE_e" "G2_BE_d" "G2_BE_c" "G2_BE_b" "G2_BE_a"
                 "G1_SBE" "G1_BE_e" "G1_BE_d" "G1_BE_c" "G1_BE_b" "G1_BE_a";

\end{lstlisting}

\noindent
Furthermore, we provide in the following the description of the tandem queue in the \textsc{Modest} model description language \cite{HH14}:
\begin{lstlisting}[%
    style=Modest,
    numbers=left,
    caption={\textsc{Modest} syntax for \modelQueue in \Cref{fig:tandem-queue}},
    captionpos=b,
    label=code:modelTandemQueueSA,
    abovecaptionskip=\medskipamount,
]
// Actions
action arrival, queue1_service, queue2_service, overflow;

// Constants
const int C = 3; // Capacity of each queue
const int B = 11; // Time§{\color{green!33!black}-}§bound
const real l = 0.17; // Arrival rate

// Bounds for service distribution
const real lowb1 = 2.0, uppb1 = 3.0; // of first server
const real lowb2 = 1.0, uppb2 = 3.0; // of second server

// Discrete variables and timers
int(0..C) §${\color{teal!67!black}q_1}$§, §${\color{teal!67!black}q_2}$§;
bool failed = false;
timer t_arrivals = Exponential(l), t_server_1, t_server_2;

// Ad§{\color{green!33!black}-}§hoc importance function and time§{\color{green!33!black}-}§bounded failure property
property Importance = §${\color{teal!67!black}q_2}$§;
property TimeBounded = Pmax(<>[T<=B] (failed));

// Main loop
do {
// Logic for arrivals
:: when(t_arrivals <= 0) invariant(t_arrivals >= 0) arrival
   {= t_arrivals = Exponential(l), §${\color{teal!67!black}q_1}$§ = min(§${\color{teal!67!black}q_1}$§ + 1, C),
      t_server_1 = (§${\color{teal!67!black}q_1}$§ == 0) ? Uniform(lowb1,uppb1) : t_server_1 =}
// Logic for server 1 (without triggering an overflow)
:: when(§${\color{teal!67!black}q_1}$§ > 0 && §${\color{teal!67!black}q_2}$§ < C - 1 && t_server_1 <= 0)
   invariant(§${\color{teal!67!black}q_1}$§ == 0 || §${\color{teal!67!black}q_2}$§ >= C - 1 || t_server_1 >= 0) queue1_service
   {= §${\color{teal!67!black}q_1}$§--, §${\color{teal!67!black}q_2}$§ = min(§${\color{teal!67!black}q_2}$§ + 1, C),
      t_server_1 = (§${\color{teal!67!black}q_1}$§ >  1) ? Uniform(lowb1,uppb1) : t_server_1,
      t_server_2 = (§${\color{teal!67!black}q_2}$§ == 0) ? Uniform(lowb2,uppb2) : t_server_2 =}
// Logic for server 2
:: when(§${\color{teal!67!black}q_2}$§ > 0 && t_server_2 <= 0)
   invariant(§${\color{teal!67!black}q_2}$§ == 0 || t_server_2 >= 0) queue2_service
   {= §${\color{teal!67!black}q_2}$§--, t_server_2 = §${\color{teal!67!black}q_2}$§ > 1 ? Uniform(lowb2,uppb2) : t_server_2 =}
// Check for overflow (triggered by server 1 when server 2 is full)
:: when(§${\color{teal!67!black}q_1}$§ > 0 && §${\color{teal!67!black}q_2}$§ >= C - 1 && t_server_1 <= 0)
   invariant(§${\color{teal!67!black}q_1}$§ == 0 || §${\color{teal!67!black}q_2}$§ < C - 1 || t_server_1 >= 0)
   overflow {= failed = true =}; Overflow()
}

// Failure location: we stop here
process Overflow()
{
    stop
}
\end{lstlisting}

\subsection{Importance Functions}
\Cref{code:IFUNsSmall}/\Cref{code:IFUNsLarge} show the \acp{IFUN} used for \modelDFTsmall/\modelDFTlarge, matching the Kepler syntax identifiers from \Cref{code:modelSmall}/\Cref{code:modelLarge} and their corresponding \ac{IOSA} translations implemented in the \textsc{modes} tool.

For \modelDFTsmall, we limit the maximum expansion depth for each distance metric to $12$. However, this still ensures that the relevant state space is covered (see detailed discussion in \Cref{app:construction-times-IFUNs}).
For $\modelDFTlarge$, we notice that the system consists of two subtrees that are solely connected by a repair box. To limit the explored state space, we construct for each subtree a local \ac{IFUN} and sum up their individual \acp{IFUN}.
Note that the maximum expansion depth of $17$ for the distance metrics in $\modelDFTlarge$ is chosen large enough such that the complete \ac{SC} graphs are constructed.

\begin{lstlisting}[%
    style=IFUNModest,
	breaklines,
	basicstyle=\scriptsize\ttfamily,
    caption={\acp{IFUN} used to run \ac{RES} in model \modelDFTsmall (\Cref{code:modelSmall})},
    captionpos=b,
    label=code:IFUNsSmall,
    abovecaptionskip=\medskipamount,
]
"tagn_comp"  // IFUN time-agnostic compositional
  max(((4*E([BE1:[BE1__failure,BE1__repairing]]))+(if E([PAND1:[PAND1__first_failed,PAND1__gate_failure,PAND1__gate_repaired_first_fail,PAND1__failed_first_up,PAND1__failed_first_fail]]) then max(((2*E([BE2:[BE2__failure,BE2__repairing]]))+(if E([PAND2:[PAND2__first_failed,PAND2__gate_failure,PAND2__gate_repaired_first_fail,PAND2__failed_first_up,PAND2__failed_first_fail]]) then max((E([BE3:[BE3__failure,BE3__repairing]])+(if E([PAND3:[PAND3__first_failed,PAND3__gate_failure,PAND3__gate_repaired_first_fail,PAND3__failed_first_up,PAND3__failed_first_fail]]) then E([BE4:[BE4__failure,BE4__repairing]]) else -E([BE4:[BE4__failure,BE4__repairing]]))), if E([PAND3:[PAND3__failed_first_up,PAND3__failed_first_fail]]) then 2 else 0) else -max((E([BE3:[BE3__failure,BE3__repairing]])+(if E([PAND3:[PAND3__first_failed,PAND3__gate_failure,PAND3__gate_repaired_first_fail,PAND3__failed_first_up,PAND3__failed_first_fail]]) then E([BE4:[BE4__failure,BE4__repairing]]) else -E([BE4:[BE4__failure,BE4__repairing]]))), if E([PAND3:[PAND3__failed_first_up,PAND3__failed_first_fail]]) then 2 else 0))), if E([PAND2:[PAND2__failed_first_up,PAND2__failed_first_fail]]) then 4 else 0) else -max(((2*E([BE2:[BE2__failure,BE2__repairing]]))+(if E([PAND2:[PAND2__first_failed,PAND2__gate_failure,PAND2__gate_repaired_first_fail,PAND2__failed_first_up,PAND2__failed_first_fail]]) then max((E([BE3:[BE3__failure,BE3__repairing]])+(if E([PAND3:[PAND3__first_failed,PAND3__gate_failure,PAND3__gate_repaired_first_fail,PAND3__failed_first_up,PAND3__failed_first_fail]]) then E([BE4:[BE4__failure,BE4__repairing]]) else -E([BE4:[BE4__failure,BE4__repairing]]))), if E([PAND3:[PAND3__failed_first_up,PAND3__failed_first_fail]]) then 2 else 0) else -max((E([BE3:[BE3__failure,BE3__repairing]])+(if E([PAND3:[PAND3__first_failed,PAND3__gate_failure,PAND3__gate_repaired_first_fail,PAND3__failed_first_up,PAND3__failed_first_fail]]) then E([BE4:[BE4__failure,BE4__repairing]]) else -E([BE4:[BE4__failure,BE4__repairing]]))), if E([PAND3:[PAND3__failed_first_up,PAND3__failed_first_fail]]) then 2 else 0))), if E([PAND2:[PAND2__failed_first_up,PAND2__failed_first_fail]]) then 4 else 0))), if E([PAND1:[PAND1__failed_first_up,PAND1__failed_first_fail]]) then 8 else 0)
"tagn_mono"      // IFUN time-agnostic monolithic
  12-D([BE3,PAND1,RBOX,BE2,PAND3,PAND2,BE1,BE4],[PAND1:[PAND1__failed_first_fail]],12)
"tsen_order"     // IFUN time-sensitive order
  12-OD([BE3,PAND1,RBOX,BE2,PAND3,PAND2,BE1,BE4],[PAND1:[PAND1__failed_first_fail]],12)
"tsen_order`${}_{\mathtt{global}\!}$`" // IFUN time-sensitive order global
  12-OGD([BE3,PAND1,RBOX,BE2,PAND3,PAND2,BE1,BE4],[PAND1:[PAND1__failed_first_fail]],12)
"tsen_timed"     // IFUN time-sensitive timed
  12-TD([BE3,PAND1,RBOX,BE2,PAND3,PAND2,BE1,BE4],[PAND1:[PAND1__failed_first_fail]],12)
"tsen_timed`${}_{\mathtt{global}\!}$`" // IFUN time-sensitive timed global
  12-TGD([BE3,PAND1,RBOX,BE2,PAND3,PAND2,BE1,BE4],[PAND1:[PAND1__failed_first_fail]],12)
\end{lstlisting}

\begin{lstlisting}[%
    style=IFUNModest,
	breaklines,
	basicstyle=\scriptsize\ttfamily,
    caption={\acp{IFUN} used to run \ac{RES} in model \modelDFTlarge (\Cref{code:modelLarge})},
    captionpos=b,
    label=code:IFUNsLarge,
    abovecaptionskip=\medskipamount,
]
"tagn_comp"  // IFUN time-agnostic compositional
  // Omitted due to excessive length
"tagn_mono"      // IFUN time-agnostic monolithic
  17-D([G1_BE_e,G1_BE_a,G1_BE_c,G1_SBE,G1_BE_b,G1_SBE__multiplexer,G1_PAND5_d,G1_SPARE_b,G1_PAND5_a,RBOX,G1_PAND5_c,G1_BE_d,G1_SPARE_a,G1_PAND5_b],[G1_PAND5_a:[G1_PAND5_a__failed_first_fail]],17)+17-D([G2_SBE,G2_SBE__multiplexer,G2_BE_a,G2_BE_b,G2_BE_e,G2_PAND5_c,G2_SPARE_a,RBOX,G2_BE_d,G2_BE_c,G2_PAND5_b,G2_PAND5_a,G2_PAND5_d,G2_SPARE_b],[G2_PAND5_a:[G2_PAND5_a__failed_first_fail]],17)
"tsen_order"     // IFUN time-sensitive order
  17-OD([G1_BE_e,G1_BE_a,G1_BE_c,G1_SBE,G1_BE_b,G1_SBE__multiplexer,G1_PAND5_d,G1_SPARE_b,G1_PAND5_a,RBOX,G1_PAND5_c,G1_BE_d,G1_SPARE_a,G1_PAND5_b],[G1_PAND5_a:[G1_PAND5_a__failed_first_fail]],17)+17-OD([G2_SBE,G2_SBE__multiplexer,G2_BE_a,G2_BE_b,G2_BE_e,G2_PAND5_c,G2_SPARE_a,RBOX,G2_BE_d,G2_BE_c,G2_PAND5_b,G2_PAND5_a,G2_PAND5_d,G2_SPARE_b],[G2_PAND5_a:[G2_PAND5_a__failed_first_fail]],17)
"tsen_order`${}_{\mathtt{global}\!}$`" // IFUN time-sensitive order global
  17-OGD([G1_BE_e,G1_BE_a,G1_BE_c,G1_SBE,G1_BE_b,G1_SBE__multiplexer,G1_PAND5_d,G1_SPARE_b,G1_PAND5_a,RBOX,G1_PAND5_c,G1_BE_d,G1_SPARE_a,G1_PAND5_b],[G1_PAND5_a:[G1_PAND5_a__failed_first_fail]],17)+17-OGD([G2_SBE,G2_SBE__multiplexer,G2_BE_a,G2_BE_b,G2_BE_e,G2_PAND5_c,G2_SPARE_a,RBOX,G2_BE_d,G2_BE_c,G2_PAND5_b,G2_PAND5_a,G2_PAND5_d,G2_SPARE_b],[G2_PAND5_a:[G2_PAND5_a__failed_first_fail]],17)
"tsen_timed"     // IFUN time-sensitive timed
  17-TD([G1_BE_e,G1_BE_a,G1_BE_c,G1_SBE,G1_BE_b,G1_SBE__multiplexer,G1_PAND5_d,G1_SPARE_b,G1_PAND5_a,RBOX,G1_PAND5_c,G1_BE_d,G1_SPARE_a,G1_PAND5_b],[G1_PAND5_a:[G1_PAND5_a__failed_first_fail]],17)+17-TD([G2_SBE,G2_SBE__multiplexer,G2_BE_a,G2_BE_b,G2_BE_e,G2_PAND5_c,G2_SPARE_a,RBOX,G2_BE_d,G2_BE_c,G2_PAND5_b,G2_PAND5_a,G2_PAND5_d,G2_SPARE_b],[G2_PAND5_a:[G2_PAND5_a__failed_first_fail]],17)
"tsen_timed`${}_{\mathtt{global}\!}$`" // IFUN time-sensitive timed global
  17-TGD([G1_BE_e,G1_BE_a,G1_BE_c,G1_SBE,G1_BE_b,G1_SBE__multiplexer,G1_PAND5_d,G1_SPARE_b,G1_PAND5_a,RBOX,G1_PAND5_c,G1_BE_d,G1_SPARE_a,G1_PAND5_b],[G1_PAND5_a:[G1_PAND5_a__failed_first_fail]],17)+17-TGD([G2_SBE,G2_SBE__multiplexer,G2_BE_a,G2_BE_b,G2_BE_e,G2_PAND5_c,G2_SPARE_a,RBOX,G2_BE_d,G2_BE_c,G2_PAND5_b,G2_PAND5_a,G2_PAND5_d,G2_SPARE_b],[G2_PAND5_a:[G2_PAND5_a__failed_first_fail]],17)
\end{lstlisting}

\Cref{code:IFUNsQueue} shows the \acp{IFUN} used for the tandem queue model \modelQueue from \Cref{code:modelTandemQueueSA}. Note that \texttt{\small\color{colcodeidentifier}loc\_20} refers to the failure location (i.e., an overflow happened) of the model. Again, the maximum expansion depth of $30$ ensures that the complete \ac{SC} graph is constructed.
\begin{lstlisting}[%
    style=IFUNModest,
	breaklines,
	basicstyle=\scriptsize\ttfamily,
    caption={\acp{IFUN} used to run \ac{RES} in model \modelQueue (\Cref{code:modelTandemQueueSA})},
    captionpos=b,
    label=code:IFUNsQueue,
    abovecaptionskip=\medskipamount,
]
"tagn_comp"  // IFUN time-agnostic compositional
  // Encoded in the Modest code in property `\texttt{Importance}`
"tagn_mono"      // IFUN time-agnostic monolithic
  30-D([Main], [Main:[loc_20]], 30)
"tsen_order"     // IFUN time-sensitive order
  30-OD([Main], [Main:[loc_20]], 30)
"tsen_order`${}_{\mathtt{global}\!}$`" // IFUN time-sensitive order global
  30-OGD([Main], [Main:[loc_20]], 30)
"tsen_timed"     // IFUN time-sensitive timed
  30-TD([Main], [Main:[loc_20]], 30)
"tsen_timed`${}_{\mathtt{global}\!}$`" // IFUN time-sensitive timed global
  30-TGD([Main], [Main:[loc_20]], 30)
\end{lstlisting}

\subsection{Construction Times of Importance Functions}
\label{app:construction-times-IFUNs}
We report on the construction times that were necessary to construct the \acp{IFUN} used during experimentation:
\begin{itemize}[label=\textbullet]
    \item Fault tree $\modelDFTsmall$: The construction of the complete \ac{SC} graph is possible for the ordered distance metric $\omega_O(\Sigma)$ and the timed distance metrics with global age ($\omega_{TG}(\Sigma)$) in under one second. Meanwhile, constructing the full timed distance metric without global age ($\omega_{T}(\Sigma)$) does not terminate within five minutes. However, we know from a high-level perspective that at most $10$ traversed transitions are needed to reach the target: The distributions for failures and repairs of \acp{BE} have tight lower and upper bounds, limiting the maximum number of possible executed transitions within the given time frame; for more information, see \cite[Section 4]{dengler:2025:time-sensitive-isplit}.
    Thus, we limit the maximum expansion depth to $12$ ($10$ plus $2$ further steps as a safety margin). With this modification, all distance metrics can be computed in under one second.
    \item Fault tree $\modelDFTlarge$: All distance metrics can be fully constructed in a short time frame. However, the construction of the ordered distance metric $\omega_O(\Sigma)$ is significantly faster (about $2$ seconds) than for the timed distance metrics $\omega_T(\Sigma)$/$\omega_{TG}(\Sigma)$ (between $11$ and $12$ seconds). This indicates that the specialized data structure for ordered distance metrics is more efficient than plainly using \ac{DBM} zones, as discussed in \Cref{sec:resampling-timers:models-infinite-domains}.
    \item Tandem queue $\modelQueue$: All distance metrics can be constructed within one second.
\end{itemize}

\subsection{Numerical Results}
\Cref{tab:experiments:results} shows the numerical results of all experiments discussed in \Cref{sec:experiments:results}, and it was used to compile \Cref{fig:experiments:results}:

\begingroup
	\centering
	\sffamily\smaller
	\rowcolors{2}{gray!15}{white}
	\begin{longtable}{%
	p{.4495\linewidth}@{\;}    
	>{}C{.11\linewidth}@{}   
	>{}C{.11\linewidth}@{}   
	>{}C{.095\linewidth}@{}    
	>{}C{.12\linewidth}@{\,}  
	>{\:}C{.097\linewidth}}    
\caption{%
	Numerical results of the experiments from \Cref{sec:experiments:results}%
	\label{tab:experiments:results}%
}\\
\rowcolor{black!25}
	\ac{CMC} or \ac{RES} run \RESrun
	& Number of runs
	& Non-zero runs ratio
	& Center $\hat{\varphi}_{\modelm}^{\RESrun}$
	& Half-width ${\left|\hat{\varphi}_{\modelm}^{\RESrun}\right|}/2$
	& Variance
\\
\endhead
\rowcolor{black!60}
	\multicolumn{6}{c}{\color{white}Results of runs on \modelm{\:}\texttt{=}\:\modelDFTsmall (\Cref{code:modelSmall})\phantom{$\big)$}}
\\
CMC & 1.04\textscale{.8}{\texttt{E+}}8 & 5.5\textscale{.8}{\texttt{E-}}7 & 5.5\textscale{.8}{\texttt{E-}}7 & 1.4\textscale{.8}{\texttt{E-}}7 & 5.5\textscale{.8}{\texttt{E-}}7 \\
tagn{\tt\_}comp{\tt\_}FE{\tt\_}8{\tt\_}standard & 1.22\textscale{.8}{\texttt{E+}}7 & 3.0\textscale{.8}{\texttt{E-}}6 & 3.6\textscale{.8}{\texttt{E-}}7 & 1.2\textscale{.8}{\texttt{E-}}7 & 4.5\textscale{.8}{\texttt{E-}}8 \\
tagn{\tt\_}comp{\tt\_}FE{\tt\_}8{\tt\_}resample & 1.10\textscale{.8}{\texttt{E+}}7 & 1.0\textscale{.8}{\texttt{E-}}5 & 6.4\textscale{.8}{\texttt{E-}}7 & 1.4\textscale{.8}{\texttt{E-}}7 & 5.9\textscale{.8}{\texttt{E-}}8 \\
tagn{\tt\_}comp{\tt\_}FE{\tt\_}15{\tt\_}standard & 6.17\textscale{.8}{\texttt{E+}}6 & 6.6\textscale{.8}{\texttt{E-}}6 & 4.5\textscale{.8}{\texttt{E-}}7 & 1.4\textscale{.8}{\texttt{E-}}7 & 3.1\textscale{.8}{\texttt{E-}}8 \\
tagn{\tt\_}comp{\tt\_}FE{\tt\_}15{\tt\_}resample & 5.47\textscale{.8}{\texttt{E+}}6 & 1.8\textscale{.8}{\texttt{E-}}5 & 4.9\textscale{.8}{\texttt{E-}}7 & 1.3\textscale{.8}{\texttt{E-}}7 & 2.3\textscale{.8}{\texttt{E-}}8 \\
tagn{\tt\_}comp{\tt\_}RST{\tt\_}2{\tt\_}standard & 4.62\textscale{.8}{\texttt{E+}}7 & 1.1\textscale{.8}{\texttt{E-}}6 & 5.6\textscale{.8}{\texttt{E-}}7 & 1.5\textscale{.8}{\texttt{E-}}7 & 2.8\textscale{.8}{\texttt{E-}}7 \\
tagn{\tt\_}comp{\tt\_}RST{\tt\_}2{\tt\_}resample & 4.55\textscale{.8}{\texttt{E+}}7 & 2.2\textscale{.8}{\texttt{E-}}6 & 5.6\textscale{.8}{\texttt{E-}}7 & 1.3\textscale{.8}{\texttt{E-}}7 & 2.1\textscale{.8}{\texttt{E-}}7 \\
tagn{\tt\_}comp{\tt\_}RST{\tt\_}3{\tt\_}standard & 4.54\textscale{.8}{\texttt{E+}}7 & 1.1\textscale{.8}{\texttt{E-}}6 & 5.6\textscale{.8}{\texttt{E-}}7 & 1.5\textscale{.8}{\texttt{E-}}7 & 2.8\textscale{.8}{\texttt{E-}}7 \\
tagn{\tt\_}comp{\tt\_}RST{\tt\_}3{\tt\_}resample & 4.47\textscale{.8}{\texttt{E+}}7 & 2.2\textscale{.8}{\texttt{E-}}6 & 5.7\textscale{.8}{\texttt{E-}}7 & 1.3\textscale{.8}{\texttt{E-}}7 & 2.1\textscale{.8}{\texttt{E-}}7 \\
tagn{\tt\_}mono{\tt\_}FE{\tt\_}8{\tt\_}standard & 1.21\textscale{.8}{\texttt{E+}}7 & 3.9\textscale{.8}{\texttt{E-}}6 & 4.7\textscale{.8}{\texttt{E-}}7 & 1.4\textscale{.8}{\texttt{E-}}7 & 5.7\textscale{.8}{\texttt{E-}}8 \\
tagn{\tt\_}mono{\tt\_}FE{\tt\_}8{\tt\_}resample & 1.12\textscale{.8}{\texttt{E+}}7 & 1.8\textscale{.8}{\texttt{E-}}5 & 5.7\textscale{.8}{\texttt{E-}}7 & 9.0\textscale{.8}{\texttt{E-}}8 & 2.4\textscale{.8}{\texttt{E-}}8 \\
tagn{\tt\_}mono{\tt\_}FE{\tt\_}15{\tt\_}standard & 6.02\textscale{.8}{\texttt{E+}}6 & 9.5\textscale{.8}{\texttt{E-}}6 & 6.1\textscale{.8}{\texttt{E-}}7 & 1.6\textscale{.8}{\texttt{E-}}7 & 4.1\textscale{.8}{\texttt{E-}}8 \\
tagn{\tt\_}mono{\tt\_}FE{\tt\_}15{\tt\_}resample & 5.64\textscale{.8}{\texttt{E+}}6 & 4.2\textscale{.8}{\texttt{E-}}5 & 5.6\textscale{.8}{\texttt{E-}}7 & 8.2\textscale{.8}{\texttt{E-}}8 & 9.9\textscale{.8}{\texttt{E-}}9 \\
tagn{\tt\_}mono{\tt\_}RST{\tt\_}2{\tt\_}standard & 3.79\textscale{.8}{\texttt{E+}}7 & 9.5\textscale{.8}{\texttt{E-}}7 & 4.8\textscale{.8}{\texttt{E-}}7 & 1.6\textscale{.8}{\texttt{E-}}7 & 2.4\textscale{.8}{\texttt{E-}}7 \\
tagn{\tt\_}mono{\tt\_}RST{\tt\_}2{\tt\_}resample & 3.62\textscale{.8}{\texttt{E+}}7 & 4.5\textscale{.8}{\texttt{E-}}6 & 5.7\textscale{.8}{\texttt{E-}}7 & 9.2\textscale{.8}{\texttt{E-}}8 & 8.1\textscale{.8}{\texttt{E-}}8 \\
tagn{\tt\_}mono{\tt\_}RST{\tt\_}3{\tt\_}standard & 3.54\textscale{.8}{\texttt{E+}}7 & 9.3\textscale{.8}{\texttt{E-}}7 & 4.7\textscale{.8}{\texttt{E-}}7 & 1.6\textscale{.8}{\texttt{E-}}7 & 2.3\textscale{.8}{\texttt{E-}}7 \\
tagn{\tt\_}mono{\tt\_}RST{\tt\_}3{\tt\_}resample & 3.65\textscale{.8}{\texttt{E+}}7 & 4.6\textscale{.8}{\texttt{E-}}6 & 5.8\textscale{.8}{\texttt{E-}}7 & 9.3\textscale{.8}{\texttt{E-}}8 & 8.2\textscale{.8}{\texttt{E-}}8 \\
tsen{\tt\_}order{\tt\_}FE{\tt\_}8{\tt\_}standard & 1.26\textscale{.8}{\texttt{E+}}7 & 4.4\textscale{.8}{\texttt{E-}}6 & 5.5\textscale{.8}{\texttt{E-}}7 & 1.4\textscale{.8}{\texttt{E-}}7 & 6.8\textscale{.8}{\texttt{E-}}8 \\
tsen{\tt\_}order{\tt\_}FE{\tt\_}15{\tt\_}standard & 6.49\textscale{.8}{\texttt{E+}}6 & 7.7\textscale{.8}{\texttt{E-}}6 & 5.2\textscale{.8}{\texttt{E-}}7 & 1.5\textscale{.8}{\texttt{E-}}7 & 3.6\textscale{.8}{\texttt{E-}}8 \\
tsen{\tt\_}order{\tt\_}RST{\tt\_}2{\tt\_}standard & 2.99\textscale{.8}{\texttt{E+}}7 & 9.7\textscale{.8}{\texttt{E-}}7 & 4.9\textscale{.8}{\texttt{E-}}7 & 1.8\textscale{.8}{\texttt{E-}}7 & 2.4\textscale{.8}{\texttt{E-}}7 \\
tsen{\tt\_}order{\tt\_}RST{\tt\_}3{\tt\_}standard & 3.07\textscale{.8}{\texttt{E+}}7 & 9.5\textscale{.8}{\texttt{E-}}7 & 4.7\textscale{.8}{\texttt{E-}}7 & 1.7\textscale{.8}{\texttt{E-}}7 & 2.4\textscale{.8}{\texttt{E-}}7 \\
tsen{\tt\_}order{\tt\_}global{\tt\_}FE{\tt\_}8{\tt\_}standard & 1.22\textscale{.8}{\texttt{E+}}7 & 3.3\textscale{.8}{\texttt{E-}}6 & 4.1\textscale{.8}{\texttt{E-}}7 & 1.3\textscale{.8}{\texttt{E-}}7 & 5.1\textscale{.8}{\texttt{E-}}8 \\
tsen{\tt\_}order{\tt\_}global{\tt\_}FE{\tt\_}8{\tt\_}standard{\tt\_}prune & 2.24\textscale{.8}{\texttt{E+}}7 & 4.5\textscale{.8}{\texttt{E-}}6 & 5.6\textscale{.8}{\texttt{E-}}7 & 1.1\textscale{.8}{\texttt{E-}}7 & 7.1\textscale{.8}{\texttt{E-}}8 \\
tsen{\tt\_}order{\tt\_}global{\tt\_}FE{\tt\_}15{\tt\_}standard & 6.09\textscale{.8}{\texttt{E+}}6 & 8.5\textscale{.8}{\texttt{E-}}6 & 5.9\textscale{.8}{\texttt{E-}}7 & 1.6\textscale{.8}{\texttt{E-}}7 & 4.2\textscale{.8}{\texttt{E-}}8 \\
tsen{\tt\_}order{\tt\_}global{\tt\_}FE{\tt\_}15{\tt\_}standard{\tt\_}prune & 1.11\textscale{.8}{\texttt{E+}}7 & 7.6\textscale{.8}{\texttt{E-}}6 & 5.1\textscale{.8}{\texttt{E-}}7 & 1.1\textscale{.8}{\texttt{E-}}7 & 3.4\textscale{.8}{\texttt{E-}}8 \\
tsen{\tt\_}order{\tt\_}global{\tt\_}RST{\tt\_}2{\tt\_}standard & 4.15\textscale{.8}{\texttt{E+}}7 & 8.4\textscale{.8}{\texttt{E-}}7 & 4.2\textscale{.8}{\texttt{E-}}7 & 1.4\textscale{.8}{\texttt{E-}}7 & 2.1\textscale{.8}{\texttt{E-}}7 \\
tsen{\tt\_}order{\tt\_}global{\tt\_}RST{\tt\_}2{\tt\_}standard{\tt\_}prune & 4.54\textscale{.8}{\texttt{E+}}7 & 1.3\textscale{.8}{\texttt{E-}}6 & 6.4\textscale{.8}{\texttt{E-}}7 & 1.6\textscale{.8}{\texttt{E-}}7 & 3.2\textscale{.8}{\texttt{E-}}7 \\
tsen{\tt\_}order{\tt\_}global{\tt\_}RST{\tt\_}3{\tt\_}standard & 4.14\textscale{.8}{\texttt{E+}}7 & 8.4\textscale{.8}{\texttt{E-}}7 & 4.2\textscale{.8}{\texttt{E-}}7 & 1.4\textscale{.8}{\texttt{E-}}7 & 2.1\textscale{.8}{\texttt{E-}}7 \\
tsen{\tt\_}order{\tt\_}global{\tt\_}RST{\tt\_}3{\tt\_}standard{\tt\_}prune & 5.99\textscale{.8}{\texttt{E+}}7 & 9.2\textscale{.8}{\texttt{E-}}7 & 4.6\textscale{.8}{\texttt{E-}}7 & 1.2\textscale{.8}{\texttt{E-}}7 & 2.3\textscale{.8}{\texttt{E-}}7 \\
tsen{\tt\_}timed{\tt\_}FE{\tt\_}8{\tt\_}standard & 1.35\textscale{.8}{\texttt{E+}}7 & 2.9\textscale{.8}{\texttt{E-}}4 & 5.2\textscale{.8}{\texttt{E-}}7 & 3.1\textscale{.8}{\texttt{E-}}8 & 3.5\textscale{.8}{\texttt{E-}}9 \\
tsen{\tt\_}timed{\tt\_}FE{\tt\_}8{\tt\_}resample & 4.82\textscale{.8}{\texttt{E+}}6 & 7.7\textscale{.8}{\texttt{E-}}3 & 5.4\textscale{.8}{\texttt{E-}}7 & 1.0\textscale{.8}{\texttt{E-}}8 & 1.3\textscale{.8}{\texttt{E-}}10 \\
tsen{\tt\_}timed{\tt\_}FE{\tt\_}15{\tt\_}standard & 7.16\textscale{.8}{\texttt{E+}}6 & 8.7\textscale{.8}{\texttt{E-}}4 & 5.6\textscale{.8}{\texttt{E-}}7 & 3.2\textscale{.8}{\texttt{E-}}8 & 1.9\textscale{.8}{\texttt{E-}}9 \\
tsen{\tt\_}timed{\tt\_}FE{\tt\_}15{\tt\_}resample & 2.09\textscale{.8}{\texttt{E+}}6 & 4.5\textscale{.8}{\texttt{E-}}2 & 5.3\textscale{.8}{\texttt{E-}}7 & 6.6\textscale{.8}{\texttt{E-}}9 & 2.4\textscale{.8}{\texttt{E-}}11 \\
tsen{\tt\_}timed{\tt\_}RST{\tt\_}2{\tt\_}standard & 5.27\textscale{.8}{\texttt{E+}}7 & 3.6\textscale{.8}{\texttt{E-}}5 & 5.1\textscale{.8}{\texttt{E-}}7 & 2.6\textscale{.8}{\texttt{E-}}8 & 9.2\textscale{.8}{\texttt{E-}}9 \\
tsen{\tt\_}timed{\tt\_}RST{\tt\_}2{\tt\_}resample & 4.03\textscale{.8}{\texttt{E+}}7 & 1.4\textscale{.8}{\texttt{E-}}4 & 5.3\textscale{.8}{\texttt{E-}}7 & 1.5\textscale{.8}{\texttt{E-}}8 & 2.2\textscale{.8}{\texttt{E-}}9 \\
tsen{\tt\_}timed{\tt\_}RST{\tt\_}3{\tt\_}standard & 5.12\textscale{.8}{\texttt{E+}}7 & 3.6\textscale{.8}{\texttt{E-}}5 & 5.1\textscale{.8}{\texttt{E-}}7 & 2.6\textscale{.8}{\texttt{E-}}8 & 9.2\textscale{.8}{\texttt{E-}}9 \\
tsen{\tt\_}timed{\tt\_}RST{\tt\_}3{\tt\_}resample & 4.34\textscale{.8}{\texttt{E+}}7 & 6.8\textscale{.8}{\texttt{E-}}5 & 5.2\textscale{.8}{\texttt{E-}}7 & 2.0\textscale{.8}{\texttt{E-}}8 & 4.4\textscale{.8}{\texttt{E-}}9 \\
tsen{\tt\_}timed{\tt\_}global{\tt\_}FE{\tt\_}8{\tt\_}standard & 1.29\textscale{.8}{\texttt{E+}}7 & 3.0\textscale{.8}{\texttt{E-}}4 & 5.6\textscale{.8}{\texttt{E-}}7 & 3.5\textscale{.8}{\texttt{E-}}8 & 4.1\textscale{.8}{\texttt{E-}}9 \\
tsen{\tt\_}timed{\tt\_}global{\tt\_}FE{\tt\_}8{\tt\_}resample & 4.96\textscale{.8}{\texttt{E+}}6 & 7.7\textscale{.8}{\texttt{E-}}3 & 5.4\textscale{.8}{\texttt{E-}}7 & 9.9\textscale{.8}{\texttt{E-}}9 & 1.3\textscale{.8}{\texttt{E-}}10 \\
tsen{\tt\_}timed{\tt\_}global{\tt\_}FE{\tt\_}8{\tt\_}standard{\tt\_}prune & 1.69\textscale{.8}{\texttt{E+}}8 & 3.0\textscale{.8}{\texttt{E-}}4 & 5.2\textscale{.8}{\texttt{E-}}7 & 9.0\textscale{.8}{\texttt{E-}}9 & 3.6\textscale{.8}{\texttt{E-}}9 \\
tsen{\tt\_}timed{\tt\_}global{\tt\_}FE{\tt\_}8{\tt\_}resample{\tt\_}prune & 2.39\textscale{.8}{\texttt{E+}}7 & 7.6\textscale{.8}{\texttt{E-}}3 & 5.3\textscale{.8}{\texttt{E-}}7 & 4.5\textscale{.8}{\texttt{E-}}9 & 1.3\textscale{.8}{\texttt{E-}}10 \\
tsen{\tt\_}timed{\tt\_}global{\tt\_}FE{\tt\_}15{\tt\_}standard & 6.98\textscale{.8}{\texttt{E+}}6 & 8.7\textscale{.8}{\texttt{E-}}4 & 5.2\textscale{.8}{\texttt{E-}}7 & 2.8\textscale{.8}{\texttt{E-}}8 & 1.4\textscale{.8}{\texttt{E-}}9 \\
tsen{\tt\_}timed{\tt\_}global{\tt\_}FE{\tt\_}15{\tt\_}resample & 2.05\textscale{.8}{\texttt{E+}}6 & 4.5\textscale{.8}{\texttt{E-}}2 & 5.3\textscale{.8}{\texttt{E-}}7 & 6.7\textscale{.8}{\texttt{E-}}9 & 2.4\textscale{.8}{\texttt{E-}}11 \\
tsen{\tt\_}timed{\tt\_}global{\tt\_}FE{\tt\_}15{\tt\_}standard{\tt\_}prune & 3.62\textscale{.8}{\texttt{E+}}8 & 8.7\textscale{.8}{\texttt{E-}}4 & 5.3\textscale{.8}{\texttt{E-}}7 & 4.1\textscale{.8}{\texttt{E-}}9 & 1.6\textscale{.8}{\texttt{E-}}9 \\
tsen{\tt\_}timed{\tt\_}global{\tt\_}FE{\tt\_}15{\tt\_}resample{\tt\_}prune & 1.02\textscale{.8}{\texttt{E+}}7 & 4.5\textscale{.8}{\texttt{E-}}2 & 5.3\textscale{.8}{\texttt{E-}}7 & 3.1\textscale{.8}{\texttt{E-}}9 & 2.5\textscale{.8}{\texttt{E-}}11 \\
tsen{\tt\_}timed{\tt\_}global{\tt\_}RST{\tt\_}2{\tt\_}standard & 5.15\textscale{.8}{\texttt{E+}}7 & 3.7\textscale{.8}{\texttt{E-}}5 & 5.2\textscale{.8}{\texttt{E-}}7 & 2.6\textscale{.8}{\texttt{E-}}8 & 9.4\textscale{.8}{\texttt{E-}}9 \\
tsen{\tt\_}timed{\tt\_}global{\tt\_}RST{\tt\_}2{\tt\_}resample & 3.95\textscale{.8}{\texttt{E+}}7 & 1.4\textscale{.8}{\texttt{E-}}4 & 5.2\textscale{.8}{\texttt{E-}}7 & 1.5\textscale{.8}{\texttt{E-}}8 & 2.2\textscale{.8}{\texttt{E-}}9 \\
tsen{\tt\_}timed{\tt\_}global{\tt\_}RST{\tt\_}2{\tt\_}standard{\tt\_}prune & 2.43\textscale{.8}{\texttt{E+}}9 & 2.0\textscale{.8}{\texttt{E-}}5 & 5.4\textscale{.8}{\texttt{E-}}7 & 5.3\textscale{.8}{\texttt{E-}}9 & 1.8\textscale{.8}{\texttt{E-}}8 \\
tsen{\tt\_}timed{\tt\_}global{\tt\_}RST{\tt\_}2{\tt\_}resample{\tt\_}prune & 8.82\textscale{.8}{\texttt{E+}}7 & 1.4\textscale{.8}{\texttt{E-}}4 & 5.3\textscale{.8}{\texttt{E-}}7 & 9.9\textscale{.8}{\texttt{E-}}9 & 2.2\textscale{.8}{\texttt{E-}}9 \\
tsen{\tt\_}timed{\tt\_}global{\tt\_}RST{\tt\_}3{\tt\_}standard & 5.27\textscale{.8}{\texttt{E+}}7 & 3.7\textscale{.8}{\texttt{E-}}5 & 5.2\textscale{.8}{\texttt{E-}}7 & 2.6\textscale{.8}{\texttt{E-}}8 & 9.4\textscale{.8}{\texttt{E-}}9 \\
tsen{\tt\_}timed{\tt\_}global{\tt\_}RST{\tt\_}3{\tt\_}resample & 4.36\textscale{.8}{\texttt{E+}}7 & 6.8\textscale{.8}{\texttt{E-}}5 & 5.2\textscale{.8}{\texttt{E-}}7 & 2.0\textscale{.8}{\texttt{E-}}8 & 4.4\textscale{.8}{\texttt{E-}}9 \\
tsen{\tt\_}timed{\tt\_}global{\tt\_}RST{\tt\_}3{\tt\_}standard{\tt\_}prune & 2.42\textscale{.8}{\texttt{E+}}9 & 2.0\textscale{.8}{\texttt{E-}}5 & 5.3\textscale{.8}{\texttt{E-}}7 & 5.2\textscale{.8}{\texttt{E-}}9 & 1.7\textscale{.8}{\texttt{E-}}8 \\
tsen{\tt\_}timed{\tt\_}global{\tt\_}RST{\tt\_}3{\tt\_}resample{\tt\_}prune & 1.15\textscale{.8}{\texttt{E+}}8 & 7.0\textscale{.8}{\texttt{E-}}5 & 5.4\textscale{.8}{\texttt{E-}}7 & 1.2\textscale{.8}{\texttt{E-}}8 & 4.5\textscale{.8}{\texttt{E-}}9 \\
\rowcolor{black!60}
	\multicolumn{6}{c}{\color{white}Results of runs on \modelm{\:}\texttt{=}\:\modelDFTlarge (\Cref{code:modelLarge})\phantom{$\big)$}}
\\
CMC & 1.71\textscale{.8}{\texttt{E+}}7 & 0 & -- & -- & -- \\
tagn{\tt\_}comp{\tt\_}FE{\tt\_}8{\tt\_}standard & 1.47\textscale{.8}{\texttt{E+}}6 & 2.7\textscale{.8}{\texttt{E-}}6 & 9.8\textscale{.8}{\texttt{E-}}9 & 1.1\textscale{.8}{\texttt{E-}}8 & 4.7\textscale{.8}{\texttt{E-}}11 \\
tagn{\tt\_}comp{\tt\_}FE{\tt\_}8{\tt\_}resample & 8.65\textscale{.8}{\texttt{E+}}5 & 2.5\textscale{.8}{\texttt{E-}}3 & 3.9\textscale{.8}{\texttt{E-}}8 & 1.1\textscale{.8}{\texttt{E-}}8 & 3.0\textscale{.8}{\texttt{E-}}11 \\
tagn{\tt\_}comp{\tt\_}FE{\tt\_}15{\tt\_}standard & 7.35\textscale{.8}{\texttt{E+}}5 & 4.1\textscale{.8}{\texttt{E-}}6 & 4.0\textscale{.8}{\texttt{E-}}8 & 6.7\textscale{.8}{\texttt{E-}}8 & 8.6\textscale{.8}{\texttt{E-}}10 \\
tagn{\tt\_}comp{\tt\_}FE{\tt\_}15{\tt\_}resample & 3.75\textscale{.8}{\texttt{E+}}5 & 3.1\textscale{.8}{\texttt{E-}}2 & 2.8\textscale{.8}{\texttt{E-}}8 & 4.6\textscale{.8}{\texttt{E-}}9 & 2.1\textscale{.8}{\texttt{E-}}12 \\
tagn{\tt\_}comp{\tt\_}RST{\tt\_}2{\tt\_}standard & 2.21\textscale{.8}{\texttt{E+}}6 & 1.8\textscale{.8}{\texttt{E-}}5 & 4.9\textscale{.8}{\texttt{E-}}8 & 5.8\textscale{.8}{\texttt{E-}}8 & 1.9\textscale{.8}{\texttt{E-}}9 \\
tagn{\tt\_}comp{\tt\_}RST{\tt\_}2{\tt\_}resample & 1.92\textscale{.8}{\texttt{E+}}6 & 9.8\textscale{.8}{\texttt{E-}}4 & 3.2\textscale{.8}{\texttt{E-}}8 & 2.7\textscale{.8}{\texttt{E-}}9 & 3.5\textscale{.8}{\texttt{E-}}12 \\
tagn{\tt\_}comp{\tt\_}RST{\tt\_}3{\tt\_}standard & 4.76\textscale{.8}{\texttt{E+}}6 & 6.1\textscale{.8}{\texttt{E-}}6 & 6.8\textscale{.8}{\texttt{E-}}8 & 1.0\textscale{.8}{\texttt{E-}}7 & 1.3\textscale{.8}{\texttt{E-}}8 \\
tagn{\tt\_}comp{\tt\_}RST{\tt\_}3{\tt\_}resample & 4.83\textscale{.8}{\texttt{E+}}6 & 2.3\textscale{.8}{\texttt{E-}}4 & 3.2\textscale{.8}{\texttt{E-}}8 & 4.2\textscale{.8}{\texttt{E-}}9 & 2.2\textscale{.8}{\texttt{E-}}11 \\
tagn{\tt\_}mono{\tt\_}FE{\tt\_}8{\tt\_}standard & 1.70\textscale{.8}{\texttt{E+}}6 & 2.9\textscale{.8}{\texttt{E-}}6 & 2.7\textscale{.8}{\texttt{E-}}8 & 2.7\textscale{.8}{\texttt{E-}}8 & 3.2\textscale{.8}{\texttt{E-}}10 \\
tagn{\tt\_}mono{\tt\_}FE{\tt\_}8{\tt\_}resample & 1.07\textscale{.8}{\texttt{E+}}6 & 7.2\textscale{.8}{\texttt{E-}}3 & 3.2\textscale{.8}{\texttt{E-}}8 & 2.5\textscale{.8}{\texttt{E-}}9 & 1.7\textscale{.8}{\texttt{E-}}12 \\
tagn{\tt\_}mono{\tt\_}FE{\tt\_}15{\tt\_}standard & 8.13\textscale{.8}{\texttt{E+}}5 & 7.4\textscale{.8}{\texttt{E-}}6 & 3.3\textscale{.8}{\texttt{E-}}8 & 4.9\textscale{.8}{\texttt{E-}}8 & 5.1\textscale{.8}{\texttt{E-}}10 \\
tagn{\tt\_}mono{\tt\_}FE{\tt\_}15{\tt\_}resample & 4.51\textscale{.8}{\texttt{E+}}5 & 6.5\textscale{.8}{\texttt{E-}}2 & 3.3\textscale{.8}{\texttt{E-}}8 & 1.4\textscale{.8}{\texttt{E-}}9 & 2.4\textscale{.8}{\texttt{E-}}13 \\
tagn{\tt\_}mono{\tt\_}RST{\tt\_}2{\tt\_}standard & 4.48\textscale{.8}{\texttt{E+}}6 & 2.9\textscale{.8}{\texttt{E-}}6 & 2.4\textscale{.8}{\texttt{E-}}8 & 1.6\textscale{.8}{\texttt{E-}}8 & 2.8\textscale{.8}{\texttt{E-}}10 \\
tagn{\tt\_}mono{\tt\_}RST{\tt\_}2{\tt\_}resample & 4.49\textscale{.8}{\texttt{E+}}6 & 2.8\textscale{.8}{\texttt{E-}}5 & 3.5\textscale{.8}{\texttt{E-}}8 & 6.6\textscale{.8}{\texttt{E-}}9 & 5.0\textscale{.8}{\texttt{E-}}11 \\
tagn{\tt\_}mono{\tt\_}RST{\tt\_}3{\tt\_}standard & 4.49\textscale{.8}{\texttt{E+}}6 & 2.9\textscale{.8}{\texttt{E-}}6 & 2.4\textscale{.8}{\texttt{E-}}8 & 1.5\textscale{.8}{\texttt{E-}}8 & 2.8\textscale{.8}{\texttt{E-}}10 \\
tagn{\tt\_}mono{\tt\_}RST{\tt\_}3{\tt\_}resample & 4.42\textscale{.8}{\texttt{E+}}6 & 2.8\textscale{.8}{\texttt{E-}}5 & 3.5\textscale{.8}{\texttt{E-}}8 & 6.6\textscale{.8}{\texttt{E-}}9 & 5.0\textscale{.8}{\texttt{E-}}11 \\
tsen{\tt\_}order{\tt\_}FE{\tt\_}8{\tt\_}standard & 1.42\textscale{.8}{\texttt{E+}}6 & 6.3\textscale{.8}{\texttt{E-}}6 & 3.5\textscale{.8}{\texttt{E-}}8 & 3.3\textscale{.8}{\texttt{E-}}8 & 4.0\textscale{.8}{\texttt{E-}}10 \\
tsen{\tt\_}order{\tt\_}FE{\tt\_}15{\tt\_}standard & 7.13\textscale{.8}{\texttt{E+}}5 & 1.8\textscale{.8}{\texttt{E-}}5 & 3.1\textscale{.8}{\texttt{E-}}8 & 2.6\textscale{.8}{\texttt{E-}}8 & 1.2\textscale{.8}{\texttt{E-}}10 \\
tsen{\tt\_}order{\tt\_}RST{\tt\_}2{\tt\_}standard & 2.08\textscale{.8}{\texttt{E+}}6 & 3.4\textscale{.8}{\texttt{E-}}5 & 2.3\textscale{.8}{\texttt{E-}}8 & 9.7\textscale{.8}{\texttt{E-}}9 & 5.1\textscale{.8}{\texttt{E-}}11 \\
tsen{\tt\_}order{\tt\_}RST{\tt\_}3{\tt\_}standard & 1.96\textscale{.8}{\texttt{E+}}6 & 5.9\textscale{.8}{\texttt{E-}}5 & 3.9\textscale{.8}{\texttt{E-}}8 & 1.4\textscale{.8}{\texttt{E-}}8 & 1.0\textscale{.8}{\texttt{E-}}10 \\
tsen{\tt\_}order{\tt\_}global{\tt\_}FE{\tt\_}8{\tt\_}standard & 1.82\textscale{.8}{\texttt{E+}}6 & 9.0\textscale{.8}{\texttt{E-}}5 & 3.3\textscale{.8}{\texttt{E-}}8 & 1.4\textscale{.8}{\texttt{E-}}8 & 9.1\textscale{.8}{\texttt{E-}}11 \\
tsen{\tt\_}order{\tt\_}global{\tt\_}FE{\tt\_}8{\tt\_}standard{\tt\_}prune & 2.24\textscale{.8}{\texttt{E+}}8 & 9.3\textscale{.8}{\texttt{E-}}5 & 2.6\textscale{.8}{\texttt{E-}}8 & 1.1\textscale{.8}{\texttt{E-}}9 & 6.9\textscale{.8}{\texttt{E-}}11 \\
tsen{\tt\_}order{\tt\_}global{\tt\_}FE{\tt\_}15{\tt\_}standard & 9.31\textscale{.8}{\texttt{E+}}5 & 4.3\textscale{.8}{\texttt{E-}}4 & 2.3\textscale{.8}{\texttt{E-}}8 & 7.1\textscale{.8}{\texttt{E-}}9 & 1.2\textscale{.8}{\texttt{E-}}11 \\
tsen{\tt\_}order{\tt\_}global{\tt\_}FE{\tt\_}15{\tt\_}standard{\tt\_}prune & 1.16\textscale{.8}{\texttt{E+}}8 & 4.1\textscale{.8}{\texttt{E-}}4 & 2.7\textscale{.8}{\texttt{E-}}8 & 9.8\textscale{.8}{\texttt{E-}}10 & 2.9\textscale{.8}{\texttt{E-}}11 \\
tsen{\tt\_}order{\tt\_}global{\tt\_}RST{\tt\_}2{\tt\_}standard & 7.21\textscale{.8}{\texttt{E+}}6 & 6.1\textscale{.8}{\texttt{E-}}5 & 3.8\textscale{.8}{\texttt{E-}}8 & 1.3\textscale{.8}{\texttt{E-}}8 & 3.0\textscale{.8}{\texttt{E-}}10 \\
tsen{\tt\_}order{\tt\_}global{\tt\_}RST{\tt\_}2{\tt\_}standard{\tt\_}prune & 3.19\textscale{.8}{\texttt{E+}}8 & 4.4\textscale{.8}{\texttt{E-}}5 & 3.3\textscale{.8}{\texttt{E-}}8 & 1.8\textscale{.8}{\texttt{E-}}9 & 2.8\textscale{.8}{\texttt{E-}}10 \\
tsen{\tt\_}order{\tt\_}global{\tt\_}RST{\tt\_}3{\tt\_}standard & 7.22\textscale{.8}{\texttt{E+}}6 & 6.1\textscale{.8}{\texttt{E-}}5 & 3.8\textscale{.8}{\texttt{E-}}8 & 1.3\textscale{.8}{\texttt{E-}}8 & 3.0\textscale{.8}{\texttt{E-}}10 \\
tsen{\tt\_}order{\tt\_}global{\tt\_}RST{\tt\_}3{\tt\_}standard{\tt\_}prune & 5.14\textscale{.8}{\texttt{E+}}8 & 8.2\textscale{.8}{\texttt{E-}}6 & 3.2\textscale{.8}{\texttt{E-}}8 & 1.7\textscale{.8}{\texttt{E-}}9 & 3.9\textscale{.8}{\texttt{E-}}10 \\
tsen{\tt\_}timed{\tt\_}FE{\tt\_}8{\tt\_}standard & 1.50\textscale{.8}{\texttt{E+}}6 & 6.0\textscale{.8}{\texttt{E-}}6 & 3.3\textscale{.8}{\texttt{E-}}8 & 3.1\textscale{.8}{\texttt{E-}}8 & 3.8\textscale{.8}{\texttt{E-}}10 \\
tsen{\tt\_}timed{\tt\_}FE{\tt\_}8{\tt\_}resample & 1.07\textscale{.8}{\texttt{E+}}6 & 7.2\textscale{.8}{\texttt{E-}}3 & 3.2\textscale{.8}{\texttt{E-}}8 & 2.5\textscale{.8}{\texttt{E-}}9 & 1.7\textscale{.8}{\texttt{E-}}12 \\
tsen{\tt\_}timed{\tt\_}FE{\tt\_}15{\tt\_}standard & 7.44\textscale{.8}{\texttt{E+}}5 & 1.7\textscale{.8}{\texttt{E-}}5 & 3.0\textscale{.8}{\texttt{E-}}8 & 2.5\textscale{.8}{\texttt{E-}}8 & 1.2\textscale{.8}{\texttt{E-}}10 \\
tsen{\tt\_}timed{\tt\_}FE{\tt\_}15{\tt\_}resample & 4.56\textscale{.8}{\texttt{E+}}5 & 6.5\textscale{.8}{\texttt{E-}}2 & 3.2\textscale{.8}{\texttt{E-}}8 & 1.4\textscale{.8}{\texttt{E-}}9 & 2.3\textscale{.8}{\texttt{E-}}13 \\
tsen{\tt\_}timed{\tt\_}RST{\tt\_}2{\tt\_}standard & 2.01\textscale{.8}{\texttt{E+}}6 & 5.9\textscale{.8}{\texttt{E-}}5 & 3.8\textscale{.8}{\texttt{E-}}8 & 1.4\textscale{.8}{\texttt{E-}}8 & 9.9\textscale{.8}{\texttt{E-}}11 \\
tsen{\tt\_}timed{\tt\_}RST{\tt\_}2{\tt\_}resample & 4.50\textscale{.8}{\texttt{E+}}6 & 2.8\textscale{.8}{\texttt{E-}}5 & 3.6\textscale{.8}{\texttt{E-}}8 & 6.6\textscale{.8}{\texttt{E-}}9 & 5.1\textscale{.8}{\texttt{E-}}11 \\
tsen{\tt\_}timed{\tt\_}RST{\tt\_}3{\tt\_}standard & 2.01\textscale{.8}{\texttt{E+}}6 & 6.0\textscale{.8}{\texttt{E-}}5 & 3.9\textscale{.8}{\texttt{E-}}8 & 1.4\textscale{.8}{\texttt{E-}}8 & 9.9\textscale{.8}{\texttt{E-}}11 \\
tsen{\tt\_}timed{\tt\_}RST{\tt\_}3{\tt\_}resample & 4.55\textscale{.8}{\texttt{E+}}6 & 2.8\textscale{.8}{\texttt{E-}}5 & 3.5\textscale{.8}{\texttt{E-}}8 & 6.5\textscale{.8}{\texttt{E-}}9 & 5.0\textscale{.8}{\texttt{E-}}11 \\
tsen{\tt\_}timed{\tt\_}global{\tt\_}FE{\tt\_}8{\tt\_}standard & 2.41\textscale{.8}{\texttt{E+}}6 & 9.7\textscale{.8}{\texttt{E-}}5 & 3.0\textscale{.8}{\texttt{E-}}8 & 1.1\textscale{.8}{\texttt{E-}}8 & 7.4\textscale{.8}{\texttt{E-}}11 \\
tsen{\tt\_}timed{\tt\_}global{\tt\_}FE{\tt\_}8{\tt\_}resample & 1.06\textscale{.8}{\texttt{E+}}6 & 7.2\textscale{.8}{\texttt{E-}}3 & 3.2\textscale{.8}{\texttt{E-}}8 & 2.5\textscale{.8}{\texttt{E-}}9 & 1.7\textscale{.8}{\texttt{E-}}12 \\
tsen{\tt\_}timed{\tt\_}global{\tt\_}FE{\tt\_}8{\tt\_}standard{\tt\_}prune & 3.21\textscale{.8}{\texttt{E+}}8 & 9.7\textscale{.8}{\texttt{E-}}5 & 2.6\textscale{.8}{\texttt{E-}}8 & 9.6\textscale{.8}{\texttt{E-}}10 & 7.8\textscale{.8}{\texttt{E-}}11 \\
tsen{\tt\_}timed{\tt\_}global{\tt\_}FE{\tt\_}8{\tt\_}resample{\tt\_}prune & 1.07\textscale{.8}{\texttt{E+}}6 & 7.2\textscale{.8}{\texttt{E-}}3 & 3.2\textscale{.8}{\texttt{E-}}8 & 2.5\textscale{.8}{\texttt{E-}}9 & 1.7\textscale{.8}{\texttt{E-}}12 \\
tsen{\tt\_}timed{\tt\_}global{\tt\_}FE{\tt\_}15{\tt\_}standard & 1.21\textscale{.8}{\texttt{E+}}6 & 4.2\textscale{.8}{\texttt{E-}}4 & 2.4\textscale{.8}{\texttt{E-}}8 & 7.9\textscale{.8}{\texttt{E-}}9 & 2.0\textscale{.8}{\texttt{E-}}11 \\
tsen{\tt\_}timed{\tt\_}global{\tt\_}FE{\tt\_}15{\tt\_}resample & 4.64\textscale{.8}{\texttt{E+}}5 & 6.5\textscale{.8}{\texttt{E-}}2 & 3.3\textscale{.8}{\texttt{E-}}8 & 1.4\textscale{.8}{\texttt{E-}}9 & 2.3\textscale{.8}{\texttt{E-}}13 \\
tsen{\tt\_}timed{\tt\_}global{\tt\_}FE{\tt\_}15{\tt\_}standard{\tt\_}prune & 1.82\textscale{.8}{\texttt{E+}}8 & 4.0\textscale{.8}{\texttt{E-}}4 & 2.7\textscale{.8}{\texttt{E-}}8 & 7.9\textscale{.8}{\texttt{E-}}10 & 3.0\textscale{.8}{\texttt{E-}}11 \\
tsen{\tt\_}timed{\tt\_}global{\tt\_}FE{\tt\_}15{\tt\_}resample{\tt\_}prune & 4.53\textscale{.8}{\texttt{E+}}5 & 6.5\textscale{.8}{\texttt{E-}}2 & 3.2\textscale{.8}{\texttt{E-}}8 & 1.4\textscale{.8}{\texttt{E-}}9 & 2.3\textscale{.8}{\texttt{E-}}13 \\
tsen{\tt\_}timed{\tt\_}global{\tt\_}RST{\tt\_}2{\tt\_}standard & 9.23\textscale{.8}{\texttt{E+}}6 & 5.9\textscale{.8}{\texttt{E-}}5 & 3.5\textscale{.8}{\texttt{E-}}8 & 1.0\textscale{.8}{\texttt{E-}}8 & 2.5\textscale{.8}{\texttt{E-}}10 \\
tsen{\tt\_}timed{\tt\_}global{\tt\_}RST{\tt\_}2{\tt\_}resample & 4.65\textscale{.8}{\texttt{E+}}6 & 2.8\textscale{.8}{\texttt{E-}}5 & 3.6\textscale{.8}{\texttt{E-}}8 & 6.4\textscale{.8}{\texttt{E-}}9 & 5.0\textscale{.8}{\texttt{E-}}11 \\
tsen{\tt\_}timed{\tt\_}global{\tt\_}RST{\tt\_}2{\tt\_}standard{\tt\_}prune & 3.37\textscale{.8}{\texttt{E+}}8 & 6.2\textscale{.8}{\texttt{E-}}5 & 3.2\textscale{.8}{\texttt{E-}}8 & 1.4\textscale{.8}{\texttt{E-}}9 & 1.8\textscale{.8}{\texttt{E-}}10 \\
tsen{\tt\_}timed{\tt\_}global{\tt\_}RST{\tt\_}2{\tt\_}resample{\tt\_}prune & 4.40\textscale{.8}{\texttt{E+}}6 & 2.8\textscale{.8}{\texttt{E-}}5 & 3.5\textscale{.8}{\texttt{E-}}8 & 6.6\textscale{.8}{\texttt{E-}}9 & 5.0\textscale{.8}{\texttt{E-}}11 \\
tsen{\tt\_}timed{\tt\_}global{\tt\_}RST{\tt\_}3{\tt\_}standard & 9.36\textscale{.8}{\texttt{E+}}6 & 5.9\textscale{.8}{\texttt{E-}}5 & 3.4\textscale{.8}{\texttt{E-}}8 & 1.0\textscale{.8}{\texttt{E-}}8 & 2.4\textscale{.8}{\texttt{E-}}10 \\
tsen{\tt\_}timed{\tt\_}global{\tt\_}RST{\tt\_}3{\tt\_}resample & 4.58\textscale{.8}{\texttt{E+}}6 & 2.8\textscale{.8}{\texttt{E-}}5 & 3.6\textscale{.8}{\texttt{E-}}8 & 6.6\textscale{.8}{\texttt{E-}}9 & 5.2\textscale{.8}{\texttt{E-}}11 \\
tsen{\tt\_}timed{\tt\_}global{\tt\_}RST{\tt\_}3{\tt\_}standard{\tt\_}prune & 5.08\textscale{.8}{\texttt{E+}}8 & 1.5\textscale{.8}{\texttt{E-}}5 & 3.3\textscale{.8}{\texttt{E-}}8 & 1.6\textscale{.8}{\texttt{E-}}9 & 3.6\textscale{.8}{\texttt{E-}}10 \\
tsen{\tt\_}timed{\tt\_}global{\tt\_}RST{\tt\_}3{\tt\_}resample{\tt\_}prune & 4.58\textscale{.8}{\texttt{E+}}6 & 2.8\textscale{.8}{\texttt{E-}}5 & 3.6\textscale{.8}{\texttt{E-}}8 & 6.5\textscale{.8}{\texttt{E-}}9 & 5.1\textscale{.8}{\texttt{E-}}11 \\
\rowcolor{black!60}
	\multicolumn{6}{c}{\color{white}Results of runs on \modelm{\:}\texttt{=}\:\modelQueue (\Cref{code:modelTandemQueueSA})\phantom{$\big)$}}
\\
CMC & 4.52\textscale{.8}{\texttt{E+}}8 & 6.6\textscale{.8}{\texttt{E-}}9 & 6.6\textscale{.8}{\texttt{E-}}9 & 7.5\textscale{.8}{\texttt{E-}}9 & 6.6\textscale{.8}{\texttt{E-}}9 \\
tagn{\tt\_}comp{\tt\_}FE{\tt\_}8{\tt\_}standard & 2.50\textscale{.8}{\texttt{E+}}7 & 5.6\textscale{.8}{\texttt{E-}}7 & 1.8\textscale{.8}{\texttt{E-}}8 & 1.0\textscale{.8}{\texttt{E-}}8 & 6.7\textscale{.8}{\texttt{E-}}10 \\
tagn{\tt\_}comp{\tt\_}FE{\tt\_}8{\tt\_}resample & 3.45\textscale{.8}{\texttt{E+}}7 & 2.3\textscale{.8}{\texttt{E-}}7 & 2.1\textscale{.8}{\texttt{E-}}8 & 1.5\textscale{.8}{\texttt{E-}}8 & 2.1\textscale{.8}{\texttt{E-}}9 \\
tagn{\tt\_}comp{\tt\_}FE{\tt\_}15{\tt\_}standard & 1.26\textscale{.8}{\texttt{E+}}7 & 1.2\textscale{.8}{\texttt{E-}}6 & 1.2\textscale{.8}{\texttt{E-}}8 & 6.8\textscale{.8}{\texttt{E-}}9 & 1.5\textscale{.8}{\texttt{E-}}10 \\
tagn{\tt\_}comp{\tt\_}FE{\tt\_}15{\tt\_}resample & 1.75\textscale{.8}{\texttt{E+}}7 & 3.4\textscale{.8}{\texttt{E-}}7 & 1.9\textscale{.8}{\texttt{E-}}8 & 1.6\textscale{.8}{\texttt{E-}}8 & 1.1\textscale{.8}{\texttt{E-}}9 \\
tagn{\tt\_}comp{\tt\_}RST{\tt\_}2{\tt\_}standard & 1.07\textscale{.8}{\texttt{E+}}8 & 4.7\textscale{.8}{\texttt{E-}}8 & 1.2\textscale{.8}{\texttt{E-}}8 & 1.0\textscale{.8}{\texttt{E-}}8 & 2.9\textscale{.8}{\texttt{E-}}9 \\
tagn{\tt\_}comp{\tt\_}RST{\tt\_}2{\tt\_}resample & 8.19\textscale{.8}{\texttt{E+}}7 & 2.1\textscale{.8}{\texttt{E-}}7 & 2.3\textscale{.8}{\texttt{E-}}8 & 1.1\textscale{.8}{\texttt{E-}}8 & 2.6\textscale{.8}{\texttt{E-}}9 \\
tagn{\tt\_}comp{\tt\_}RST{\tt\_}3{\tt\_}standard & 1.10\textscale{.8}{\texttt{E+}}8 & 4.5\textscale{.8}{\texttt{E-}}8 & 1.1\textscale{.8}{\texttt{E-}}8 & 9.9\textscale{.8}{\texttt{E-}}9 & 2.8\textscale{.8}{\texttt{E-}}9 \\
tagn{\tt\_}comp{\tt\_}RST{\tt\_}3{\tt\_}resample & 7.90\textscale{.8}{\texttt{E+}}7 & 2.0\textscale{.8}{\texttt{E-}}7 & 2.3\textscale{.8}{\texttt{E-}}8 & 1.1\textscale{.8}{\texttt{E-}}8 & 2.5\textscale{.8}{\texttt{E-}}9 \\
tagn{\tt\_}mono{\tt\_}FE{\tt\_}8{\tt\_}standard & 3.20\textscale{.8}{\texttt{E+}}7 & 1.1\textscale{.8}{\texttt{E-}}6 & 2.2\textscale{.8}{\texttt{E-}}8 & 8.2\textscale{.8}{\texttt{E-}}9 & 5.7\textscale{.8}{\texttt{E-}}10 \\
tagn{\tt\_}mono{\tt\_}FE{\tt\_}8{\tt\_}resample & 2.12\textscale{.8}{\texttt{E+}}7 & 2.4\textscale{.8}{\texttt{E-}}6 & 1.7\textscale{.8}{\texttt{E-}}8 & 5.5\textscale{.8}{\texttt{E-}}9 & 1.7\textscale{.8}{\texttt{E-}}10 \\
tagn{\tt\_}mono{\tt\_}FE{\tt\_}15{\tt\_}standard & 1.56\textscale{.8}{\texttt{E+}}7 & 1.9\textscale{.8}{\texttt{E-}}6 & 1.6\textscale{.8}{\texttt{E-}}8 & 6.2\textscale{.8}{\texttt{E-}}9 & 1.5\textscale{.8}{\texttt{E-}}10 \\
tagn{\tt\_}mono{\tt\_}FE{\tt\_}15{\tt\_}resample & 1.00\textscale{.8}{\texttt{E+}}7 & 5.0\textscale{.8}{\texttt{E-}}6 & 1.7\textscale{.8}{\texttt{E-}}8 & 5.5\textscale{.8}{\texttt{E-}}9 & 7.8\textscale{.8}{\texttt{E-}}11 \\
tagn{\tt\_}mono{\tt\_}RST{\tt\_}2{\tt\_}standard & 8.18\textscale{.8}{\texttt{E+}}7 & 3.2\textscale{.8}{\texttt{E-}}7 & 2.0\textscale{.8}{\texttt{E-}}8 & 7.6\textscale{.8}{\texttt{E-}}9 & 1.2\textscale{.8}{\texttt{E-}}9 \\
tagn{\tt\_}mono{\tt\_}RST{\tt\_}2{\tt\_}resample & 8.12\textscale{.8}{\texttt{E+}}7 & 5.2\textscale{.8}{\texttt{E-}}7 & 1.7\textscale{.8}{\texttt{E-}}8 & 5.4\textscale{.8}{\texttt{E-}}9 & 6.1\textscale{.8}{\texttt{E-}}10 \\
tagn{\tt\_}mono{\tt\_}RST{\tt\_}3{\tt\_}standard & 1.16\textscale{.8}{\texttt{E+}}8 & 8.6\textscale{.8}{\texttt{E-}}8 & 1.1\textscale{.8}{\texttt{E-}}8 & 6.7\textscale{.8}{\texttt{E-}}9 & 1.3\textscale{.8}{\texttt{E-}}9 \\
tagn{\tt\_}mono{\tt\_}RST{\tt\_}3{\tt\_}resample & 1.14\textscale{.8}{\texttt{E+}}8 & 2.7\textscale{.8}{\texttt{E-}}7 & 1.7\textscale{.8}{\texttt{E-}}8 & 6.0\textscale{.8}{\texttt{E-}}9 & 1.1\textscale{.8}{\texttt{E-}}9 \\
tsen{\tt\_}order{\tt\_}FE{\tt\_}8{\tt\_}standard & 2.86\textscale{.8}{\texttt{E+}}7 & 7.0\textscale{.8}{\texttt{E-}}7 & 1.1\textscale{.8}{\texttt{E-}}8 & 5.2\textscale{.8}{\texttt{E-}}9 & 2.0\textscale{.8}{\texttt{E-}}10 \\
tsen{\tt\_}order{\tt\_}FE{\tt\_}15{\tt\_}standard & 1.37\textscale{.8}{\texttt{E+}}7 & 2.3\textscale{.8}{\texttt{E-}}6 & 1.3\textscale{.8}{\texttt{E-}}8 & 5.3\textscale{.8}{\texttt{E-}}9 & 1.0\textscale{.8}{\texttt{E-}}10 \\
tsen{\tt\_}order{\tt\_}RST{\tt\_}2{\tt\_}standard & 9.05\textscale{.8}{\texttt{E+}}7 & 4.1\textscale{.8}{\texttt{E-}}7 & 1.4\textscale{.8}{\texttt{E-}}8 & 4.6\textscale{.8}{\texttt{E-}}9 & 5.0\textscale{.8}{\texttt{E-}}10 \\
tsen{\tt\_}order{\tt\_}RST{\tt\_}3{\tt\_}standard & 1.31\textscale{.8}{\texttt{E+}}8 & 2.0\textscale{.8}{\texttt{E-}}7 & 1.5\textscale{.8}{\texttt{E-}}8 & 6.0\textscale{.8}{\texttt{E-}}9 & 1.2\textscale{.8}{\texttt{E-}}9 \\
tsen{\tt\_}order{\tt\_}global{\tt\_}FE{\tt\_}8{\tt\_}standard & 2.39\textscale{.8}{\texttt{E+}}7 & 1.3\textscale{.8}{\texttt{E-}}6 & 1.8\textscale{.8}{\texttt{E-}}8 & 7.2\textscale{.8}{\texttt{E-}}9 & 3.2\textscale{.8}{\texttt{E-}}10 \\
tsen{\tt\_}order{\tt\_}global{\tt\_}FE{\tt\_}8{\tt\_}standard{\tt\_}prune & 4.27\textscale{.8}{\texttt{E+}}7 & 1.1\textscale{.8}{\texttt{E-}}6 & 1.7\textscale{.8}{\texttt{E-}}8 & 5.8\textscale{.8}{\texttt{E-}}9 & 3.8\textscale{.8}{\texttt{E-}}10 \\
tsen{\tt\_}order{\tt\_}global{\tt\_}FE{\tt\_}15{\tt\_}standard & 1.12\textscale{.8}{\texttt{E+}}7 & 3.1\textscale{.8}{\texttt{E-}}6 & 2.0\textscale{.8}{\texttt{E-}}8 & 7.6\textscale{.8}{\texttt{E-}}9 & 1.7\textscale{.8}{\texttt{E-}}10 \\
tsen{\tt\_}order{\tt\_}global{\tt\_}FE{\tt\_}15{\tt\_}standard{\tt\_}prune & 2.05\textscale{.8}{\texttt{E+}}7 & 2.7\textscale{.8}{\texttt{E-}}6 & 1.6\textscale{.8}{\texttt{E-}}8 & 4.8\textscale{.8}{\texttt{E-}}9 & 1.2\textscale{.8}{\texttt{E-}}10 \\
tsen{\tt\_}order{\tt\_}global{\tt\_}RST{\tt\_}2{\tt\_}standard & 1.10\textscale{.8}{\texttt{E+}}8 & 5.6\textscale{.8}{\texttt{E-}}7 & 1.9\textscale{.8}{\texttt{E-}}8 & 5.0\textscale{.8}{\texttt{E-}}9 & 7.3\textscale{.8}{\texttt{E-}}10 \\
tsen{\tt\_}order{\tt\_}global{\tt\_}RST{\tt\_}2{\tt\_}standard{\tt\_}prune & 1.19\textscale{.8}{\texttt{E+}}8 & 4.0\textscale{.8}{\texttt{E-}}7 & 1.4\textscale{.8}{\texttt{E-}}8 & 4.2\textscale{.8}{\texttt{E-}}9 & 5.4\textscale{.8}{\texttt{E-}}10 \\
tsen{\tt\_}order{\tt\_}global{\tt\_}RST{\tt\_}3{\tt\_}standard & 1.06\textscale{.8}{\texttt{E+}}8 & 5.5\textscale{.8}{\texttt{E-}}7 & 1.9\textscale{.8}{\texttt{E-}}8 & 5.1\textscale{.8}{\texttt{E-}}9 & 7.2\textscale{.8}{\texttt{E-}}10 \\
tsen{\tt\_}order{\tt\_}global{\tt\_}RST{\tt\_}3{\tt\_}standard{\tt\_}prune & 1.79\textscale{.8}{\texttt{E+}}8 & 2.3\textscale{.8}{\texttt{E-}}7 & 1.6\textscale{.8}{\texttt{E-}}8 & 5.3\textscale{.8}{\texttt{E-}}9 & 1.3\textscale{.8}{\texttt{E-}}9 \\
tsen{\tt\_}timed{\tt\_}FE{\tt\_}8{\tt\_}standard & 2.27\textscale{.8}{\texttt{E+}}7 & 8.0\textscale{.8}{\texttt{E-}}5 & 1.5\textscale{.8}{\texttt{E-}}8 & 1.5\textscale{.8}{\texttt{E-}}9 & 1.2\textscale{.8}{\texttt{E-}}11 \\
tsen{\tt\_}timed{\tt\_}FE{\tt\_}8{\tt\_}resample & 1.92\textscale{.8}{\texttt{E+}}7 & 1.1\textscale{.8}{\texttt{E-}}4 & 1.8\textscale{.8}{\texttt{E-}}8 & 1.3\textscale{.8}{\texttt{E-}}9 & 8.8\textscale{.8}{\texttt{E-}}12 \\
tsen{\tt\_}timed{\tt\_}FE{\tt\_}15{\tt\_}standard & 1.07\textscale{.8}{\texttt{E+}}7 & 3.6\textscale{.8}{\texttt{E-}}4 & 1.6\textscale{.8}{\texttt{E-}}8 & 1.1\textscale{.8}{\texttt{E-}}9 & 3.2\textscale{.8}{\texttt{E-}}12 \\
tsen{\tt\_}timed{\tt\_}FE{\tt\_}15{\tt\_}resample & 8.36\textscale{.8}{\texttt{E+}}6 & 5.2\textscale{.8}{\texttt{E-}}4 & 1.7\textscale{.8}{\texttt{E-}}8 & 8.4\textscale{.8}{\texttt{E-}}10 & 1.5\textscale{.8}{\texttt{E-}}12 \\
tsen{\tt\_}timed{\tt\_}RST{\tt\_}2{\tt\_}standard & 1.61\textscale{.8}{\texttt{E+}}8 & 3.0\textscale{.8}{\texttt{E-}}6 & 1.8\textscale{.8}{\texttt{E-}}8 & 1.8\textscale{.8}{\texttt{E-}}9 & 1.4\textscale{.8}{\texttt{E-}}10 \\
tsen{\tt\_}timed{\tt\_}RST{\tt\_}2{\tt\_}resample & 9.35\textscale{.8}{\texttt{E+}}7 & 4.0\textscale{.8}{\texttt{E-}}6 & 1.6\textscale{.8}{\texttt{E-}}8 & 1.7\textscale{.8}{\texttt{E-}}9 & 7.0\textscale{.8}{\texttt{E-}}11 \\
tsen{\tt\_}timed{\tt\_}RST{\tt\_}3{\tt\_}standard & 1.60\textscale{.8}{\texttt{E+}}8 & 2.9\textscale{.8}{\texttt{E-}}6 & 1.8\textscale{.8}{\texttt{E-}}8 & 1.8\textscale{.8}{\texttt{E-}}9 & 1.4\textscale{.8}{\texttt{E-}}10 \\
tsen{\tt\_}timed{\tt\_}RST{\tt\_}3{\tt\_}resample & 1.21\textscale{.8}{\texttt{E+}}8 & 2.4\textscale{.8}{\texttt{E-}}6 & 2.0\textscale{.8}{\texttt{E-}}8 & 2.3\textscale{.8}{\texttt{E-}}9 & 1.7\textscale{.8}{\texttt{E-}}10 \\
tsen{\tt\_}timed{\tt\_}global{\tt\_}FE{\tt\_}8{\tt\_}standard & 2.80\textscale{.8}{\texttt{E+}}7 & 2.1\textscale{.8}{\texttt{E-}}4 & 1.7\textscale{.8}{\texttt{E-}}8 & 1.0\textscale{.8}{\texttt{E-}}9 & 7.4\textscale{.8}{\texttt{E-}}12 \\
tsen{\tt\_}timed{\tt\_}global{\tt\_}FE{\tt\_}8{\tt\_}resample & 1.34\textscale{.8}{\texttt{E+}}7 & 4.3\textscale{.8}{\texttt{E-}}4 & 1.6\textscale{.8}{\texttt{E-}}8 & 7.9\textscale{.8}{\texttt{E-}}10 & 2.2\textscale{.8}{\texttt{E-}}12 \\
tsen{\tt\_}timed{\tt\_}global{\tt\_}FE{\tt\_}8{\tt\_}standard{\tt\_}prune & 1.78\textscale{.8}{\texttt{E+}}8 & 2.1\textscale{.8}{\texttt{E-}}4 & 1.7\textscale{.8}{\texttt{E-}}8 & 4.1\textscale{.8}{\texttt{E-}}10 & 7.9\textscale{.8}{\texttt{E-}}12 \\
tsen{\tt\_}timed{\tt\_}global{\tt\_}FE{\tt\_}8{\tt\_}resample{\tt\_}prune & 4.05\textscale{.8}{\texttt{E+}}7 & 4.4\textscale{.8}{\texttt{E-}}4 & 1.7\textscale{.8}{\texttt{E-}}8 & 4.8\textscale{.8}{\texttt{E-}}10 & 2.4\textscale{.8}{\texttt{E-}}12 \\
tsen{\tt\_}timed{\tt\_}global{\tt\_}FE{\tt\_}15{\tt\_}standard & 1.28\textscale{.8}{\texttt{E+}}7 & 1.0\textscale{.8}{\texttt{E-}}3 & 1.6\textscale{.8}{\texttt{E-}}8 & 6.9\textscale{.8}{\texttt{E-}}10 & 1.6\textscale{.8}{\texttt{E-}}12 \\
tsen{\tt\_}timed{\tt\_}global{\tt\_}FE{\tt\_}15{\tt\_}resample & 5.57\textscale{.8}{\texttt{E+}}6 & 2.5\textscale{.8}{\texttt{E-}}3 & 1.7\textscale{.8}{\texttt{E-}}8 & 5.7\textscale{.8}{\texttt{E-}}10 & 4.7\textscale{.8}{\texttt{E-}}13 \\
tsen{\tt\_}timed{\tt\_}global{\tt\_}FE{\tt\_}15{\tt\_}standard{\tt\_}prune & 8.57\textscale{.8}{\texttt{E+}}7 & 1.0\textscale{.8}{\texttt{E-}}3 & 1.7\textscale{.8}{\texttt{E-}}8 & 2.9\textscale{.8}{\texttt{E-}}10 & 1.8\textscale{.8}{\texttt{E-}}12 \\
tsen{\tt\_}timed{\tt\_}global{\tt\_}FE{\tt\_}15{\tt\_}resample{\tt\_}prune & 1.84\textscale{.8}{\texttt{E+}}7 & 2.5\textscale{.8}{\texttt{E-}}3 & 1.7\textscale{.8}{\texttt{E-}}8 & 3.1\textscale{.8}{\texttt{E-}}10 & 4.5\textscale{.8}{\texttt{E-}}13 \\
tsen{\tt\_}timed{\tt\_}global{\tt\_}RST{\tt\_}2{\tt\_}standard & 2.08\textscale{.8}{\texttt{E+}}8 & 3.0\textscale{.8}{\texttt{E-}}6 & 1.8\textscale{.8}{\texttt{E-}}8 & 1.6\textscale{.8}{\texttt{E-}}9 & 1.4\textscale{.8}{\texttt{E-}}10 \\
tsen{\tt\_}timed{\tt\_}global{\tt\_}RST{\tt\_}2{\tt\_}resample & 1.14\textscale{.8}{\texttt{E+}}8 & 8.5\textscale{.8}{\texttt{E-}}6 & 1.7\textscale{.8}{\texttt{E-}}8 & 1.1\textscale{.8}{\texttt{E-}}9 & 3.6\textscale{.8}{\texttt{E-}}11 \\
tsen{\tt\_}timed{\tt\_}global{\tt\_}RST{\tt\_}2{\tt\_}standard{\tt\_}prune & 6.75\textscale{.8}{\texttt{E+}}8 & 2.8\textscale{.8}{\texttt{E-}}6 & 1.6\textscale{.8}{\texttt{E-}}8 & 8.4\textscale{.8}{\texttt{E-}}10 & 1.2\textscale{.8}{\texttt{E-}}10 \\
tsen{\tt\_}timed{\tt\_}global{\tt\_}RST{\tt\_}2{\tt\_}resample{\tt\_}prune & 1.36\textscale{.8}{\texttt{E+}}8 & 8.1\textscale{.8}{\texttt{E-}}6 & 1.6\textscale{.8}{\texttt{E-}}8 & 9.8\textscale{.8}{\texttt{E-}}10 & 3.4\textscale{.8}{\texttt{E-}}11 \\
tsen{\tt\_}timed{\tt\_}global{\tt\_}RST{\tt\_}3{\tt\_}standard & 2.04\textscale{.8}{\texttt{E+}}8 & 3.0\textscale{.8}{\texttt{E-}}6 & 1.8\textscale{.8}{\texttt{E-}}8 & 1.6\textscale{.8}{\texttt{E-}}9 & 1.4\textscale{.8}{\texttt{E-}}10 \\
tsen{\tt\_}timed{\tt\_}global{\tt\_}RST{\tt\_}3{\tt\_}resample & 1.64\textscale{.8}{\texttt{E+}}8 & 2.0\textscale{.8}{\texttt{E-}}6 & 1.6\textscale{.8}{\texttt{E-}}8 & 1.8\textscale{.8}{\texttt{E-}}9 & 1.3\textscale{.8}{\texttt{E-}}10 \\
tsen{\tt\_}timed{\tt\_}global{\tt\_}RST{\tt\_}3{\tt\_}standard{\tt\_}prune & 6.72\textscale{.8}{\texttt{E+}}8 & 2.8\textscale{.8}{\texttt{E-}}6 & 1.6\textscale{.8}{\texttt{E-}}8 & 8.4\textscale{.8}{\texttt{E-}}10 & 1.2\textscale{.8}{\texttt{E-}}10 \\
tsen{\tt\_}timed{\tt\_}global{\tt\_}RST{\tt\_}3{\tt\_}resample{\tt\_}prune & 2.21\textscale{.8}{\texttt{E+}}8 & 4.2\textscale{.8}{\texttt{E-}}6 & 1.7\textscale{.8}{\texttt{E-}}8 & 1.1\textscale{.8}{\texttt{E-}}9 & 7.1\textscale{.8}{\texttt{E-}}11 \\
\end{longtable}

\endgroup

\end{document}